\newcommand{\tlc}{\mathsf{Timelikecurve}}%timelikecurve
\newcommand{\comp}{\fatsemi}%\mathrel{\mathop{\tilde \circ}}}%
\newcommand{\Mod}{Mod}
\newcommand{\Fm}{Fm}
\newcommand{\restr}{\!\!\upharpoonright}
\newcommand{\oszt}{\slash}
\newcommand{\gyok}{\sqrt{\phantom{n}}}
\newcommand{\de}{\stackrel{\text{\tiny def}}{=}}
\newcommand{\cl}{\ensuremath{\mathsf{cl}}} % clock behavior
\newcommand{\Id}{\ensuremath{\mathsf{Id}}} % identity mapping
\newcommand{\defiff}{\stackrel{def}{\Longleftrightarrow}}
\newcommand{\Iff}{\; \Longleftrightarrow\;}
\newcommand{\Setclose}{\,\right\}}
\newcommand{\setmid}{\::\:}
\newcommand{\Setopen}{\left\{\,}
\newcommand{\dom}{Dom\,}
\newcommand{\ran}{Ran\,}
\newcommand{\Reals}{\mathbb{R}}
\newcommand{\Rationals}{\mathbb{Q}}
\newcommand{\ev}{\mathsf{ev}} 
\newcommand{\wl}{\mathsf{wl}} 
\newcommand{\w}{\mathsf{w}} 
\newcommand{\Ob}{\mathsf{Ob}} 
\newcommand{\Ph}{\mathsf{Ph}}
\newcommand{\Q}{\mathit{Q}}
\newcommand{\B}{\mathit{B}}
\newcommand{\W}{\mathsf{W}}
\newcommand{\Log}{\mathcal{L}}
\newcommand{\LG}{\mathcal{G}}
\newcommand{\LM}{\mathcal{M}}
\newcommand{\G}{\mathfrak{G}}
\newcommand{\M}{\mathfrak{M}}
\newcommand{\Mn}{\mathbf{M}}
\newcommand{\TeM}{\mathbf{T_eM}}
\newcommand{\gem}{\mathbf{g_e}}
\newcommand{\bvv}{\mathbf{\bar v}}
\newcommand{\bvw}{\mathbf{\bar w}}
\newcommand{\I}{\mathit{I}}
\newcommand{\g}{\mathsf{g}}
\newcommand{\vx}{\bar x}
\newcommand{\vy}{\bar y}
\newcommand{\vz}{\bar z}
\newcommand{\vv}{\bar v}
\newcommand{\vw}{\bar w}
\newcommand{\vu}{\bar u}
\newcommand{\vo}{\bar o}
\newcommand{\ve}{\bar e}
\newcommand{\vt}{\bar t}
\newcommand{\vp}{\bar p}
\newcommand{\vq}{\bar q}
\newcommand{\bv}{\ensuremath{\mathbf{v}}}
\definecolor{thmcolor}{rgb}{0,0,.4} 
\definecolor{remarkcolor}{rgb}{0,.2,0} 
\definecolor{proofcolor}{rgb}{.4,0,0} 
\definecolor{quecolor}{rgb}{.2,.2,0} 
\definecolor{axcolor}{rgb}{.23,0,.23}
\definecolor{thmbgcolor}{rgb}{0.9,0.9,1} 
\definecolor{rmbgcolor}{rgb}{0.9,1,0.9} 
\definecolor{proofbgcolor}{rgb}{1,0.9,0.9}
\newcommand{\ax}[1]{\textcolor{axcolor}{\ensuremath{\mathsf{#1}}}} 
\theoremstyle{definition} \newtheorem{thm}{\colorbox{thmbgcolor}{\textcolor{thmcolor}{Theorem}}}[section] 
\theoremstyle{definition} \newtheorem{cor}[thm]{\colorbox{thmbgcolor}{\textcolor{thmcolor}{Corollary}}} 
\theoremstyle{definition} \newtheorem{lemma}[thm]{\colorbox{thmbgcolor}{\textcolor{thmcolor}{Lemma}}}
\theoremstyle{definition} \newtheorem{prop}[thm]{\colorbox{thmbgcolor}{\textcolor{thmcolor}{Proposition}}}
\theoremstyle{remark}  
\theoremstyle{remark} 
\theoremstyle{definition}  
\theoremstyle{definition} \newtheorem{rem}[thm]{\colorbox{rmbgcolor}{\textcolor{remarkcolor}{Remark}}}
\theoremstyle{definition}
\begin{document}

\title{An Axiom System for General Relativity Complete with respect to Lorentzian Manifolds
\thanks{This research is supported by the Hungarian Scientific Research Fund
for basic research grants No.~T81188 and No.~PD84093.}}
\author[1]{H. Andr\'eka}
\author[1]{J. X. Madar\'asz} 
\author[1]{I. N\'emeti}
\author[1]{G. Sz\'ekely} 
\affil[1]{Alfr{\'ed} R{\'e}nyi Institute of Mathematics, Hungarian Academy of Sciences, Budapest, 1364 Hungary.  Emails: {\{\protect\url{andreka.hajnal}, \protect\url{madarasz.judit}, \protect\url{nemeti.istvan}, \protect\url{szekely.gergely}\}\protect\url{@renyi.mta.hu}}}

\maketitle

\begin{abstract}
We introduce several axiom systems for general relativity and show
that they are complete with respect to the standard models of general
relativity, i.e., to Lorentzian manifolds having the corresponding
smoothness properties.
\end{abstract}

\section{Introduction}
In physics, the same way as in mathematics, axioms are the basic
postulates of the theory.  However, in physics the statements are
related to the real physical world and not just to abstract
mathematical constructions. Therefore, the role of the axioms (the
role of statements that we assume without proofs) in physics is more
fundamental than in mathematics. That is why we aim to formulate
simple, logically transparent and intuitively convincing axioms. All
the surprising or unusual predictions of a physical theory should be
provable as theorems and not assumed as axioms. For example, the
prediction ``no observer can move faster than light'' is a theorem in
our approach and not an axiom, see e.g., \cite{pezsgo}, \cite[Thm
  1.]{logroad}.

In this paper, we introduce an axiom system \ax{GenRel} for general
relativity (GR) and show that it is complete with respect to the standard
models of GR, i.e., to (continuously differentiable)
Lorentzian manifolds, see Theorem~\ref{thm-c00}. This means that any
statement true in the standard models can be proved from \ax{GenRel},
see Corollary \ref{cor-c00}. Then we will generalize these results for
smooth (and $n$-times continuously differentiable) Lorentzian
manifolds, see Theorem~\ref{thm-c0} and Corollary~\ref{cor-c0}.

In GR, Einstein's field equations give the connection
between the geometry of the spacetime and the energy-matter
distribution (given by the energy-momentum tensor field). The concept
of timelike geodesic and thus all the important geometric notions of
spacetimes are definable in the models of our axioms, see
Section~\ref{sec-geod} and \cite{logroad}. 

Therefore, we can use Einstein's equations as a definition of the
energy-momentum tensor, see e.g., \cite{benda} or \cite[\S 13.1,
  p.169]{dinverno}, or we can extend the language of our geometric
theory by the concept of energy-momentum tensor and assume Einstein's
equations as axioms.  There are only methodological differences
between these two approaches. In both cases, we can assume any extra
condition about the energy-momentum tensor as a new axiom.

We follow in the footsteps of several great predecessors since logical
axiomatization of physics, especially that of relativity theory, goes
back to such leading mathematicians and philosophers as Hilbert, G{\"
  o}del, Carnap, Reichenbach, Suppes and Tarski.

Logical axiomatization of relativity theory also has an extensive
literature, see e.g., Ax \cite{ax}, Basri \cite{Basri},
Benda \cite{benda}, Goldblatt \cite{goldblatt}, Latzer \cite{Latzer},
Mundy \cite{mundy-oaomstg}, \cite{mundy-tpcomg},
Pambuccian \cite{Pambuccian}, Robb \cite{Robb}, \cite{Robb2}
Suppes \cite{suppes-sopitposat}, Schutz \cite{schutz},
\cite{schutz-aasfmst}, \cite{Schu},  Szab{\'o} \cite{Szabo}.

Our goals go beyond the earlier approaches in several aspects. For
example, we not searching for a single monolithic axiom system, but we
are building a whole flexible hierarchy of axiom systems. We also make
extra effort to get a deep understanding of the connections between
the elements of this hierarchy, see e.g., \cite{logroad}, and the
relations between axiom systems formulated using different basic
concepts, see e.g., \cite{AN-CompTh}, \cite{Mphd}.

Another novelty in our approach is that we concentrate on the
transition from special relativity (SR) to GR, we try to keep this
transition logically transparent and illuminating even for the
non-specialists.  Starting from our streamlined axioms system
\ax{SpecRel} of SR, we can ``derive'' the axioms of \ax{GenRel} in two
natural steps, see \cite{logroad}. The axioms of \ax{GenRel} are
basically the localized versions of the axioms (and some theorems) of
\ax{SpecRel}.

The success story of using axiomatic method and foundational thinking
in the foundations of mathematics also enforces our firm belief that
it worth to apply them in the foundations of spacetime theories, see
also Harvey Friedman~\cite{FriFOM1}, \cite{FriFOM2}.

For good reasons, foundations of mathematics was carried through
strictly within first-order logic (FOL).  For the same reasons,
foundations of spacetime theories are best developed within FOL. For
example, in any foundational work it is essential to avoid tacit
assumptions, and one acknowledged feature of using FOL is that it
helps to eliminate tacit assumptions.  There are several further
reasons why we work within FOL, see \cite[\S Why FOL?]{pezsgo},
\cite[\S 11]{Szphd}.

\section{Axioms for General Relativity}

First, we introduce the basic concepts of our FOL axiom system
\ax{GenRel} for GR. We are going to consider two sorts of objects
mathematical and physical.  Mathematical objects will be called
\textbf{quantities}, they will represent physical quantities, such as
speeds or coordinates. We include addition, multiplication and
ordering as basic concepts on quantities. Physical objects will be
called \textbf{bodies}.  We will associate a body ``sitting'' at the
origin to every coordinate system. We will call these bodies
\textbf{observers}.  Light signals (\textbf{photons}) will be another
special type of bodies our axioms will speak about.  Coordinate
systems will be represented by one relation $\W$ that we will call
\textbf{worldview relation}; $\W(m,b,x_1,\ldots,x_d)$ means
intuitively that ``observer $m$ coordinatizes body $b$ by coordinates
$x_1,\ldots,x_d$ (in his coordinate system).'' Here, $d$ is a fixed
natural number determining the dimension of the coordinate
systems.\footnote{The fact that all coordinate systems are represented
  by one relation implies that they all have the same dimensions. See
  \cite{mythes} for a similar axiomatic approach in which the
  dimension of coordinate systems is observer dependent.}

The above means that we will use the following formal FOL
  language for axiomatizing GR:
\begin{equation*}
\{\, \B, \Q, +,\cdot, \le,\Ph,\Ob, \W\,\},
\end{equation*}
where $\Q$ is a sort for quantities; $\B$ is a sort for bodies;
  $+,\cdot$ are binary operations of sort $\Q$ and $\le$ is a binary
  relation of sort $\Q$. $\Ob$ and $\Ph$ are unary relations of sort
  $\B$ for observers and photons; finally, $\W$ is a $2+d$-place
  relation connecting $\B$ and $\Q$ (the first two arguments are of
  sort $\B$ and the rest are of sort $Q$).  More about the intuition
  and the why behind our choosing of this language can be found, e.g.,
  in \cite[\S 2]{logroad}.

Now we are ready to list the axioms of \ax{GenRel}. The first
  axiom provides some useful and widely used properties of real
  numbers for the quantities.  
\begin{description}
\item[\underline{\ax{AxEField}}] The structure $\langle
  \Q,+,\cdot,\le\rangle$ of quantities is a Euclidean field, i.e.,\\
%\item
$\bullet$ $\langle\Q,+,\cdot\rangle$ is a field in the sense of abstract
algebra;\footnote{The field-axioms (see e.g., \cite[pp.40--41]{CK},
\cite[p.38]{Hodges}) say that $+$, $\cdot$ are associative and
commutative, they have neutral elements $0$, $1$ and inverses $-$,
$\oszt$ respectively, with the exception that $0$ does not have an
inverse with respect to $\cdot\,$, as well as $\cdot$ is additive with
respect to $+$.}\\
%\item 
$\bullet$ the relation $\le$ is a linear ordering on $\Q$ such that  
\begin{itemize}
\item[i)] $x \le y\rightarrow x + z \le y + z$ and 
\item[ii)] $0 \le x \land 0 \le y\rightarrow 0 \le xy$
holds; and
\end{itemize}
%item
$\bullet$ nonnegative elements have square roots: $0\le x \rightarrow \exists y\enskip x=y^2$.
\end{description}
 We will use $0$, $1$, $-$, $\oszt$, $\gyok$ as derived (i.e.,
 defined) operation symbols.  

\ax{AxEField} is sufficient in SR for proving the main
  predictions; however, in GR we will have to use more properties of
  real numbers, see axiom schema \ax{CONT} on p.\,\pageref{p-cont}.

The next two axioms speak about the so-called worldviews of
observers. The \textbf{worldline} of body $b$ according to observer
$m$ is defined as the collection of those coordinate points where $m$
coordinatizes $b$, i.e.,
\begin{equation*}
\wl_m(b)\de\{\, \vx: \W(m,b,\vx)\,\},
\end{equation*}
where $\vx$ abbreviates $n$-tuple $\langle x_1,\ldots,x_n\rangle$.

In SR, the worldlines of photons are straight lines, while in GR these
worldlines are more general curves. The notion of velocity for these
curves is the velocity of their straight line approximations.  Our
central axiom for GR will state that the velocity of a photon is 1
according to an observer when meeting it.  To introduce this axiom, we
need some definitions and notations.

In our formulas, we will use the usual logical connectives $\lnot$
(\textit{not}), $\land$ (\textit{and}), $\lor$ (\textit{or}),
$\rightarrow$ (\textit{implies}), $\leftrightarrow$
(\textit{if-and-only-if}) and FOL quantifiers $\exists$
(\textit{exists}) and $\forall$ (\textit{for all}).

In order to define velocity for curved worldlines, let us introduce a
concept of approximation. Let $f,g:\Q^m\to \Q^n$, $m,n\ge 1$ be
partial\footnote{Partial means that $f$ and $g$ are not necessarily
  everywhere defined on $\Q^m$.}  maps and $\vx\in\Q^m$. We say that
\textbf{$f$ approximates $g$ at $\vx$}, in symbols $f\sim_{\vx} g$, if
\begin{multline*}
\forall \varepsilon >0\enskip\exists \delta>0\enskip\forall \vy\enskip
\big(|\vy-\vx|\le\delta\\\rightarrow \vy\in\dom f \cap \dom g\land
|f(\vy)-g(\vy)|\le\varepsilon\cdot |\vy-\vx|\big),
\end{multline*}
where $\dom f$ is the domain of function $f$ (see
p.\,\pageref{p-domain}) and the {\bf Euclidean length} $|\vz|$ of
$\vz\in\Q^k$ is defined as $\sqrt{z_1^2+\ldots+z^2_k}$.

\begin{rem}\label{rem-conv}
By its definition, $f\sim_{\vx} g$ implies that $\vx$ has an open
neighborhood where both $f$ and $g$ are defined; and that
$f(\vx)=g(\vx)$.  Approximation at a given point is an equivalence
relation on functions; and if two affine maps (i.e., maps that are
composition of translations and linear maps) approximate each other,
then they are equal. These facts can easily be proved from
$\ax{AxEField}$.
\end{rem}

When $f$ is a unary  function, i.e., when $m=1$ above, the
 notion of derivative%
\footnote{The derivative of $f$ is usually defined as the limit
  $\lim_{h\rightarrow 0}\frac{f(x+h)-f(x)}{h}$, this is equivalent to
  our definition. Its intuitive meaning is how fast and in which
  direction the function increases at $x$.} can be defined by the
above concept of approximation as:
\begin{equation*}
f'(x)=\vy \defiff f\sim_{x} \{ \langle x+t, f(x)+t\cdot\vy\rangle : t\in\Q\}.
\end{equation*}
By this definition, the derivative of $f$ at $x$ is an $n$-dimensional
vector, we call it the \textbf{derivative vector} of $f$ at
$x$.\label{DerivativeVector}

It will be convenient to use the notions of \textbf{space component}
and \textbf{time  component} of $\vx\in\Q^d$, respectively:
\begin{equation*}
\vx_s\de \langle x_1,x_2,\ldots, x_{d-1}\rangle \quad \text{ and
}\quad {x_t}\de x_d.
\end{equation*}

Assume that the worldline $\wl_m(b)$ of body $b$ is a function of time
and $\dom \wl_m(b)$ is open (i.e., $\forall \vx,\vy\in\wl_m(b)\enskip
[x_t=y_t\to \vx_s=\vy_s]$ and $\forall \vx\in \wl_m(b)\;\exists
\delta>0\; \forall t\;[|t-x_t|<\delta\to \exists \vy\in \wl_m(b)\;
  y_t=t]$).\footnote{ To abbreviate formulas, we use bounded
  quantifiers in the following way: $\exists x\; [\varphi(x)\land
    \psi]$ and $\forall x\; [\varphi(x)\rightarrow \psi]$ are
  abbreviated to $\exists x\in\varphi\enskip \psi$ and $\forall
  x\in\varphi\enskip \psi$, respectively. So $\forall\vx\vy\;
  [\W(m,b,\vx)\land\W(m,b,\vy) \rightarrow \psi]$ is abbreviated to
  $\forall \vx,\vy\in\wl_m(b) \enskip \psi$.}\footnote{Both
  $\vx\in\wl_m(b)$ and $b\in\ev_m(\vx)$ below represent the same
  atomic formula of our FOL language, namely: $\W(m,b,\vx)$.}  Then
the velocity of body $b$ according to observer $m$ at $\vx\in\wl_m(b)$
is defined as the time-derivative of the worldline of $b$ at $x_t$:
\begin{equation*}
\bv_m(b,\vx)\de \wl_m(b)'(x_t).
\end{equation*}

 We defined velocity $\bv_m(b,\vx)$ only if $\vx\in\wl_m(b)$ and
 $\wl_m(b)$ is a function of time defined at an open interval
 containing $x_t$. Let us denote these assumptions by
 $\vx\in\dom\bv_m(b)$. Now we are ready for formulating the central
 axiom of \ax{GenRel}:

\begin{description}
\item[\underline{\ax{AxPh^-}}] The speed of a photon an observer
  ``meets'' is 1 when they meet, and it is possible to send out a
  photon in each direction where the observer stands:
\begin{multline*}
  \forall m\in\Ob\enskip \forall p\in\Ph\enskip \forall\vx\; \big[\W(m,m, \vx)\land\W(m,p,\vx) \rightarrow
\\ \vx\in\dom\bv_m(p)\land |\bv_m(p,\vx)|=1\big]\text{, and}
\end{multline*}
\begin{multline*}
\forall m\in\Ob\enskip \forall
\vx\bv\;\big(\W(m,m,\vx)\land |\bv|=1 \rightarrow
\hfill\\\exists p\in\Ph\; \big[\W(m,p,\vx) \land
  \vx\in\dom\bv_m(p)\land \bv_m(p,\vx)=\bv\big]\big).
\end{multline*}
\end{description}

The next axiom talks about the worldlines of observers. Let $\vo$
denote the origin of $\Q^{d-1}$, i.e., $\vo\de \langle
0,\ldots,0\rangle$.

\begin{description}
\item[\underline{\ax{AxSelf^-}}] In his own worldview, the worldline
  of any observer is an interval of the time-axis containing all the
  coordinate points of the time-axis where the observer coordinatizes
  something:
\begin{multline*}
\forall m\in\Ob\enskip\forall \vx\in\wl_m(m)\enskip \vx_s=\vo, \text{
  and }\\  \forall
m\in\Ob \enskip\forall \vx,\vy\in\wl_m(m)\enskip \forall t\;\big[
  x_t<t<y_t\rightarrow \W(m,m,\vo,t)\big], \text{ and }\\ \forall
m\in\Ob\enskip \forall t \;\big[\exists b \; \W(m,b,\vo,t) \rightarrow
  \W(m,m,\vo,t)\big].
\end{multline*}
\end{description}

By the \textbf{event} occurring for observer $m$ at coordinate point
$\vx$, we mean the set of bodies $m$ coordinatizes at $\vx$:
\begin{equation*}
\ev_m(\vx)\de\{\, b : \W(m,b,\vx)\,\}.
\end{equation*}

\begin{description}
\item[\underline{\ax{AxEv^-}}] Observers see all the events in which they participate:
\begin{equation*}
\forall mk\vx\big(\Ob(k)\land \W(m,k,\vx)\rightarrow\exists \vy\enskip
\ev_m(\vx)=\ev_k(\vy)\big).\footnote{ $\ev_m(\vx)=\ev_k(\vy)$ is an
  abbreviation of formula $\forall b \;[ \W(m,b,\vx)\leftrightarrow
  \W(k,b,\vy)]$.}
\end{equation*}
\end{description}

It is  convenient to introduce the \textbf{worldview transformation}
between observers $m$ and $k$ as the binary relation connecting
those coordinate points in which $m$ and $k$ see the same nonempty
events:
\begin{equation*}
\w_{mk}(\vx,\vy)\defiff \ev_m(\bar x)=\ev_k(\vy)\neq\emptyset.
\end{equation*}
We regularize worldview transformations by the following axiom.
\begin{description}\label{axdiff}
\item[\underline{\ax{AxCDiff}}] The worldview transformations between
  observers are functions having linear approximations $A_{\vx}$ at each
  coordinate point $\vx$ of their domain and this linear approximation
  $A_{\vx}$ depends continuously on point $\vx$ (i.e., they are
  continuously differentiable maps):
  \begin{multline*}\forall m,k\in\Ob\;\big[\w_{mk} \text{ is a function\footnotemark\, } \land \\ \forall \vx\in\dom\w_{mk}\enskip\exists
\text{ affine map\footnotemark }\enskip A_{\vx} \enskip \w_{mk}\sim_{\vx} A_{\vx}\big]\text{, and }
  \end{multline*}
\begin{equation*}
\forall m,k\in\Ob\; \forall \varepsilon >0\enskip \exists \delta>0 \enskip \forall
\vy\vz\in\dom\w_{mk}\;\big( |\vy-\vz|<\delta \to \left|
A_{\vy}-A_{\vz}\right|<\varepsilon\big).\footnotemark
\end{equation*}
\end{description}
\begin{rem}\label{rem-ctc}
The physical meaning of that the worldview transformations are
functions is that no observer coordinatizes an event twice, i.e.,
$\forall m\in\Ob\; \forall \vx\vy \;[\ev_m(\vx)=\ev_m(\vy)\rightarrow
\vx=\vy$]. 
\end{rem}
\begin{rem}
Let us note that, by the definition of $\sim_{\vx}$, \ax{AxCDiff}
implies that the domain $\dom \w_{mk}$ of worldview transformation
$\w_{mk}$ is an open set.
 Therefore, \ax{AxSelf^-} and \ax{AxCDiff} imply that the worldline
of observer $m$, according to him, is an open interval of the
time-axis since it is the intersection of $\dom \w_{mm}$ and the
time-axis.
\end{rem}
\addtocounter{footnote}{-2} \footnotetext{That $\w_{mk}$ is a function
  can be formalized as follows: $\forall \vx\vy\vz \,[
    \w_{mk}(\vx,\vy)\land \w_{mk}(\vx,\vz) \rightarrow \vy=\vz]$.}
\addtocounter{footnote}{1}\footnotetext{The quantifier ``$\exists
  \text{ affine map } A$'' looks like a second-order logic one, but
  truly it is a FOL quantifier because every affine map from $\Q^d$ to
  $\Q^d$ can be represented by a $d\times d$ matrix and a vector of
  $\Q^d$, i.e., $d^2+d$ elements of $\Q$.}  \addtocounter{footnote}{1}
\footnotetext{By Remark~\ref{rem-conv}, affine map $A_{\vx}$
  approximating $\w_{mk}$ at $\vx$ is unique. This fact justifies that
  we can use it as a defined concept in this formula. Also the
  Euclidean distance of $A_{\vy}$ and $A_{\vz}$ is meaningful since
  $A_{\vy},A_{\vz}\in\Q^{d^2+d}$.}

Our next axiom states that the derivative of worldview transformations
are continuous also in the sense that the difference how they distort
the Minkowski metric is small for observers in close enough events. To
formulate this axiom, we have to recall some definitions.  The
\textbf{Minkowski metric} $\mu$ is defined as:
\begin{equation*}
\mu(\vv,\vw)\de v_t\cdot w_t-v_2\cdot w_2-\ldots-v_{d-1}\cdot
w_{d-1}
\end{equation*}
for all $\vv,\vw\in\Q^d$.  The \textbf{derivative}\label{pdiff} (or
linear approximation) of map $f$ at $\vx\in\Q^n$, denoted by
$[d_{\vx}f]$, is defined as follows:
\begin{equation*}
[d_{\vx}f](\vy)=A(\vy+\vx)-A(\vx) \defiff f\sim_{\vx} A\mbox{ and }
A\mbox{ is affine}.
\end{equation*}
In the case of unary functions, the connection between this notion of
derivative and derivative vector introduced at
p.\,\pageref{DerivativeVector} is the following: $f'(x)=[d_xf](1)$ and
$[d_xf](t)=t\cdot f'(x)$ for all $t\in \Q$.

\begin{description}
\item[\underline{$\ax{AxC^0\g_m}$}] The difference how the linear
  approximations of worldview transformations distort the Minkowski
  metric is small for observers in close enough events:
\begin{multline*}
\forall m\in\Ob \enskip \forall \vx \in \dom \w_{mm}\enskip \forall
\varepsilon>0 \enskip \exists\delta >0\enskip\\ \forall \vy\enskip
\forall k,h\in\Ob \enskip \Big( |\vx-\vy|<\delta \land
\W(m,k,\vx)\land \W(m,h,\vy) \to \forall \vv\vw
\\ \left|\mu\big([d_{\vx}\w_{mk}](\vv),[d_{\vx}\w_{mk}](\vw)\big)
-\mu\big([d_{\vy}\w_{mh}](\vv),[d_{\vy}\w_{mh}](\vw)\big) \right|<\varepsilon\Big).
\end{multline*}
\end{description}

The behavior of observer $k$'s
clock as seen by observer $m$ is defined as follows:
\begin{equation*}
\cl_{mk} \de \{ \langle x_t,y_t\rangle : \w_{mk}(\vx,\vy)\text{ and }
\vx\in\wl_m(k)\}.
\end{equation*}
If $\cl_{mk}$ is a function, then it is differentiable (by our
previous axioms) and $\cl_{mk}(t)$ is the time $k$'s clock shows
``when'' $m$'s clock shows $t$. Thus, e.g., $\cl_{mk}'(t)=2$ means
that at $t$ (according to $m$'s clock) $k$'s clock runs twice as fast
as $m$'s.
\begin{description}
\item[\underline{$\ax{AxSymt^-}$}] Meeting observers see each other's
  clocks slow down with the same rate:
\begin{multline*}
\forall mk\in\Ob \;\big( \cl_{mk} \text{ is a function }
\land\\ \forall \vx\vy \;\big[ m,k\in\ev_m(\vx)=\ev_k(\vy)\rightarrow
\cl_{mk}'(x_t)=\cl_{km}'(y_t)\big]\big).
\end{multline*}
\end{description}

So far, we have not assumed the existence of any observer.  By the
next axiom, we assume the existence of some slowly moving observers in
every (nonempty) event. For a more delicate assumption ensuring the
existence of an observer on every definable timelike curve segment,
see axiom schema \ax{COMPR} on p.\,\pageref{compr}.

\begin{description}
\item[\underline{\ax{AxThExp^-_{00}}}] There is an observer in every
  nonempty event:
\begin{equation*} 
\exists h\; \Ob(h)\land \forall m\in\Ob\; \exists b \big[\W(m,b,\vx)\to \exists k\;
  \Ob(k)\land\W(m,k,\vx)\big].
\end{equation*}
\end{description}

If the number line has some definable gaps, some key predictions of
relativity, such as the twin paradox my not hold, see \cite{twp},
\cite[Thms.~7.1.1 and 7.1.3]{Szphd}. Our next assumption is an axiom
excluding these gaps.

Let $\Log$ be a many sorted language containing sort $\Q$ and binary
relation $\le$ on $\Q$.
\begin{description}
\label{p-cont}
\item[\underline{\ax{CONT_\Log}}] Every subset of $\Q$ which is $\Log$-definable,
  bounded and nonempty has a supremum (i.e., least upper bound) with respect to $\le$.
\end{description}
See p.\,\pageref{p-detailedcont} for a detailed introduction of
\ax{CONT_\Log}.  Let $\LG$ be the language of \ax{GenRel}, i.e., $\LG\de
\{\B,\Q,+,\cdot,\le,\Ph,\Ob, \W\}$.

Let us now introduce an axiom systems for GR as the collection of the axioms above: 
\begin{multline*}
\ax{GenRel}\de
\ax{AxEField}+\ax{AxPh^-}+\ax{AxSelf^-}+\ax{AxEv^-}+\ax{AxCDiff}\\+\ax{AxC^0\g_m}+\ax{AxSymt^-}+\ax{AxThExp^-_{00}}+\ax{CONT_\LG}
\end{multline*}

\section{An axiomatic Theory of Lorentzian Manifolds}

Here, we introduce a FOL axiom system \ax{LorMan} of Lorentzian
manifolds, see \cite[\S 2]{wald}, \cite[\S 2.2]{GlobLorGeom} for some
non FOL definition of Lorentzian manifolds. The language of
$d$-dimensional Lorentzian manifolds is the following:
\begin{equation*}
\{\,\I, \Q, +,\cdot, \le, \psi,\g\,\}
\end{equation*}
where $\I$ (indexes) and $\Q$ (quantities) are two sorts, $+$ and
$\cdot$ are two-place function symbols of sort $\Q$, $\le$ is a
two-place relation symbol of sort $\Q$, $\psi$ (transition relation)
is a $2d+2$-place relation symbol the first two arguments of which are
of sort $\I$ and the rest are of sort $\Q$, and $\g$ (metric relation)
is a $3d+2$-place relation symbol the first argument of which is of
sort $\I$ and the rest are of sort $\Q$.

%\end{document}

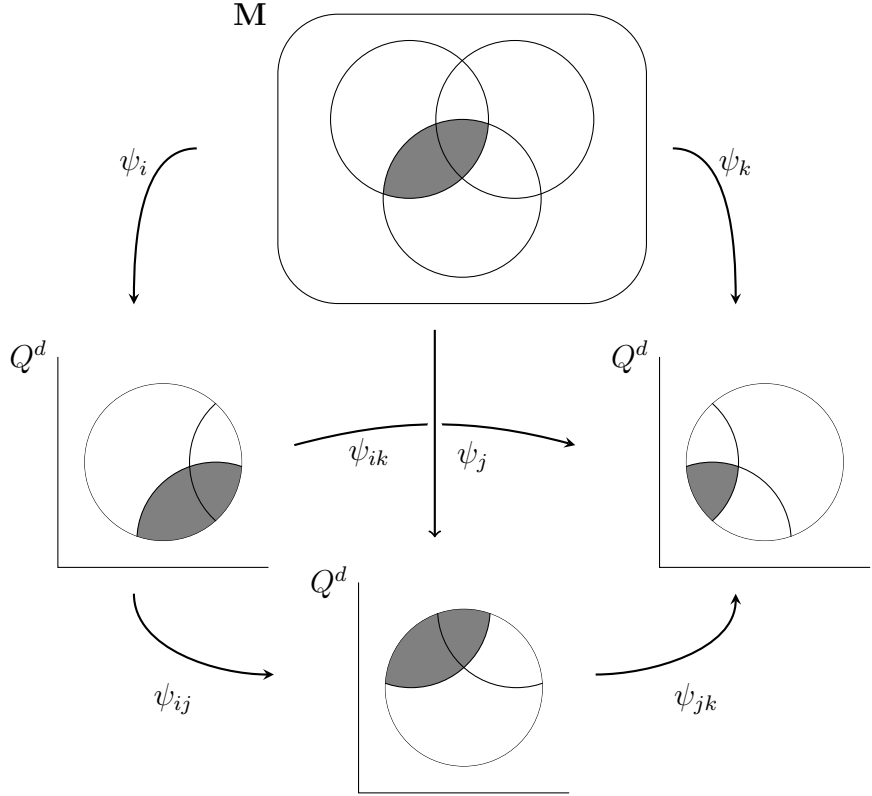
\begin{figure}
\begin{center}
\usetikzlibrary{positioning}
\begin{tikzpicture}[scale=1]

\tikzset{marrow/.style={thick, ->,shorten >=0.2cm, shorten <=0.2cm, >=stealth}}

\node (M) at (0,4){\begin{tikzpicture}[scale=0.7]
\node[left] at (-3.5,2) {$\mathbf{M}$};
\draw[rounded corners=8mm] (-3.5,-3.5) rectangle (3.5,2);
\begin{scope}
\clip (0,-1.5) circle (1.5);
\fill[gray]  (-1,0) circle (1.5);
\end{scope}
\draw (-1,0) circle (1.5);
\draw (1,0) circle (1.5);
\draw (0,-1.5) circle (1.5);
\end{tikzpicture}
};

\node (i) at (-4,0){\begin{tikzpicture}[scale=0.7]
\draw (-3,2) node[left] {$Q^d$} -- (-3,-2) -- (1,-2);
\clip (-1,0) circle (1.5);
\begin{scope}
\clip (0,-1.5) circle (1.5);
\fill[gray]  (-1,0) circle (1.5);
\end{scope}
\draw (-1,0) circle (1.5);
\draw (1,0) circle (1.5);
\draw (0,-1.5) circle (1.5);
\end{tikzpicture}
};

\node (j) at (0,-3) {\begin{tikzpicture}[scale=0.7]
\draw (-2,0.5) node[left] {$Q^d$} -- (-2,-3.5) -- (2,-3.5);
\clip(0,-1.5) circle (1.5);
\begin{scope}
\clip (0,-1.5) circle (1.5);
\fill[gray]  (-1,0) circle (1.5);
\end{scope}
\draw (-1,0) circle (1.5);
\draw (1,0) circle (1.5);
\draw (0,-1.5) circle (1.5);
\end{tikzpicture}
};

\node  (k) at (4,0) {\begin{tikzpicture}[scale=0.7]
\draw (-1,2) node[left] {$Q^d$} -- (-1,-2) -- (3,-2);
\clip (1,0) circle (1.5);
\begin{scope}
\clip (0,-1.5) circle (1.5);
\fill[gray]  (-1,0) circle (1.5);
\end{scope}
\draw (-1,0) circle (1.5);
\draw (1,0) circle (1.5);
\draw (0,-1.5) circle (1.5);
\end{tikzpicture}
};

\draw[marrow] (M.west) to [out=180,in=90] node (pi) {} (i.north);
\node[above=0.1cm of pi.north west] {$\psi_i$};
\draw[marrow] (M.east) to [out=0,in=90]  node (pk) {}(k.north);
\node[above=0.1cm of pk.north east]  {$\psi_k$};
\draw[marrow] (i.east) to [out=15,in=165]  node (ik){}(k.west); 
\fill[white] (0,0.3) circle (0.1);
\draw[thick,shorten >=0.1cm,shorten <=0.2cm, ->] (M.south) to node (pj) {}(j.north);
\node[right=0cm of pj.south east,yshift=-2mm]  {$\psi_j$};
\node[left=0.3cm of ik.south west, yshift=-2mm]  {$\psi_{ik}$};
\draw[marrow] (i.south) to [out=270,in=180]  node (ij) {} (j.west);
\node[below=0.1cm of ij.south west]  {$\psi_{ij}$};
\draw[marrow] (j.east) to [out=0,in=270] node (jk){} (k.south);
\node[below=0.1cm of jk.south east]  {$\psi_{jk}$};
\end{tikzpicture}
\caption{\label{fig-lorman} Illustration for manifold $\Mn$ and
  transition maps}
\end{center}
\end{figure}

%\end{document}

Now we are ready to formulate the axioms of \ax{LorMan}.

\begin{description}
\item[\ax{AxFn}] The transition and the metric relations are
  functions in their last variables:
\begin{equation*}
\forall ij \vx\vy\vy' \;
\psi(i,j,\vx,\vy)\land \psi(i,j,\vx,\vy')\rightarrow \vy=\vy', \text{ and}
\end{equation*}
\begin{equation*}
\forall i\vx\vv\vw aa'\;
\g(i,\vx,\vv,\vw,a)\land \g(i,\vx,\vv,\vw,a')\rightarrow a=a'.
\end{equation*}
\end{description}

By axiom \ax{AxFn}, we can speak about the \textbf{transition map}
$\psi_{ij}$ and \textbf{metric} $\g_i$ in the following sense:
\begin{eqnarray}
\psi_{ij}(\vx)=\vy &\defiff& \psi (i,j,\vx,\vy)\enskip \text{ and
}\\ \g_i(\vx)(\vv,\vw)=a&\defiff& \g(i,\vx,\vv,\vw,a).
\end{eqnarray}
We will refer to the first and the second parts of \ax{AxFn} as
\ax{AxFn\psi} and \ax{AxFn\g}, respectively.

We think of functions as special binary relations. Hence we compose
them as relations.  The \textbf{composition} $R \comp S$ of binary
relations $R$ and $S$ is defined as:
\begin{equation*}\label{rcomp}
{R \comp S}\de \Setopen\langle a,c\rangle: \exists b\enskip R(a,b)\land  S(b,c) \Setclose.
\end{equation*}
So $(g\comp f)(x)=f\big(g(x)\big)$ if $f$ and $g$ are functions. We
will also use the notation $x\comp g\comp f$ for $(g\comp f)(x)$
because it is easier to grasp. In the same spirit, we will sometimes
use the notation $x\comp f$ for $f(x)$.  

The \textbf{domain} $\dom R$ and the \textbf{range} $\ran R$ of a
binary relation $R$ are defined respectively as:
\begin{equation*}\label{p-domain}
 \dom R \de \{\, x : \exists y\enskip R(x,y) \,\}\enskip \text{ and }\enskip \ran R \de \{\, x :
    \exists x\enskip R(x,y) \,\}.
\end{equation*}
The \textbf{inverse} of $R$ is defined as:
\begin{equation*}\label{rinv}
{R^{-1}}\de \Setopen\langle a,b\rangle:  R(b,a) \Setclose.
\end{equation*} 
Let $\Id_H$ denote the \textbf{identity map} from $H\subseteq \Q^d$ to
$H$, i.e., 
\begin{equation*}
\Id_H(\vx)=\vx \enskip \text{ for all }\enskip\vx\in H.
\end{equation*}

\begin{description}
\item[\underline{\ax{AxCom\psi}}] The transition maps satisfy the
  following basic compatibility relations:
\begin{eqnarray}
\forall i\enskip \psi_{ii}&=&\Id_{\dom \psi_{ii}}, \label{c-rf}\\
\forall ij\enskip\psi_{ij}&=&\psi_{ji}^{-1}, \label{c-sm}\\
\forall ijk\enskip\psi_{ij}\comp\psi_{jk}&\subseteq& \psi_{ik}. \label{c-tr}
\end{eqnarray}
\end{description}

\begin{prop}\label{prop-simeq}
Axiom \ax{AxFn\psi} and \ax{AxCom\psi} imply that
\begin{equation*}
\langle i,\vx\rangle\sim\langle
j,\vy\rangle \defiff \psi_{ij}(\vx)=\vy
\end{equation*}
is an equivalence relation on the set $\{\langle i,\vx\rangle: \vx\in
\dom \psi_{ii}\text{ and }i\in\I\}$, i.e., on the disjoint union of the
domains of $\psi_{ii}$.
\end{prop}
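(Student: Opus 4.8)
The plan is to check the three defining properties of an equivalence relation --- reflexivity, symmetry and transitivity --- on the set $S\de\{\langle i,\vx\rangle: \vx\in\dom\psi_{ii}\text{ and }i\in\I\}$ (the disjoint union of the domains of the $\psi_{ii}$), working directly from the compatibility clauses \eqref{c-rf}--\eqref{c-tr} and using \ax{AxFn\psi} to justify the single-valued reading of the expression $\psi_{ij}(\vx)=\vy$. Throughout I would identify $\psi_{ij}(\vx)=\vy$ with $\langle\vx,\vy\rangle\in\psi_{ij}$ and move freely between the two.

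For \emph{reflexivity}: if $\langle i,\vx\rangle\in S$ then $\vx\in\dom\psi_{ii}$, so by \eqref{c-rf} we have $\psi_{ii}=\Id_{\dom\psi_{ii}}$ and hence $\psi_{ii}(\vx)=\vx$, i.e.\ $\langle i,\vx\rangle\sim\langle i,\vx\rangle$. For \emph{symmetry}: if $\langle i,\vx\rangle\sim\langle j,\vy\rangle$, i.e.\ $\langle\vx,\vy\rangle\in\psi_{ij}$, then \eqref{c-sm} gives $\langle\vy,\vx\rangle\in\psi_{ji}$, and by \ax{AxFn\psi} this is exactly $\psi_{ji}(\vy)=\vx$, i.e.\ $\langle j,\vy\rangle\sim\langle i,\vx\rangle$. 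For \emph{transitivity}: if $\langle\vx,\vy\rangle\in\psi_{ij}$ and $\langle\vy,\vz\rangle\in\psi_{jk}$, then by the definition of composition $\langle\vx,\vz\rangle\in\psi_{ij}\comp\psi_{jk}$, which by \eqref{c-tr} is contained in $\psi_{ik}$; so $\langle\vx,\vz\rangle\in\psi_{ik}$, and \ax{AxFn\psi} yields $\psi_{ik}(\vx)=\vz$, i.e.\ $\langle i,\vx\rangle\sim\langle k,\vz\rangle$.

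Since the statement asserts that $\sim$ is a relation \emph{on} $S$, I would also observe that $\sim$ does not leave $S$: if $\langle\vx,\vy\rangle\in\psi_{ij}$, then combining \eqref{c-sm} with the instance $\psi_{ij}\comp\psi_{ji}\subseteq\psi_{ii}$ of \eqref{c-tr} gives $\langle\vx,\vx\rangle\in\psi_{ii}$, so $\vx\in\dom\psi_{ii}$ and $\langle i,\vx\rangle\in S$; symmetrically $\vy\in\dom\psi_{jj}$ and $\langle j,\vy\rangle\in S$. (Alternatively one may simply regard $\sim$ as restricted to $S$ from the outset.)

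I do not expect a genuine obstacle here --- it is pure relation algebra. The only points demanding a little care are bookkeeping the composition convention $(g\comp f)(x)=f(g(x))$ when applying \eqref{c-tr}, so that in the transitivity step the indices chain up correctly as $\psi_{ij}\comp\psi_{jk}\subseteq\psi_{ik}$ (and not in the opposite order), and noticing that \ax{AxFn\psi} is invoked at each of the symmetry and transitivity steps to pass from membership in the relevant $\psi_{ij}$ back to the functional equation $\psi_{ij}(\vx)=\vy$.
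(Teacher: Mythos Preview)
Your proof is correct and follows essentially the same route as the paper's: the paper also begins with the observation that $\dom\psi_{ij}\subseteq\dom\psi_{ii}$ (your ``$\sim$ does not leave $S$'' remark, derived the same way via $\psi_{ij}\comp\psi_{ji}\subseteq\psi_{ii}$), and then checks reflexivity, symmetry and transitivity directly from \eqref{c-rf}, \eqref{c-sm} and \eqref{c-tr} respectively. The only cosmetic differences are that you work in relational notation $\langle\vx,\vy\rangle\in\psi_{ij}$ and make the uses of \ax{AxFn\psi} explicit, whereas the paper stays in functional notation throughout.
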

\begin{proof}
Let us first note that $\dom\psi_{ij}\subseteq\dom\psi_{ii}$ for all
$i,j\in\I$ by axiom \ax{AxCom\psi} since
\begin{equation*}
\dom \psi_{ij}=\dom \psi_{ij}\comp
\psi_{ij}^{-1}\stackrel{\eqref{c-sm}}{=}\dom \psi_{ij}\comp
\psi_{ji}\stackrel{\eqref{c-tr}}{\subseteq}\dom\psi_{ii}.
\end{equation*}
Therefore, the definition of $\sim$ is meaningful since we can compute
$\psi_{ij}(\vx)$ for all $\vx\in\dom \psi_{ii}$.

The reflectivity of $\sim$ is equivalent to \eqref{c-rf} since
$\langle i,\vx\rangle\sim\langle i,\vx\rangle$ iff
$\psi_{ii}(\vx)=\vx$ by definition.

The symmetry of $\sim$ is equivalent to \eqref{c-sm} since
\begin{equation*}
\langle i,\vx\rangle\sim\langle j,\vy\rangle \defiff
\psi_{ij}(\vx)=\vy \stackrel{\eqref{c-sm}}{\Iff} \psi_{ji}(\vy)=\vx
\defiff\langle j,\vy\rangle\sim\langle i,\vx\rangle.
\end{equation*}

Finally, the transitivity of $\sim$ is implied by \eqref{c-tr}. To
show this, let $\langle i,\vx\rangle\sim\langle j,\vy\rangle$ and
$\langle j,\vy\rangle\sim\langle k,\vz\rangle$. Then
$\vx\comp\psi_{ij}=\vy$ and $\vy\comp\psi_{jk}=\vz$ by
definition. Hence $ \vx\comp \psi_{ij}\comp\psi_{jk}=\vz$. By
\eqref{c-tr} of axiom \ax{AxCom\psi}, $\psi_{ik}$ extends
$\psi_{ij}\comp\psi_{jk}$. Therefore, $\psi_{ik}(\vx)=\vz$. Thus
$\langle i,\vx\rangle\sim\langle k,\vz\rangle$ as desired.
\end{proof}

\begin{rem}%[definition of the manifold] 
By Proposition \ref{prop-simeq}, \textbf{manifold} $\Mn$ can be
defined as a new sort in the sense of \cite[p.649.]{logst}, i.e., let
$\Mn$ be the disjoint union of the domains of transition maps
$\psi_{ii}$ factorized by the equivalence relation $\sim$. Let $e\in
\Mn$. The maps 
\begin{equation*}
\psi_i(e)=\vx\defiff \langle i,\vx\rangle \in e
\end{equation*}
are the so called \textbf{charts} of $\Mn$, see
Figure~\ref{fig-lorman}. Chart $\psi_i$ is well-defined since
$\vx=\vy$ if $\langle i,\vx\rangle\sim\langle i,\vy\rangle$
because $\psi_{ii}=\Id_{\dom \psi_{ii}}$.
\end{rem}

\begin{description}
\item[\underline{\ax{AxCDiff\psi}}] The transition maps are
  continuously differentiable:
\begin{multline*}
\forall ij\enskip
\forall \vx\in \dom\psi_{ij}\; \exists \text{ affine map } A \enskip
\psi_{ij}\sim_{\vx} A_{\vx}\text{, and} \\
\forall ij\enskip \forall \varepsilon >0\enskip \exists \delta>0 \enskip \forall
\vy\vz\in\dom\psi_{ij}\;\big( |\vy-\vz|<\delta \to \left|
A_{\vy}-A_{\vz}\right|<\varepsilon\big).
\end{multline*}
\end{description}

Axiom \ax{AxCDiff} implies that $\dom\psi_{ij}$ is open by the definition
of $\sim_{\vx}$, see Remark~\ref{rem-conv}. 

\begin{description}
\item[\underline{\ax{AxCom\g}}] The metric and the transition maps
  commute in the following sense:
\begin{equation*}
\forall i\enskip\forall \vx\in\dom \g_i\cap\dom\psi_{ij}\enskip \forall \vv\vw
 \g_{i}(\vx)(\vv,\vw)=\g_{j}\big(\psi_{ij}(\vx)\big)\big([d_{\vx}\psi_{ij}](\vv),[d_{\vx}\psi_{ij}](\vw)\big).
\end{equation*}
\end{description}

We used Proposition~\ref{prop-simeq} to define the points of manifold
$\Mn$ as equivalence classes of coordinate points connected by the
transition maps $\psi_{ij}$. Proposition~\ref{prop-approx} is an
analogous statement that allows us to tie the vectors of different
coordinate systems into one abstract element of the tangent space at a
certain point $e$ of $\Mn$ by using the derivatives
$[d_{\psi_i(e)}\psi_{ij}]$ of the worldview transformations
$\psi_{ij}$ at the coordinate points $\psi_i(e)$ corresponding to $e$.
\begin{prop} \label{prop-approx}
Let $e\in\Mn$.
Axioms  \ax{AxEField}, \ax{AxFn},  \ax{AxCom\psi} and  \ax{AxCDiff\psi} imply that 
\begin{equation*}
\langle i,\vv\rangle\approx_e\langle
j,\vw\rangle \defiff [d_{\psi_i(e)}\psi_{ij}](\vv)=\vw
\end{equation*}
is an equivalence relation on
the set 
\begin{equation*}
\{\,\langle i,\vv\rangle: \vv\in\Q^d ,\enskip i\in\I\text{ and }
e\in\dom \psi_i\,\}.
\end{equation*}
\end{prop}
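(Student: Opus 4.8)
The plan is to follow the proof of Proposition~\ref{prop-simeq} almost verbatim, with the transition maps $\psi_{ij}$ replaced by their derivatives $[d_{\vx}\psi_{ij}]$ and the identities \eqref{c-rf}--\eqref{c-tr} replaced by their differentiated versions. The one new tool needed is a chain rule for $[d_{\vx}\,\cdot\,]$, which I would establish first as a lemma: if $f\sim_{\vx}A$ and $g\sim_{f(\vx)}B$ with $A,B$ affine, then $f\comp g\sim_{\vx}A\comp B$, hence
\begin{equation*}
[d_{\vx}(f\comp g)]=[d_{\vx}f]\comp[d_{f(\vx)}g].
\end{equation*}
This is a routine $\varepsilon$--$\delta$ estimate provable from \ax{AxEField}: an affine map is globally Lipschitz (its linear part is given by a matrix, hence bounded on $\Q^d$), so the error of $A\comp B$ against $f\comp g$ near $\vx$ is dominated by the error of $B$ against $g$ near $f(\vx)$ plus the Lipschitz constant of $A$ times the error of $A$ against $f$ near $\vx$. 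I would also record the ``locality'' of $[d_{\vx}\,\cdot\,]$: since $\sim_{\vx}$ involves $f$ only on arbitrarily small neighbourhoods of $\vx$, one has $[d_{\vx}f]=[d_{\vx}f']$ whenever $f$ and $f'$ agree on a neighbourhood of $\vx$; this, with the uniqueness of affine approximations from Remark~\ref{rem-conv}, will be used repeatedly.

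Before checking the properties I would note two preliminaries that make the statement meaningful. First, if $\langle i,\vv\rangle$ and $\langle j,\vw\rangle$ both belong to the set in question, then $e\in\dom\psi_i\cap\dom\psi_j$, so $\langle i,\psi_i(e)\rangle$ and $\langle j,\psi_j(e)\rangle$ lie in $e$ and hence are $\sim$-related; by the definition of $\sim$ this gives $\psi_{ij}(\psi_i(e))=\psi_j(e)$, so in particular $\psi_i(e)\in\dom\psi_{ij}$. Second, $\dom\psi_{ij}$ is open by \ax{AxCDiff\psi} (through the definition of $\sim_{\vx}$, cf.\ Remark~\ref{rem-conv}) and $\psi_{ij}$ has a unique affine approximation at each point of its domain, so every derivative occurring in the definition of $\approx_e$ is well-defined.

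I would then verify the three properties, each a transcription of the corresponding step in Proposition~\ref{prop-simeq}. \textbf{Reflexivity}: $\psi_i(e)\in\dom\psi_{ii}$ and $\psi_{ii}=\Id_{\dom\psi_{ii}}$ with $\dom\psi_{ii}$ open, so the identity affine map approximates $\psi_{ii}$ at $\psi_i(e)$, $[d_{\psi_i(e)}\psi_{ii}]$ is the identity linear map, and $\langle i,\vv\rangle\approx_e\langle i,\vv\rangle$. \textbf{Symmetry}: given $\langle i,\vv\rangle\approx_e\langle j,\vw\rangle$, put $\vx\leteq\psi_i(e)$ and $\vy\leteq\psi_j(e)=\psi_{ij}(\vx)$; by \eqref{c-sm} and \eqref{c-tr}, $\psi_{ij}\comp\psi_{ji}$ equals $\Id_{\dom\psi_{ij}}$ on the open set $\dom\psi_{ij}\ni\vx$, so by locality and the chain rule $[d_{\vx}\psi_{ij}]\comp[d_{\vy}\psi_{ji}]$ is the identity; evaluating at $\vv$ yields $[d_{\vy}\psi_{ji}](\vw)=\vv$, i.e.\ $\langle j,\vw\rangle\approx_e\langle i,\vv\rangle$. \textbf{Transitivity}: given $\langle i,\vv\rangle\approx_e\langle j,\vw\rangle$ and $\langle j,\vw\rangle\approx_e\langle k,\vu\rangle$, put $\vx\leteq\psi_i(e)$ and $\vy\leteq\psi_j(e)$; then $\vx\in\dom\psi_{ij}$ and $\psi_{ij}(\vx)=\vy\in\dom\psi_{jk}$, and since $\psi_{ij}$ is continuous with open domain, $\vx$ lies in the open set $\dom(\psi_{ij}\comp\psi_{jk})$, on which $\psi_{ik}=\psi_{ij}\comp\psi_{jk}$ by \eqref{c-tr}; hence $[d_{\vx}\psi_{ik}]=[d_{\vx}\psi_{ij}]\comp[d_{\vy}\psi_{jk}]$ by locality and the chain rule, and evaluating at $\vv$ gives $[d_{\vx}\psi_{ik}](\vv)=[d_{\vy}\psi_{jk}](\vw)=\vu$, i.e.\ $\langle i,\vv\rangle\approx_e\langle k,\vu\rangle$.

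The main obstacle is the chain rule lemma and its locality consequence; once these are in hand the rest is a mechanical translation of Proposition~\ref{prop-simeq}. The only other delicate point is passing from the inclusion \eqref{c-tr} to an honest equality of maps on a neighbourhood of $\vx$ in the transitivity step, which is precisely where openness of the domains of the transition maps (hence \ax{AxCDiff\psi}) is used.
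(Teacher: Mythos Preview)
Your proposal is correct and follows essentially the same approach as the paper: the paper likewise reduces reflexivity to $[d_{\psi_i(e)}\psi_{ii}]=\Id_{\Q^d}$, derives transitivity from the chain rule (stated there as Theorem~\ref{thm-cr}), and handles symmetry via the derivative-of-inverse identity $[d_{\psi_i(e)}\psi_{ij}]^{-1}=[d_{\psi_j(e)}\psi_{ji}]$ (Corollary~\ref{cor-invdf}), which is just your chain-rule-on-$\psi_{ij}\comp\psi_{ji}=\Id$ argument repackaged. Your explicit ``locality'' remark and the care about passing from the inclusion \eqref{c-tr} to equality on a neighbourhood are details the paper leaves implicit, so if anything your write-up is slightly more complete.
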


\begin{proof}
Axioms \ax{AxEField}, \ax{AxFn}, \ax{AxCom\psi} and \ax{AxCDiff\psi} ensure
that the definition of $\approx_e$ is meaningful, i.e., $\Mn$ is
definable and $[d_{\psi_i(e)}\psi_{ij}]$ exists.

To prove the reflexivity of $\approx_e$, let $i\in\I$ (such that
$e\in\dom\psi_i$) and $\vv\in\Q^d$.  We have  $\langle
i,\vv\rangle\approx_e\langle i,\vv\rangle$ iff
$[d_{\psi_i(e)}\psi_{ii}](\vv)=\vv$. But $\psi_{ii}=\Id_{\psi_{ii}}$ by
\ax{AxCom\psi}. So $[d_{\psi_i(e)}\psi_{ii}]=\Id_{\Q^d}$. Hence
$\approx_e$ is reflexive.

To prove the symmetry of $\approx_e$, let $\langle
i,\vv\rangle\approx_e\langle j,\vw\rangle$. This is equivalent to
$[d_{\psi_i(e)}\psi_{ij}](\vv)=\vw$ by definition. Since
$\psi_{ij}^{-1}=\psi_{ji}$, we have
$[d_{\psi_i(e)}\psi_{ij}]^{-1}=[d_{\psi_j(e)}\psi_{ji}]$ by
Corollary~\ref{cor-invdf}. So $[d_{\psi_j(e)}\psi_{ji}](\vw)=\vv$. Hence
$\langle j,\vw\rangle\approx_e\langle i,\vv\rangle$. Thus $\approx_e$
is symmetric.
 
To prove the transitivity of $\approx_e$, let $\langle
i,\vv\rangle\approx_e\langle j,\vw\rangle$ and $\langle
j,\vw\rangle\approx_e\langle k,\vu\rangle$. Then
$[d_{\psi_i(e)}\psi_{ij}](\vv)=\vw$ and
$[d_{\psi_j(e)}\psi_{jk}](\vw)=\vu$ by definition. By chain rule (see
Theorem~\ref{thm-cr})
$[d_{\psi_i(e)}(\psi_{ij}\comp\psi_{jk})]=[d_{\psi_i(e)}\psi_{ij}]\comp
[d_{\psi_j(e)}\psi_{jk}]$. So $[d_{\psi_i(e)}\psi_{ik}](\vv)=\vu$. Thus
$\langle i,\vv\rangle\approx_e\langle k,\vu\rangle$. Hence $\approx_e$
is transitive.
\end{proof}

\begin{rem}%[definition of tangent space] 
By Proposition~\ref{prop-approx}, the \textbf{tangent space} at $e\in\Mn$ can
be defined as a new sort in the sense of \cite[p.649.]{logst}, i.e.,
let $\TeM$ be the set 
\begin{equation}
\{\,\langle i,\vv\rangle: \vv\in\Q^d \text{, } i\in\I\text{ and }
e\in\dom\psi_i\,\}
\end{equation}
factorized by the equivalence relation $\approx_e$.  By
\ax{AxCom\g}, the metric $\g$ can be lifted to the tangent space
$\TeM$, i.e., $\gem:\TeM\times\TeM\rightarrow\Q$ can be defined for all
$\bvv,\bvw\in\TeM$ as $\gem(\bvv,\bvw)=\g_i(\vx)(\vv,\vw)$ if
$\psi_i(e)=x$, $\langle i,\vv\rangle\in\bvv$ and $\langle
i,\vw\rangle\in\bvw$. 
\end{rem}

We assume that the metric is Lorentzian by postulating that it can be
transformed to the Minkowski metric $\mu$.

\begin{description}
\item[\underline{\ax{AxLor\g}}] Metric $\g_i$ is a Lorentzian metric for all $i\in\I$:
\begin{equation*}
\forall i\enskip\forall \vx\in\dom\g_i\;\exists \text{ linear map } L \enskip \forall
\vv\vw \enskip \g_{i}(\vx)(\vv,\vw)=\mu(L\vv,L\vw).
\end{equation*}
\end{description}

We also assume that the metric is continuous by the next axiom.
\begin{description}
\item[\underline{\ax{AxC^0\g}}] Metric $g_i$ is continuous for all
  $i$:
\begin{multline*}
\forall i\enskip
\forall \varepsilon>0\enskip
\forall \vx\in\dom\g_i\enskip \exists \delta >0 \enskip \forall
\vy\in\dom\g_i\;\big( |\vx-\vy|<\delta\\ \to
 \forall \vv\vw \; |\g_i(\vx)(\vv,\vw)-\g_i(\vy)(\vv,\vw)|<\varepsilon\big).
\end{multline*}
\end{description}

To ensure that the metric is defined everywhere, we assume the
following axiom.
\begin{description}
\item[\underline{\ax{AxFull\g}}] Metric $\g_i$ is defined everywhere
  on $\dom\psi_{ii}$ for all $i\in\I$:
\begin{equation*}
\forall i\enskip \dom \psi_{ii} \subseteq \dom \g_{i}.
\end{equation*}
\end{description}
Let us note that, without assuming \ax{AxFull\g}, it is even possible that
$\dom \g_i$ is empty for all $i\in\I$.

To be able to introduce our theory \ax{LorMan}, let the above language
of Lorentzian manifolds be denoted by $\LM$, i.e., $\LM\de \{\,\I, \Q,
+,\cdot, \le, \psi,\g\,\}$.

\begin{multline*}
\ax{LorMan}\de \ax{AxEField}+ \ax{AxFn}+ \ax{AxCom\psi}+ \ax{AxCom\g}+
\ax{AxCDiff\psi}\\ +\ax{AxLor\g} +\ax{AxC^0\g} +\ax{AxFull\g}
+\ax{CONT_\LM}
\end{multline*}

\section{Completeness of GenRel with respect to Lorentzian Manifolds}
\label{sec-comp}

Here we are going to define the basic concepts of Lorentzian manifolds
in terms of \ax{GenRel}. This will give us a translation of all the
formulas of the language of Lorentzian manifolds to that of
\ax{GenRel}. Then we will show that the definitional extension of the
models of \ax{GenRel} satisfies the axioms of Lorentzian manifolds,
see Theorem \ref{thm-c00}. This theorem implies that
the translation of any sentence of language $\LM$ of Lorentzian
manifolds can be proved from \ax{GenRel}, see Corollary~\ref{cor-c00}.

Let $d\ge 3$. Let $\G$ be a model of language $\LG$. We are going to
associate a model $M(\G)$ of language $\LM$ to $\G$. Let $M(\G)$ be
the following structure. Let the structure $\langle \Q,+,\cdot,
\le\rangle$ in $M(\G)$ be the same as that of $\G$. Let
\begin{equation*} \I\de \Ob,\enskip \psi_{mk}\de\w_{mk},
\end{equation*}
Finally, let relation $\g$ defined as follows
\begin{equation}\label{eq-g}
\g(m,\vx,\vv,\vw,a) \defiff \exists k \enskip \Ob(k) \land
\W(m,k,\vx)\land \mu\big([d_{\vx}\w_{mk}(\vv)],[d_{\vx}\w_{mk}(\vw)]\big)=a.
\end{equation}

The above model construction determines a translation from language
$\LM$ of Lorentzian manifolds to language $\LG$ of \ax{GenRel}. We
give this translation by formula induction. Quantity variables are
translated to quantity variables and index variables are translated to
body variables. For atomic formulas, it is defined as follows:
\begin{equation*}
Tr(x+y)=x+y,\qquad Tr(x\cdot y)=x\cdot y\qquad Tr(x\le y)=x\le y,
\end{equation*}
\begin{equation*}
Tr\big(\psi(i,j,\vx,\vy)\big)=\forall b \;\big[ \W(i,b,\vx)\leftrightarrow \W(j,b,\vy)\big],
\end{equation*}
\begin{equation*}
Tr\big(\g(i,\vx,\vv,\vw,a)\big)= \exists j\;\big[
\Ob(j)\land\W(i,j,\vx)\land
\mu\big([d_{\vx}\w_{ij}](\vv),[d_{\vx}\w_{ij}](\vv)\big)=a\big].
\end{equation*}
For logical connectives $\land$, $\neg$, etc.:
 \begin{equation*}Tr(\psi\land\varphi)=Tr(\psi)\land Tr(\varphi),\enskip
Tr(\neg\varphi)=\neg Tr(\varphi),\text{ etc.}
\end{equation*}
For quantifiers $\forall$ and $\exists$:
\begin{equation*}Tr(\forall
i \;\phi)=\forall i\;\big[\Ob(i)\rightarrow Tr(\phi)\big] \qquad Tr(\exists i\;
\phi)=\exists i\;\big[ \Ob(i)\land Tr(\phi)\big]
\end{equation*}
if $i$ is an index variable and 
\begin{equation*}Tr(\forall
x \; \phi)=\forall x\; Tr(\phi) \qquad Tr(\exists x\;
\phi)=\exists x\; Tr(\phi)
\end{equation*}
if $x$ is a quantity variable.

As usual, let $\M\models \varphi$ denote that formula $\varphi$ is
valid in model $\M$ and let the class of models of theory $Th$ is defined as
the collection of structures in which all the formulas of $Th$ are
valid:
\begin{equation*}
Mod(Th)\de\{\,\M \::\: \forall \varphi \in Th \enskip \M\models \varphi \,\}.
\end{equation*}

\begin{thm}\label{thm-c00} Let $d\ge3$. Then 
\begin{equation*}
 M(\G)\models\ax{LorMan}\enskip \text{ if }\enskip\G\models\ax{GenRel}
\end{equation*}
or equivalently 
\begin{equation*}
M(\G)\in\Mod(\ax{LorMan})\enskip \text{ if }\enskip \G\in \Mod(\ax{GenRel}),
\end{equation*}
 i.e., $M$ maps models of \ax{GenRel} to models of \ax{LorMan}.
\end{thm}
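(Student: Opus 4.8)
The plan is to verify, one by one, that each axiom of $\ax{LorMan}$ holds in $M(\G)$ whenever $\G\models\ax{GenRel}$. Since the quantity structure of $M(\G)$ is literally that of $\G$, $\ax{AxEField}$ is immediate, and $\ax{CONT_\LM}$ follows from $\ax{CONT_\LG}$ because every $\LM$-definable subset of $\Q$ pulls back, via the translation $Tr$, to an $\LG$-definable subset of $\Q$ (the translation sends quantity variables to quantity variables and introduces only $\Ob$-relativized index quantifiers), so the supremum guaranteed by $\ax{CONT_\LG}$ witnesses $\ax{CONT_\LM}$. For $\ax{AxFn}$: the $\psi$-part, i.e.\ that $\w_{mk}$ is a function, is exactly the first conjunct of $\ax{AxCDiff}$; the $\g$-part, that the value $a$ in \eqref{eq-g} is unique, follows because $\W(m,k,\vx)$ together with $\ax{AxEv^-}$ and the functionality of worldview transformations pins down the germ of $\w_{mk}$ at $\vx$, hence its derivative $[d_{\vx}\w_{mk}]$, hence $\mu\big([d_{\vx}\w_{mk}](\vv),[d_{\vx}\w_{mk}](\vw)\big)$; one should check that two observers $k,k'$ with $\W(m,k,\vx)$ and $\W(m,k',\vx)$ give the same derivative, which holds since $\w_{mk}$ and $\w_{mk'}$ agree on a neighborhood of $\vx$ (both equal the worldview transformation onto the common event structure around $\vx$, using $\ax{AxEv^-}$).

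Next I would treat the compatibility axioms. $\ax{AxCom\psi}$ — reflexivity \eqref{c-rf}, symmetry \eqref{c-sm}, transitivity \eqref{c-tr} of the $\w_{mk}$ — follows directly from the definition $\w_{mk}(\vx,\vy)\defiff \ev_m(\vx)=\ev_k(\vy)\neq\emptyset$ together with the fact (Remark~\ref{rem-ctc}, from $\ax{AxCDiff}$) that no observer coordinatizes an event twice: $\w_{mm}$ is then the identity on its domain, $\w_{mk}$ and $\w_{km}$ are mutual inverses, and composing $\w_{mk}$ with $\w_{kh}$ lands inside $\w_{mh}$ because matching events is transitive. $\ax{AxCDiff\psi}$ is, almost verbatim, $\ax{AxCDiff}$: the first conjunct gives the affine approximation $A_{\vx}$ at each point of $\dom\w_{mk}$, the second its continuous dependence on $\vx$. $\ax{AxCom\g}$ is the compatibility of $\g$ with the transition maps; unfolding \eqref{eq-g} at a point $\vx\in\dom\g_m\cap\dom\psi_{mk}$ and using the chain rule (Theorem~\ref{thm-cr}) for $[d_{\vx}(\w_{mk}\comp\w_{kj})]=[d_{\vx}\w_{mk}]\comp[d_{\psi_{mk}(\vx)}\w_{kj}]$ together with \eqref{c-tr}, one rewrites $\mu\big([d_{\vx}\w_{mj}](\vv),[d_{\vx}\w_{mj}](\vw)\big)$ as $\mu$ of the pushed-forward vectors under $[d_{\vx}\w_{mk}]$ evaluated via $\g_k$ at $\psi_{mk}(\vx)$; the existence of a suitable observer $j$ at the relevant event, needed to instantiate the existential in \eqref{eq-g}, is supplied by $\ax{AxThExp^-_{00}}$ (there is an observer in every nonempty event) together with $\ax{AxEv^-}$.

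Then come the three genuinely geometric axioms. $\ax{AxLor\g}$, that $\g_m(\vx)$ is a Lorentzian inner product, is where the work concentrates: by \eqref{eq-g} we must produce a linear $L$ with $\g_m(\vx)(\vv,\vw)=\mu(L\vv,L\vw)$, i.e.\ we must show $[d_{\vx}\w_{mk}]$ (for the witnessing observer $k$ with $\W(m,k,\vx)$, so that $m$ meets $k$ at $\vx$) conjugates $\mu$ back to a metric of Lorentzian signature. This is the standard "observers meeting a photon see its speed as $1$" input: from $\ax{AxPh^-}$ one extracts that photons through $\vx$ have $\mu$-null velocities in $m$'s coordinates (and all null directions are realized), which forces the bilinear form $\vv,\vw\mapsto\mu\big([d_{\vx}\w_{mk}](\vv),[d_{\vx}\w_{mk}](\vw)\big)$ to have the same null cone as $\mu$, and a form sharing the null cone of $\mu$ is a nonzero scalar multiple of $\mu$; then $\ax{AxSymt^-}$ (meeting observers see each other's clocks slow at the same rate) together with $\ax{AxSelf^-}$ pins the scalar to be positive, hence $\g_m(\vx)$ is $\mu$ up to a positive factor and an $L$ exists. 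I expect this to be the main obstacle — it is the point where all of $\ax{AxPh^-}$, $\ax{AxSelf^-}$, $\ax{AxSymt^-}$ are actually consumed, and the linear-algebra lemma "same null cone as $\mu$ $\Rightarrow$ scalar multiple of $\mu$" needs care over an arbitrary Euclidean field (one uses $d\ge 3$ here). $\ax{AxC^0\g}$, continuity of $\g_m$, is exactly the content of $\ax{AxC^0\g_m}$ read through \eqref{eq-g}, after noting that the witnessing observers $k$ and $h$ in a neighborhood exist by $\ax{AxThExp^-_{00}}$. Finally $\ax{AxFull\g}$, that $\dom\g_m=\dom\psi_{mm}$, follows because for any $\vx\in\dom\w_{mm}$ there is, by $\ax{AxThExp^-_{00}}$ and $\ax{AxEv^-}$, an observer $k$ with $\W(m,k,\vx)$, and then \eqref{eq-g} defines $\g_m$ at $\vx$. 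Assembling these verifications gives $M(\G)\models\ax{LorMan}$.
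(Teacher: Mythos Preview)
Your overall shape is right, but the hard work is misplaced and your argument for $\ax{AxFn\g}$ is actually wrong. You write that two observers $k,k'$ with $\W(m,k,\vx)$ and $\W(m,k',\vx)$ give the same derivative because ``$\w_{mk}$ and $\w_{mk'}$ agree on a neighborhood of $\vx$''. They do not: $\w_{mk}(\vx)$ and $\w_{mk'}(\vx)$ are coordinates of the same event in the coordinate systems of \emph{different} observers $k$ and $k'$, and there is no reason these should coincide, let alone the maps agree nearby. What is true is that $[d_{\vx}\w_{mk}]$ and $[d_{\vx}\w_{mk'}]$ differ by $[d_{\vy}\w_{kk'}]$ (chain rule, with $\vy=\w_{mk}(\vx)$), and the paper's Proposition~\ref{prop-welldf} shows that this last map is a Lorentz transformation, hence preserves $\mu$, so the $\mu$-values agree. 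That Lorentz step is Theorem~\ref{thm-lordf}, which is precisely where $\ax{AxPh^-}$ (null directions), $\ax{AxSelf^-}$, $\ax{AxSymt^-}$ (the scale-fixing), and the hypothesis $d\ge 3$ are all consumed. Note also that Theorem~\ref{thm-lordf} applies at $\vy$ because $k,k'\in\ev_k(\vy)$, i.e.\ $k$ and $k'$ actually meet there; by contrast $m$ need not be at $\vx$, so your phrase ``so that $m$ meets $k$ at $\vx$'' is unwarranted.

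Conversely, $\ax{AxLor\g}$ is not the obstacle you make it out to be: by the very definition \eqref{eq-g}, $\g_m(\vx)(\vv,\vw)=\mu\big([d_{\vx}\w_{mk}](\vv),[d_{\vx}\w_{mk}](\vw)\big)$, so one simply takes $L=[d_{\vx}\w_{mk}]$ (which exists by Lemma~\ref{lem-df0}) and $\ax{AxLor\g}$ is immediate. The null-cone and sym-time reasoning you sketch is exactly the content of Lemmas~\ref{lem-ll}--\ref{lem-llin} and Lemma~\ref{lem-sym}, but it feeds into the well-definedness of $\g$ ($\ax{AxFn\g}$), not into $\ax{AxLor\g}$. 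Once you relocate that argument, the rest of your verification (for $\ax{AxCom\psi}$, $\ax{AxCDiff\psi}$, $\ax{AxCom\g}$, $\ax{AxC^0\g}$, $\ax{AxFull\g}$, $\ax{CONT_\LM}$) matches the paper's proof.
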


Theorem~\ref{thm-c00} implies the following completeness of
\ax{GenRel}, where $\vdash$ denotes the usual relation of FOL
deducibility.  Let $\Fm(\Log)$ denote the set of all formulas of
language $\Log$.
\begin{cor}\label{cor-c00} Let $d\ge3$.
Let $\varphi\in \Fm(\LM)$. Then
\begin{equation*}\ax{GenRel}\vdash
Tr(\varphi)\enskip\text{ if }\enskip \M\models \varphi\enskip\text{ for all }\enskip\M\in\Mod(\ax{LorMan}).
\end{equation*}
\end{cor}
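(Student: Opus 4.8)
The plan is to obtain Corollary~\ref{cor-c00} from Theorem~\ref{thm-c00} by the standard ``interpretation-plus-completeness'' argument. The pair consisting of the model construction $\G\mapsto M(\G)$ and the syntactic translation $Tr:\Fm(\LM)\to\Fm(\LG)$ is set up precisely so as to be a relative interpretation of the language of Lorentzian manifolds into the language of \ax{GenRel}; Theorem~\ref{thm-c00} supplies the semantic half of what is needed, namely that $M$ sends models of \ax{GenRel} to models of \ax{LorMan}. The only further ingredients are the ``meaning theorem'' for this interpretation and the G\"odel completeness theorem for (many-sorted) first-order logic.

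So first I would prove the following \emph{translation lemma}: for every $\G\models\ax{GenRel}$ (so that in particular $M(\G)$ is a legitimate $\LM$-structure, its index sort $\Ob^{\G}$ being nonempty by \ax{AxThExp^-_{00}}), every $\LM$-formula $\chi$, and every assignment to the free variables of $\chi$ (reading index variables of $\chi$ as body variables of $Tr(\chi)$ and keeping quantity variables),
\[
M(\G)\models\chi\quad\Longleftrightarrow\quad\G\models Tr(\chi).
\]
This goes by induction on the build-up of $\chi$. The propositional steps are immediate from the clauses $Tr(\psi\land\varphi)=Tr(\psi)\land Tr(\varphi)$, $Tr(\neg\varphi)=\neg Tr(\varphi)$, and the like. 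For the quantifier steps over an index variable one uses that the index sort of $M(\G)$ is by construction $\Ob^{\G}$, which makes $\forall i$ and $\exists i$ in $M(\G)$ correspond exactly to the $\Ob$-relativized quantifiers $\forall i\,[\Ob(i)\to\cdot]$ and $\exists i\,[\Ob(i)\land\cdot]$ on the $\LG$-side, while quantity quantifiers stay unrelativized on both sides. The base case is the only place with real content: one must check that for each atomic $\LM$-formula the corresponding clause of $Tr$ expresses, in the vocabulary of $\G$, exactly the interpretation of that symbol in $M(\G)$. For $\psi(i,j,\vx,\vy)$ and $\g(i,\vx,\vv,\vw,a)$ this is where the precise form of the definitions $\psi_{mk}\de\w_{mk}$ and \eqref{eq-g} matters; one also has to note that the derivative operator $[d_{\vx}\w_{ij}]$ and the Minkowski form $\mu$ occurring inside the clause for $\g$ are themselves honestly $\LG$-definable (via $\sim_{\vx}$ and the uniqueness of affine approximations, Remark~\ref{rem-conv}), so that $Tr(\g(\ldots))$ really is a first-order $\LG$-formula with the intended meaning.

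Granting the translation lemma, the corollary follows at once. Suppose $\M\models\varphi$ for every $\M\in\Mod(\ax{LorMan})$, and let $\G$ be an arbitrary model of \ax{GenRel}. By Theorem~\ref{thm-c00}, $M(\G)\in\Mod(\ax{LorMan})$, hence $M(\G)\models\varphi$, hence $\G\models Tr(\varphi)$ by the translation lemma. Since $\G$ ranged over all models of \ax{GenRel}, $Tr(\varphi)$ is valid in every such model, so the completeness theorem of first-order logic yields $\ax{GenRel}\vdash Tr(\varphi)$, as claimed; note that only the completeness direction for \ax{GenRel} is invoked, and one never needs to produce a derivation of $\varphi$ from \ax{LorMan}. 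I expect the main obstacle to be purely bookkeeping in the atomic case of the translation lemma — verifying that the defined notation inside the $Tr$-clauses (the derivative operator, $\mu$, the event-based abbreviations expanding down to $\W$) is genuinely eliminable in favour of $\LG$-formulas, and that the renaming of index variables to body variables produces no variable capture — since everything conceptual has already been discharged by Theorem~\ref{thm-c00}.
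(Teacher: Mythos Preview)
Your proposal is correct and follows the same route the paper takes: the paper does not give a separate proof of this corollary, but in the proof of Theorem~\ref{thm-c0} it explicitly records the translation lemma ``$M(\G)\models\varphi$ iff $\G\models Tr(\varphi)$'' by formula induction, and the corollary is then immediate from Theorem~\ref{thm-c00} together with G\"odel completeness, exactly as you outline. Your attention to the atomic base cases and to the first-order definability of $[d_{\vx}\w_{ij}]$ and $\mu$ is appropriate and matches the paper's level of rigor.
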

The meaning of Corollary~\ref{cor-c00} is that if a statement $\varphi$
is true in every Lorentzian manifold,
then its translation $Tr(\varphi)$ is provable from our axiom system
\ax{GenRel}.

In Section~\ref{sec-generalization}, we generalize these results for
smooth (and $n$-times continuously differentiable) Lorentzian
manifolds, see Theorem~\ref{thm-c0} and Corollary~\ref{cor-c0}.

\section{Turning Lorentzian Manifolds into Models of GenRel}

In Section~\ref{sec-comp}, we have constructed a Lorentzian manifold
from every model of \ax{GenRel}. What about the converse direction?
Can a model of \ax{GenRel} constructed from every Lorentzian manifold?
In this section, we are going to show that the converse construction
works for smooth Lorentzian manifolds if the structure of quantities
is the field $\Reals$ of real numbers.

To outline this construction, let $\M$ be a smooth Lorentzian manifold over
$\Reals$. Let $\langle Q, +, \cdot, \le\rangle$ be $\langle \Reals, +, \cdot,
\le\rangle$. By this choice, \ax{AxEField} and \ax{CONT_\LG} are
satisfied.

A vector $\vv\in\Q^d$ is called \textbf{lightlike} iff the
length of its time component is equal to the length of its space
component, i.e., $|v_t|=|\vv_s|$; or equivalently $\mu(\vv,\vv)=0$.  A
vector $\vv\in\Q^d$ is called \textbf{timelike} iff
$|\vv_s|<|v_t|$; or equivalently iff $\mu(\vv,\vv)>0$. A
differentiable curve is called lightlike (timelike) if all of its
derivative vectors are lightlike (timelike).

Let $\Ph$ be the set of lightlike curves in $\M$.  We associate an
observer $m$ to a normal convex neighborhood\footnote{See, e.g.,
  \cite[pp.129-130]{ONeill} for a precise definition.} $N_m$ and a
timelike curve segment $\gamma_m$ contained by $N_m$. So let $\Ob$ be
the set of pairs consisting a timelike curve segment and normal convex
neighborhood containing it.  Let $\B$ be the union of $\Ph$ and $\Ob$, i.e., $\B=\Ph\cup\Ob$.

To define $W$, we associate a coordinate system to every observer
$m$. Then $W(m,b,\vx)$ will  hold true iff the curve corresponding
to body $b$ crosses coordinate point $\vx$ in observer $m$'s
coordinate system. We define the coordinate system of $m$ as a
transformed version of $N_m$.

\begin{figure}
\begin{center}
\begin{tikzpicture}[scale=1]
\tikzstyle{nyil}=[->,>=stealth, thick]
\pgfmathsetmacro{\r}{0.08}

\newcommand{\lightcone}[4]{
\begin{scope}[ shift={#3}, scale=#4, rotate=#1, yscale=#2]
\draw[red,thick,dashed] (1,1) arc (0:180:1 and 0.15);
\draw[red,thick,dashed] (1,-1) arc (0:180:1 and 0.15);
\draw[red,thick] (-1,-1)--(1,1);
\draw[red,thick] (1,-1)--(-1,1);
\draw[red,thick] (1,1) arc (0:-180:1 and 0.15);
\draw[red,thick] (1,-1) arc (0:-180:1 and 0.15);
\end{scope}
}

\begin{scope}[shift={(-3,0)}]
\coordinate (o) at (0,0);
\coordinate (a) at (0.6,1.5);
\coordinate (b) at (-0.5,-1.3);
\draw[dashed] (o) ellipse (1.7 and 2.5);
\draw[blue,very thick](-1,-2) .. controls (0,-1) and (0,1) .. (1,2) node[right,black]{$\gamma_m$};
\draw[purple,very thick](1,-2) node[below,yshift=-3,black]{$N_m$}  .. controls (-0.4,-0.9) and (0,1) .. (0,2.5);
\lightcone{-20}{0.6}{(a)}{0.5} 
\lightcone{10}{0.8}{(o)}{0.5} 
\lightcone{-15}{0.9}{(b)}{0.5} 
\lightcone{15}{0.7}{(-0.75,1)}{0.5} 
\lightcone{25}{0.6}{(1,-0.6)}{0.5} 
% \draw[nyil, ultra thick] (0,0) -- (0.3,1);
% \draw[nyil] (0,0) -- (.8,-0.10);
% \draw[nyil] (0,0) -- (-.8,-0.15);
\fill (o) circle (\r);
\fill (a) circle (\r);
\fill (b) circle (\r);
\end{scope}

\begin{scope}[shift={(3,0)}]
\coordinate (o) at (0,0);
\coordinate (a) at (0,1.5);
\coordinate (b) at (0,-1.5);
\draw[dashed] (0,0) ellipse (1.7 and 2.5);
\draw[blue,very thick](0,-2.5) -- (0,2.5);
% \draw[nyil,ultra thick] (0,0) -- (0,1);
% \draw[nyil] (0,0) -- (1,0);
% \draw[nyil] (0,0) -- (-0.75,-.35);
\draw[purple,very thick](1,-2) .. controls (0,-1) and (0,1) .. (-1,2);
\lightcone{0}{1}{(o)}{0.5}
\lightcone{0}{1}{(a)}{0.5}
\lightcone{0}{1}{(b)}{0.5}
\lightcone{-15}{0.7}{(-1,-0.1)}{0.5} 
\lightcone{-25}{0.8}{(1,0.1)}{0.5} 
\fill (o) circle (\r);
\fill (a) circle (\r);
\fill (b) circle (\r);
\end{scope}

\draw[nyil,ultra thick] (-1.1,0) .. controls (0,0.35) .. (1.1,0);

\end{tikzpicture}
\caption{\label{fig-FWtransport} Illustration for constructing a model
  of \ax{GenRel} from Lorentzian manifolds }
\end{center}
\end{figure}

To satisfy \ax{AxSelf^-}, we have to transform $N_m$ (in a smooth way)
such that $\gamma_m$ is mapped to a subset of the time-axis, see
Figure~\ref{fig-FWtransport}. To satisfy \ax{AxPh^-}, we have to transform $N_m$
such that the light signals crossing $\gamma_m$ have coordinate speed
1 in the moment of the crossing in $N_m$. These can be ensured by
transforming $N_m$ such that $\gamma_m$ goes to a subset of the time
axis and the metric restricted to the time-axis in transformed $N_m$
is the Minkowski metric. In this case, \ax{AxSymT^-} will also be
satisfied because then the derivative of the worldview transformation
between meeting observers will be a Lorentz transformation in the
point of meeting.

%Axiom \ax{AxThExp^-_{00}} holds true since chronological past and future
%are open sets, see \cite[Lemma~3.5]{GlobLorGeom}.

All the axioms corresponding
to the smoothness of coordinate transformations and that of metric are
also satisfied because $\M$ was smooth and we transformed $N_m$
smoothly.

So the only question remains whether it is possible to transform
neighborhoods $N_m$ the way described above? By Fermi--Walker
transporting (see, e.g., \cite[\S 9]{LRR11}) of an orthogonal basis
along $\gamma_m$, we can get a so called Fermi--Walker normal
coordinates. By transforming $N_m$ to this normal coordinates we get
the required coordinate system for observer $m$.

It is a \emph{question for further research} to generalize the
construction of this section for Lorentzian manifolds over Euclidean
fields.\footnote{This question is not at all trivial since it requires
  to generalizing several classical theorems of differential geometry
  over Euclidean fields in the spirit of \cite[\S 10]{Szphd}. This may
  also require extending the languages of \ax{LorMan} and \ax{GenRel},
  e.g., to be able to quantify over integrals of some definable
  functions. This is so because the usual definition of integral (as
  opposed to that of derivative) is not a FOL definition in the
  language of Euclidean fields. In \cite{twp}, we were able to prove
  every theorem over Euclidean fields without a general FOL definable
  concept of integration by quantifying over observers when we needed
  to ensure the existence of the integrals of some definable
  functions. However, this trick may not work to prove every theorem
  used in the construction of this section.}
\section{Geodesics}
\label{sec-geod}

In this section, we are going to define timelike geodesics in
\ax{GenRel}.

We call the worldline of observer $m$ \textbf{timelike geodesic}, if
each of its points has a neighborhood within which $m$ ``maximizes
measured time" between any two encountered events, i.e.,
\begin{multline}\label{eq-tlgeod}
   \forall\vz \in \wl_m(m)\;\exists \delta>0\enskip
   \forall k\vx\vy\; \Big( |\vx-\vz|<\delta\land |\vy-\vz|< \delta
   \land\Ob(k) \\ \land \vx,\vy\in \wl_m(m)\cap\wl_m(k)\land
   \big[\forall \vw\in \wl_m(k) \enskip |\vw-\vz|<\delta \big]\\ \to
   |x_t-y_t| \geq \left|\w_{mk}(\vx)_t-\w_{mk}(\vy)_t\right|\Big),
\end{multline}
see Fig.~\ref{fig-tlgeod}.

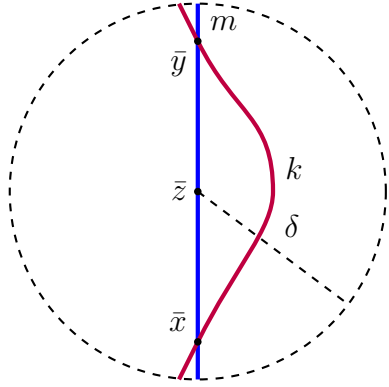
\begin{figure}
\begin{center}
\begin{tikzpicture}[scale=0.5]
\draw[thick, dashed] (0,0) circle (5);
\draw[thick, dashed] (0,0) -- node[above right]{$\delta$} (4,-3);
\draw[ultra thick, blue] (0,-5) -- (0,5) node [below right,black] {$m$};
\draw[ultra thick, purple] (-.5,-5) --(0,-4) .. controls (1,-2) and (2,-1) .. (2,0) node[above right, black] at (2,0) {$k$}  .. controls (2,2) and (1,2) .. (0,4)-- (-0.5,5);
\fill (0,0) circle (.1) node[left]  {$\bar z$};
\fill (0,4) circle (.1) node[below left] {$\bar y$};
\fill (0,-4) circle (.1) node[above left]{$\bar x$};
\end{tikzpicture}
\caption{\label{fig-tlgeod} Illustration for formula \eqref{eq-tlgeod}
  defining timelike geodesics}
\end{center}
\end{figure}

If there are not enough observers, it may not be a big deal that the
worldline of $m$ is a time-like geodesic by the above
definition. Therefore, we postulate the existence of many observers by
the following axiom schema of comprehension.

\begin{description}\label{compr}
\item[\underline{\ax{COMPR}}] For any parametrically definable
  continuously differentiable  timelike curve in any observer's
  worldview, there is another observer whose worldline is the range of
  this curve.
\end{description}
\ax{COMPR} can be formalized as the collection of formulas
\ax{Ax\exists \psi} below. To introduce these formulas, let $\psi$ be
a formula in the language of \ax{GenRel} such that all the free
variables of $\psi$ are among $t$, $\vx$ and $\vy$, where $t\in\Q$,
$\vx\in\Q^d$ and there is no restriction on parameter $\vy$.
\begin{description}
\item[\underline{\ax{Ax\exists \psi}}] If formula $\psi$ defines a
  continuously differentiable timelike curve in observer $m$'s
  worldview, then there is another observer $k$ whose worldline is the
  range of curve $\psi$:
\begin{equation}
\forall \vy \forall m \big(\tlc(m,\psi) \rightarrow \exists k [\vx\in
  \wl_m(k) \leftrightarrow \exists t\enskip \psi(t,\vx,\vy)]\big),
\end{equation}
\end{description}
where $\tlc(m,\psi)$ is a formula expressing that $\psi$ defines a
timelike curve in observer $m$'s worldview.  Formula $\tlc(m,\psi)$ can
be formulated as the conjunction of the following:\\ ``$\psi$ defines
a function,'' i.e.,
\begin{equation*}
\forall t\vx\vz\enskip [\psi(t,\vx,\vy)\land
  \psi(t,\vz,\vy)\rightarrow \vx=\vz],\footnote{From now on, we will
  denote the value of the function defined $\psi$ at $t$ by
  $\psi(t)$.}
\end{equation*}
``$\dom\psi$ is an open interval,'' i.e.,
\begin{equation*}
 \forall ab\in \dom \psi\enskip \exists \delta>0 \enskip \forall c\;
 \big[\big(|a-c|<\delta \lor a<c<b\big)\rightarrow c \in\dom \psi\big],
\end{equation*}
``$\psi$ is differentiable,'' i.e.,
\begin{multline*}
\forall t_0\in\dom\psi\enskip \exists \vz\enskip \forall
\varepsilon>0\enskip \exists \delta>0\enskip \forall
t\;\big(t\in\dom\psi\\ \land  0<|t-t_0|<\delta \rightarrow
|\psi(t)-\psi(t_0)-\vz(t-t_0)|<\varepsilon|t-t_0|\big),\footnotemark
\end{multline*}
\footnotetext{From
  now on, we will denote by $\psi'$ the derivative of $\psi$ defined
  by this FOL formula.}
``$\psi'$ is continuous,'' i.e.,
\begin{equation*}
\forall t_0\in\dom\psi'\enskip \forall
\varepsilon>0\enskip \exists \delta>0\enskip \forall
t\;\big(t\in\dom\psi' \land |t-t_0|<\delta  \rightarrow
|\psi'(t)-\psi'(t_0)|<\varepsilon\big),
\end{equation*}
``$\psi$ is timelike,'' i.e.,
\begin{equation*}
\forall t \in \dom \psi \enskip \forall k\in
\ev_m\big(\psi(t)\big)\;
\mu\big([d_{\psi(t)}\w_{mk}]\psi'(t),[d_{\psi(t)}\w_{mk}]\psi'(t)\big)>0.
\end{equation*}

The assumption of axiom schema \ax{COMPR} guarantees that our
definition of geodesic coincides with the usual one because of the
followings. 

Over the field $\Reals$ of real numbers, a curve is timelike geodesic if it
is locally the longest curve among all timelike curves, see, e.g.,
\cite[Prop.4.5.3.]{Hawking-Ellis}. By \eqref{eq-tlgeod} and
\ax{COMPR}, the worldline of observer is timelike geodesic if it
is locally the longest among all \emph{definable} timelike curves. So
to show that \eqref{eq-tlgeod} gives back the usual notion of timelike
geodesics, it is enough to show that every timelike curve can be
approximated by a \emph{definable} timelike curve. Now we are going to
show this.%, see Theorem~\ref{thm-geod}.

\begin{thm}\label{thm-geod}
In continuously differentiable Lorentzian manifolds over the field
$\Reals$ of real numbers, every timelike curve can be approximated
(with arbitrary precision) by continuously differentiable timelike
curves definable in the language of ordered fields.
\end{thm}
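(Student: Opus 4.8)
The plan is to reduce the statement to a local approximation question and then invoke a standard polynomial (or piecewise-polynomial) approximation of continuously differentiable curves. Suppose $\gamma:[a,b]\to\Q^d$ is a timelike curve in some chart, so $\gamma$ is continuously differentiable and $\mu(\gamma'(t),\gamma'(t))>0$ for all $t$ (or, if we work intrinsically on the manifold, $\gem(\gamma'(t),\gamma'(t))>0$, which in a fixed chart becomes a condition involving the continuous metric $\g_i$). The first step is to observe that timelikeness is an \emph{open} condition: since $t\mapsto \gamma'(t)$ is continuous and $[a,b]$ is compact, there is $c>0$ with $\mu(\gamma'(t),\gamma'(t))\ge c$ on $[a,b]$ (in the intrinsic case, use continuity of $\g_i$ from \ax{AxC^0\g} together with compactness). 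Hence any curve $\tilde\gamma$ whose derivative is uniformly close enough to $\gamma'$ is automatically timelike.

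**The approximation step.**
Next I would approximate $\gamma$ in $C^1$ norm by a curve definable in the language of ordered fields. Over $\Reals$, the Weierstrass approximation theorem gives, for every $\varepsilon>0$, a polynomial vector field $P:[a,b]\to\Q^d$ with $\sup_t|\gamma'(t)-P(t)|<\varepsilon$; then $\tilde\gamma(t)\de \gamma(a)+\int_a^t P(s)\,ds$ is a polynomial curve (polynomials are closed under antidifferentiation with rational-combination coefficients, so $\tilde\gamma$ is given by an explicit polynomial formula in $t$ with real coefficients, hence definable in the language of ordered fields using those coefficients as parameters). By construction $\tilde\gamma'=P$ is uniformly $\varepsilon$-close to $\gamma'$, and $|\tilde\gamma(t)-\gamma(t)|\le (b-a)\varepsilon$, so $\tilde\gamma$ is $C^1$-close to $\gamma$; choosing $\varepsilon<\sqrt c$-type bound as in the previous paragraph makes $\tilde\gamma$ timelike. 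Since $\tilde\gamma$ is a polynomial it is automatically continuously differentiable. This handles a curve lying in a single chart.

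**Globalizing and the main obstacle.**
A general timelike curve on the manifold need not lie in one chart, so I would cover its (compact) image by finitely many charts, subdivide the parameter interval accordingly, run the above polynomial approximation on each piece, and then glue. The gluing is where the real work lies: the naive concatenation of the chartwise polynomial approximants will only be continuous, not $C^1$, at the subdivision points, and the pieces live in different coordinate systems related by the transition maps $\psi_{ij}$. The fix is to prescribe matching position and velocity data at each junction (using the derivatives $[d_{\vx}\psi_{ij}]$ to transport the velocity vector from one chart to the next, exactly as in Proposition~\ref{prop-approx}) and then solve a Hermite-type interpolation on each subinterval: find a polynomial curve in the given chart hitting the prescribed endpoint values and derivatives while staying uniformly close to $\gamma$. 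Standard estimates for Hermite interpolation of $C^1$ functions give the required uniform control once the subdivision is fine enough, and the resulting piecewise-polynomial curve is globally $C^1$, timelike (again by the openness argument), and definable in the language of ordered fields (finitely many polynomial pieces, each with real parameters, spliced by a definable case distinction on the parameter). I expect the bookkeeping of the chart transitions and the verification that the glued curve is genuinely a well-defined curve \emph{on the manifold} (i.e., that consecutive pieces agree as points of $\Mn$ via $\sim$, not merely in overlapping coordinates) to be the most delicate part; the analytic approximation itself is routine Weierstrass/Hermite theory over $\Reals$.
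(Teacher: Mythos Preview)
Your strategy---reduce to a single chart, approximate in $C^1$ there, and glue by Hermite interpolation---is sound and structurally parallel to the paper's, whose rounding device (Lemma~\ref{lem-deftlc}) is literally a cubic Hermite interpolant with prescribed endpoints and tangent directions. The substantive difference is the local step: you approximate $\gamma'$ by a Weierstrass polynomial and integrate, whereas the paper inscribes a broken line of \emph{chords} of $\gamma$ (chords of a timelike curve are timelike in Minkowski space, hence timelike for the actual metric once the mesh is fine and the metric is close to Minkowski) and then rounds the corners. Your route is analytically slicker; the paper's is more elementary and keeps the \emph{length} bookkeeping explicit, which is the quantity that matters for the geodesic application.

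The one place where your argument is looser than the paper's, and on the paper's reading is an actual gap, is definability. You take the Weierstrass polynomial with arbitrary real coefficients and the base point $\gamma(a)\in\Reals^d$ and declare the result ``definable in the language of ordered fields using those coefficients as parameters.'' The paper, however, reads ``definable'' without arbitrary real parameters: its proof explicitly replaces the broken-line vertices by \emph{rational} points (``Broken line $\vx_1\ldots\vx_n$ may not be definable. However, since \ldots\ points having rational coordinates are definable, we can replace \ldots''), and Lemma~\ref{lem-deftlc} is stated only for \emph{definable} endpoint and tangent data. Under that reading your curve is not yet definable. The fix is cheap---perturb the finitely many polynomial coefficients and the base point to rationals, which is harmless since $\Rationals$ is dense in $\Reals$ and both the curve and the timelikeness condition depend continuously on this data---but you should say it. With that amendment your proof goes through; without it you have only the parametrically-definable version, which is weaker than what the paper establishes.
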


\begin{proof}
Let $\gamma$ be a timelike curve that we would like to approximate
with precision $\varepsilon>0$. Without loosing generality we can
assume that $\gamma$ can be covered by one coordinate system
(otherwise we cut $\gamma$ into smaller pieces and approximate it
piece by piece). So let us fix a coordinate system containing
$\gamma$.

It is well-known that all curves can be approximated by broken
lines. So let $\vx_1\vx_2\ldots\vx_n$ be a broken line approximating
$\gamma$ with precision $\varepsilon/3$ in the fixed coordinate system
containing $\gamma$.  Without loosing generality, we can assume that
$\vx_i\vx_{i+1}$ are chords of $\gamma$.

In Minkowski spacetime, all chords of a timelike curve are timelike
(see \cite[Prop.10.4.4]{Szphd} for a proof of this statement using
\ax{AxEField}). So if the broken line approximation is fine enough,
$\vx_i\vx_{i+1}$ are timelike segments since the metric is continuous.

Broken line $\vx_1\vx_2\ldots\vx_n$ may not be definable. However,
since the field $\Rationals$ of rational numbers is dense in $\Reals$
and points having rational coordinates are definable, we can replace
this broken line with a definable one without changing its length more
than $\varepsilon/3$. Let this definable broken line be
$\vy_1\vy_2\ldots\vy_n$. So $\vy_1\vy_2\ldots\vy_n$ is a definable
broken line which approximates $\gamma$ with precision
$2\varepsilon/3$.

We have to prove that the vertexes of this definable broken line
$\vy_1\vy_2\ldots\vy_n$ can be rounded by continuously differentiable
definable timelike curves in small enough neighborhoods without
changing its length more than $\varepsilon/3$.  Since the vertexes of
$\vy_1\vy_2\ldots\vy_n$ are points having rational coordinates, the
corresponding four-velocities ($\vy_i-\vy_{i-1}$) are definable and
the definable coordinates on $\vy_1\vy_2\ldots\vy_n$ are dense. In
Minkowski spacetime, any two definable coordinate points can be
connected by a continuously differentiable definable timelike curve
$\gamma^*$ such that the speeds of $\gamma^*$ at the start and the end
are arbitrary definable speeds smaller than $1$ and the speed of
$\gamma^*$ between these points are smaller than $1-\delta$ for some
$\delta>0$ (this last property guaranties that $\gamma^*$ remains
timelike if we change the metric slightly), see
Lemma~\ref{lem-deftlc}.  Since the metric is continuous, the metric in
small enough neighborhoods around the vertexes are approximately the
Minkowski metric. So we can use Lemma~\ref{lem-deftlc} to connect
definable points on the edges near to the vertexes of the broken line
$\vy_1\vy_2\ldots \vy_n$ in small enough neighborhoods without
changing its length more than $\varepsilon/3$.

The resulting rounded up broken line can be parametrized such that it
gives us the desired continuously differentiable definable timelike
curve that approximates $\gamma$ with precision $\varepsilon$.
\end{proof}

\section{refinements of the main theorem}
\label{sec-generalization}

In this section, we are going to refine Theorem~\ref{thm-c00} for
smooth (and $n$-times continuously differentiable) Lorentzian
manifolds by introducing axioms ensuring the smoothness of the
worldview transformations and the metric. To do so, we need some
further definitions.

Let the \textbf{standard basis vectors} of $\Q^n$ be denoted by $\ve_i$, i.e.,
\begin{equation*}
\ve_i\de \langle0,\ldots,\stackrel{i}{1},\ldots,0\rangle
\end{equation*}
for all $1\le i\le n$.  Let $f$ be a definable function from a subset
of $\Q^k$ to $Q$ defined by formula $\phi_f(\vx,y)$, i.e.,
$f(\vx)=y\Iff\phi_f(\vx,y)$.  The $i$-ht \textbf{partial derivative}
of $f$ is defined by the following FOL formula:
\begin{multline*}
\phi_{\partial_if}(\vz,w)\defiff \phi_f(\vx,y)\land\forall
\varepsilon>0\enskip \exists \delta >0\enskip \forall
hz\enskip\\ \big(|h|\le\delta \land \phi_f(\vx+h\cdot\ve_i,z) \rightarrow
|z-y-w\cdot h|\le \varepsilon |h|\big).
\end{multline*}

\noindent
Formula $\phi_{\partial_if}$ captures the usual concept of partial
derivatives, i.e.,
\begin{equation*}
\partial_if(\vx)\de\lim_{h\rightarrow
  0}\frac{f(\vx+h\cdot\ve_i)-f(\vx)}{h}.
\end{equation*}

We say that $i$-th partial derivative of $f$ exists at $\vz$ iff there
is a $w$ such that $\phi_{\partial_if}(\vz,w)$ holds. Since function
$\partial_if$ defined by formula $\phi_{\partial_if}$ is the same type
as $f$, i.e., $\dom \partial_if\subseteq\dom f$ and $\ran
\partial_if\subseteq \Q$, we can iterate the partial derivations and
define $\partial_{i_1\ldots i_n}f$ as
$\partial_{i_1}\partial_{i_{2}}\ldots\partial_{i_n}f$.

Function $f=\langle f_1,\ldots, f_m\rangle: \Q^k\rightarrow \Q^m$ is
said to be $n$-times \textbf{continuously differentiable} if $\dom f$
is open and all $n$-th partial derivatives of all of its components
(i.e., $\partial_{i_1\ldots i_n}f_1(\vz),\ldots,\partial_{i_1\ldots
  i_n}f_m(\vz)$ for all $0\le i_1\ldots i_n\le k$) exits for all
$\vz\in\dom f$ and they are continuous. This concept can be defined by
a FOL formula since the partial derivatives and the continuity can be
defined in FOL, see \cite[\S 10.2]{Szphd}.

In \ax{GenRel}, we assumed only the differentiability of the worldview
transformations.  We assume stronger differentiability properties for
them by the next axioms:

\begin{description}
\item[\underline{\ax{AxC^n}}] The worldview transformations are
  $n$-times continuously differentiable maps:
\begin{multline*}
\forall m,k\in\Ob\;\big(\w_{mk} \text{ is a function } \\\land  \dom\w_{mk} \text{ is open \footnotemark} \land \forall
\vx\in \dom\w_{mk} \\\bigwedge_{1\le a_1,\ldots,a_n\le d}
\partial_{a_1\ldots a_n}\w_{mk}(\vx) \text{ exists } \text{and }
\partial_{a_1\ldots a_n}\w_{mk}\text{ is continuous.\big)\footnotemark}
\end{multline*}
\addtocounter{footnote}{-1}
\footnotetext{The statement ``definable set $H\subseteq \Q^n$ is open'' can be
  captured by the FOL formula $\forall \vx\in H \;
  \exists \delta >0 \;\forall \vy\;(|\vx-\vy|<\delta \rightarrow \vy\in
  H$).}  
\addtocounter{footnote}{1}\footnotetext{The continuity of definable function $f$ can be captured by
  the following FOL formula $\forall \vx\in\dom f \;
  \forall \varepsilon >0 \; \exists \delta >0 \; \forall \vy \in \dom
  f\;(|\vx-\vy|<\delta \rightarrow |f(\vx)-f(\vy)|<\varepsilon$).} 
\end{description}
\begin{rem}\label{rem-cdiff} Axiom \ax{AxCDiff} is equivalent to \ax{AxC^1} because, if $f$ is a differentiable function from a subset of $\Q^k$ to $\Q^m$, then
\begin{equation*}
[d_{\vx}f]=\begin{bmatrix} \partial_1f_1(\vx)
&\partial_2f_1(\vx)&\ldots&\partial_kf_1(\vx)\\ \partial_1f_2(\vx)
&\partial_2f_2(\vx)&\ldots&\partial_kf_2(\vx)\\ \vdots&\vdots&\ddots&\vdots\\ \partial_1f_m(\vx)
&\partial_2f_m(\vx)&\ldots&\partial_kf_m(\vx)
\end{bmatrix}.
\end{equation*}
\end{rem}

By the following axioms, we can ensure the metric corresponding
observers to be smooth enough.
\begin{description}
\item[\underline{\ax{AxC^n\g_m}}] For all observer $m$, the metric
  $g_m$ is $n$-times continuously differentiable:
\begin{equation*}
\forall m\in\Ob\enskip
\forall \vx\in\dom\g_m\enskip \bigwedge_{1\le a_1,\ldots,a_n\le d}
\partial_{a_1\ldots a_n} \g_m(\vx) \text{ exists and }
\partial_{a_1\ldots a_n} \g_m \text{ is continuous}.
\end{equation*}
\end{description}

For the smooth case, let us introduce \ax{AxC^\infty} as the axiom
schema containing \ax{AxC^n} for all positive integers $n$; and let
\ax{AxC^\infty \g_m} be the axiom schema containing \ax{AxC^n\g_m} for
all positive integers $n$.  Now we can introduce the promised
extensions of \ax{GenRel}:
\begin{equation*}
\ax{GenRel^{n}}\de\ax{GenRel}+\ax{AxC^n}+\ax{AxC^{n-1}\g_m}
\end{equation*}
if $1\le n\le \infty$.  By Remark~\ref{rem-cdiff}, \ax{GenRel^1} is equivalent to \ax{GenRel}.

Let us now introduce the corresponding axioms for $n$-times
continuously differentiable Lorentzian manifolds.
\begin{description}
\item[\underline{\ax{AxC^n\psi}}] The transition maps are $n$-times continuously
  differentiable:
\begin{multline*}
\forall ij \enskip 
\dom\psi_{ij} \text{ is open } \land\forall \vx\in \dom\psi_{ij}
\\\bigwedge_{1\le a_1,\ldots,a_n\le d} \partial_{a_1\ldots
  a_n}\psi_{ij}(\vx) \text{ exists } \text{and } \partial_{a_1\ldots
  a_n}\psi_{ij}\text{ is continuous.}
\end{multline*}
\end{description}

\begin{description}
\item[\underline{\ax{AxC^n\g}}] Metric $g_i$ is $n$-times continuously
  differentiable for all $i$:
  \begin{equation*}
\forall i\enskip
 \forall \vx\in\dom\g_i\enskip
 \bigwedge_{1\le a_1,\ldots,a_n\le d} \partial_{a_1\ldots a_n}
\g_i(\vx) \text{ exists and } \partial_{a_1\ldots a_n} \g_i \text{ is
  continuous}.
\end{equation*}
\end{description}

Let \ax{AxC^\infty\psi} be the axiom schema containing \ax{AxC^n\psi}
for all positive integers $n$; and let \ax{AxC^\infty \g} be the axiom
schema containing \ax{AxC^n\g} for all positive integers $n$.

Now we can introduce the FOL theories for Lorentzian manifolds
corresponding to \ax{GenRel^n}:
\begin{equation*}
\ax{LorMan^{n}}\de \ax{LorMan} + \ax{AxC^{n}\psi}+\ax{AxC^{n-1}\g} 
\end{equation*}
if $1\le n\le \infty$.%; and let $\ax{LorMan^0}\de\ax{LorMan}$.

\begin{thm}\label{thm-c0} Let $d\ge3$ and $1\le n\le\infty$. Then 
\begin{equation*}
 M(\G)\models\ax{LorMan^{n}}\enskip \text{ if }\enskip\G\models\ax{GenRel^n}
\end{equation*}
or equivalently 
\begin{equation*}
M(\G)\in\Mod(\ax{LorMan^{n}})\enskip \text{ if }\enskip \G\in \Mod(\ax{GenRel^n}),
\end{equation*}
 i.e., $M$ maps models of \ax{GenRel^n} to models of \ax{LorMan^n}.
\end{thm}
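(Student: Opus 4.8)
The plan is to reduce the statement to Theorem~\ref{thm-c00} together with a bookkeeping comparison of axioms. Since $\ax{GenRel^n}$ extends $\ax{GenRel}$, Theorem~\ref{thm-c00} already gives $M(\G)\models\ax{LorMan}$, and $\ax{LorMan^n}$ differs from $\ax{LorMan}$ only by the extra axioms $\ax{AxC^n\psi}$ and $\ax{AxC^{n-1}\g}$ (understood, for $n=\infty$, as the schemas $\ax{AxC^\infty\psi}$ and $\ax{AxC^\infty\g}$). So the whole task is to verify $M(\G)\models\ax{AxC^n\psi}$ and $M(\G)\models\ax{AxC^{n-1}\g}$, using the two extra axioms $\ax{AxC^n}$ and $\ax{AxC^{n-1}\g_m}$ that $\G$ satisfies. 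For $n=\infty$ it suffices to do this for each finite $n$, so from here on I would take $n$ finite.

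The first step I would record is a bookkeeping observation: the FOL formulas defining the iterated partial derivatives $\partial_{a_1\ldots a_k}f$, the property ``$\dom f$ is open'', and the property ``$f$ is continuous'' all refer only to the graph of $f$ and to the ordered field $\langle\Q,+,\cdot,\le\rangle$, which is common to $\G$ and $M(\G)$. Hence, for any relation that is literally the same object in $\G$ and in $M(\G)$, the assertion ``$\dom f$ is open and all $k$-th partials of $f$ exist and are continuous'' holds in $M(\G)$ exactly when it holds in $\G$. I would then unwind the construction of $M(\G)$: there $\I=\Ob$, the transition map $\psi_{ij}$ is by construction the worldview transformation $\w_{ij}$, and, since $M(\G)\models\ax{AxFn\g}$ by Theorem~\ref{thm-c00}, the metric $\g_i$ of $M(\G)$ given by~\eqref{eq-g} is a genuine function, namely the $\LG$-definable metric $\g_i(\vx)(\vv,\vw)=\mu\big([d_{\vx}\w_{ik}](\vv),[d_{\vx}\w_{ik}](\vw)\big)$, $k$ any observer with $\W(i,k,\vx)$ --- which is precisely the metric that the axiom $\ax{AxC^{n-1}\g_m}$ speaks about.

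With these in hand the verification is mechanical. Read in $M(\G)$, axiom $\ax{AxC^n\psi}$ asserts for all $i,j\in\Ob$ that $\dom\w_{ij}$ is open and all $n$-th partials of $\w_{ij}$ exist and are continuous; by the bookkeeping step this is the very same assertion about $\w_{ij}$ in $\G$, i.e.\ it is $\ax{AxC^n}$, an axiom of $\ax{GenRel^n}$. Read in $M(\G)$, axiom $\ax{AxC^{n-1}\g}$ asserts for all $i\in\Ob$ that the $(n-1)$-st partials of $\g_i$ exist and are continuous; again this is the same assertion about $\g_m$ in $\G$, i.e.\ $\ax{AxC^{n-1}\g_m}$, an axiom of $\ax{GenRel^n}$. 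Hence $M(\G)\models\ax{AxC^n\psi}+\ax{AxC^{n-1}\g}$, and combined with $M(\G)\models\ax{LorMan}$ this gives $M(\G)\models\ax{LorMan^n}$, as required. I do not expect a serious obstacle: the substantive part --- that in $M(\G)$ the transition maps are indeed the worldview transformations and the metric is indeed the $\LG$-definable $\g_m$ (so that $\ax{AxFn\psi}$ and $\ax{AxFn\g}$ hold) --- is already contained in the proof of Theorem~\ref{thm-c00}. The one point most in need of care is that the ``$n$ derivatives / $n-1$ derivatives'' indexing built into $\ax{LorMan^n}$ corresponds, under the construction $M$, exactly to the indexing built into $\ax{GenRel^n}$; the fact that $\g_i$ is computed from the first derivatives of the $\w_{ij}$ is what accounts for the drop from $n$ to $n-1$, and once this is spelled out there is nothing more to do.
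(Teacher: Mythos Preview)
Your proposal is correct and follows essentially the same route as the paper: the verification that $\ax{AxC^n}$ is the translation of $\ax{AxC^n\psi}\land\ax{AxFn\psi}$ and that $\ax{AxC^{n-1}\g_m}$ is the translation of $\ax{AxC^{n-1}\g}$ is exactly what the paper records. The only organizational difference is that the paper does not invoke Theorem~\ref{thm-c00} as a black box but instead verifies all of the $\ax{LorMan}$ axioms together with the extra smoothness axioms in one unified argument (indeed, Theorem~\ref{thm-c00} receives no separate proof in the paper and is recovered as the case $n=1$ of Theorem~\ref{thm-c0}).
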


Theorem~\ref{thm-c0} implies the following completeness of
\ax{GenRel}, where $\vdash$ denotes the usual relation of FOL
deducibility.
\begin{cor}\label{cor-c0} Let $d\ge3$ and $1\le n\le\infty$.
Let $\varphi\in \Fm(\LM)$. Then
\begin{equation*}\ax{GenRel^{n}}\vdash
Tr(\varphi)\enskip\text{ if }\enskip \M\models \varphi\enskip\text{ for all }\enskip\M\in\Mod(\ax{LorMan^{n}}).
\end{equation*}
\end{cor}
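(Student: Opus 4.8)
The plan is to deduce Corollary~\ref{cor-c0} from Theorem~\ref{thm-c0} together with the G\"odel completeness theorem for first-order logic, the bridge being the usual interpretation lemma for the translation $Tr$: for every model $\G\models\ax{GenRel^{n}}$ and every $\varphi\in\Fm(\LM)$ (and every assignment sending quantity variables into $\Q$ and index variables into $\Ob$), one has $M(\G)\models\varphi$ if and only if $\G\models Tr(\varphi)$. Granting this lemma the corollary is short: assume $\M\models\varphi$ for all $\M\in\Mod(\ax{LorMan^{n}})$; let $\G$ be an arbitrary model of $\ax{GenRel^{n}}$; by Theorem~\ref{thm-c0}, $M(\G)\in\Mod(\ax{LorMan^{n}})$, hence $M(\G)\models\varphi$ (its universal closure, if $\varphi$ has free variables), hence $\G\models Tr(\varphi)$ by the lemma; since $\G$ was arbitrary, $\ax{GenRel^{n}}\models Tr(\varphi)$, and therefore $\ax{GenRel^{n}}\vdash Tr(\varphi)$ by completeness. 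The case $n=\infty$ needs no change, since FOL completeness applies to arbitrary (possibly infinite) theories and $\ax{GenRel^{\infty}}$, $\ax{LorMan^{\infty}}$ are just such theories.

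First I would prove the interpretation lemma by induction on the construction of $\varphi$. The three arithmetic atomic formulas are trivial, because $\langle\Q,+,\cdot,\le\rangle$ is literally the same in $\G$ and $M(\G)$ and $Tr$ acts as the identity on them. For the atomic formula $\psi(i,j,\vx,\vy)$ one checks that $Tr(\psi(i,j,\vx,\vy))=\forall b\,[\W(i,b,\vx)\leftrightarrow\W(j,b,\vy)]$ defines in $\G$ exactly the transition relation of $M(\G)$, where by construction $\psi_{mk}=\w_{mk}$; for $\g(i,\vx,\vv,\vw,a)$ one matches $Tr(\g(\dots))$ against the defining formula~\eqref{eq-g} of $\g$ in $M(\G)$, which in particular means checking that this formula is well posed, i.e., that $\mu\big([d_{\vx}\w_{ik}]\vv,[d_{\vx}\w_{ik}]\vw\big)$ does not depend on which observer $k$ is met at $\vx$ --- and this independence is exactly what the tangent-space construction delivers ($\ax{AxCom\g}$ and Proposition~\ref{prop-approx}, using $\ax{AxSymt^-}$ and $\ax{AxC^0\g_m}$ where needed). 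The Boolean cases are immediate, and the quantifier cases use that the index sort of $M(\G)$ is $\Ob$: quantifiers $\forall i$, $\exists i$ over $\I$ pass to the relativised $\forall i\,[\Ob(i)\to Tr(\phi)]$, $\exists i\,[\Ob(i)\land Tr(\phi)]$ in $\G$, while quantifiers over $\Q$ pass unchanged; that $\Ob$ is nonempty, so that $M(\G)$ is a genuine $\LM$-structure, comes from $\ax{AxThExp^-_{00}}$.

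The step I expect to be the real obstacle is precisely this atomic base case, and within it the matching of $Tr(\psi(i,j,\vx,\vy))$ with the relation $\psi_{mk}=\w_{mk}$ actually placed on $M(\G)$: since a pair belongs to $\w_{ij}$ only when $\ev_i(\vx)=\ev_j(\vy)$ is \emph{nonempty}, one must verify that this nonemptiness proviso is harmless for the lemma --- either because $\ax{AxFn\psi}$ pins $\vy$ down once $\vx$ lies in the domain, or because coordinate points with empty event lie outside $\dom\psi_{ii}$ and hence do not affect the manifold $\Mn$, its charts, or any relation of $M(\G)$ --- together with the analogous check that $\dom\g_i$ in $M(\G)$ is the set captured by the existential quantifier over $k$ in $Tr(\g(\dots))$. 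Once these compatibilities are in hand, the induction goes through and the corollary follows from Theorem~\ref{thm-c0} and completeness. (One could also argue purely proof-theoretically: from $\M\models\varphi$ for all $\M\in\Mod(\ax{LorMan^{n}})$ get $\ax{LorMan^{n}}\vdash\varphi$ by completeness, then use that $Tr$ commutes with the FOL proof calculus and that $\ax{GenRel^{n}}\vdash Tr(\sigma)$ for each axiom $\sigma$ of $\ax{LorMan^{n}}$ --- the syntactic content of Theorem~\ref{thm-c0} --- to conclude $\ax{GenRel^{n}}\vdash Tr(\varphi)$; on either route the content beyond Theorem~\ref{thm-c0} is bookkeeping about domains and empty events.)
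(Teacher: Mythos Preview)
Your proposal is correct and matches the paper's approach: the paper gives no separate proof of the corollary, treating it as immediate from Theorem~\ref{thm-c0} together with the interpretation lemma $M(\G)\models\varphi\iff\G\models Tr(\varphi)$, which the paper establishes by the same formula induction you outline (inside the proof of Theorem~\ref{thm-c0}, in the verification of \ax{CONT_\LM}). Your attention to the nonemptiness clause in $\w_{mk}$ versus the translation $Tr(\psi(i,j,\vx,\vy))$ is a genuine scruple the paper glosses over, but it does not derail the argument.
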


%%%%%%%%%%%%%%%%%%%%%%%%%%%%%%%%%%%%%%%%%%
\section{Proof of Theorem~\ref{thm-c0}}\label{sec-proof}
%%%%%%%%%%%%%%%%%%%%%%%%%%%%%%%%%%%%%%%%%%

In this section, we are going to prove our main result
Theorem~\ref{thm-c0} and some earlier used statements. To do so, let
us first give a detailed introduction of axiom schema \ax{CONT_\Log}.

Let $\Log$ be a many sorted language containing sort $\Q$ and a binary
relation $\le$ on $\Q$. Let $\Fm(\Log)$ be the set of FOL formulas of $\Log$.

\label{p-detailedcont}
To introduce a \ax{CONT_\Log} precisely, we have to introduce some
notations.  Let $\M$ be a model of language $\Log$ and
$\varphi\in\Fm(\Log)$. Let $U$ be the union of the sorts of $\M$.  We
use $\M\models \varphi$ in the usual sense of mathematical logic to
denote that formula $\varphi$ is valid in the structure $\M$ and
$\M\models \varphi[a_1,\ldots,a_n]$ to denote that $a_1,\ldots,a_n\in
U$ satisfies $\varphi$ in $\mathfrak{M}$.  We say that a subset $H$ of
$\Q$ is \textbf{(parametrically) $\Log$-definable by} $\varphi$ iff
there are $a_1,\ldots,a_n\in U$ such that
\begin{equation*}
H=\Setopen d\in\Q\setmid\mathfrak{M}\models\varphi[d,a_1,\ldots,a_n]
\Setclose. 
\end{equation*}
We say that a subset of $\Q$ is \textbf{$\Log$-definable} iff it is
definable by a formula of $\Log$. More generally, an $n$-ary relation
$R\subseteq \Q^n$ is said to be \textbf{$\Log$-definable} in
$\mathfrak{M}$ by parameters iff there is a formula
$\varphi\in\Fm(\Log)$ with only free variables $x_1,\ldots,x_n,
y_1,\ldots,y_{k}$ and there are $a_{1},\ldots,a_{k}\in U$ such that
\begin{equation*}
R=\Setopen\langle p_1,\ldots, p_n\rangle\in \Q^n : \mathfrak{M}\models \varphi[p_1,\ldots,p_n,a_1,\ldots,a_k]\Setclose.
\end{equation*}
By the next axiom, for all formulas $\varphi\in\Fm(\Log)$ defining a
subset of the quantities, we introduce an axiom postulating the
existence of the supremum of the defined set if it is not empty and
bounded.
\begin{description}
\item[\ax{AxSup^{\Log}_\varphi}] 
Every subset of $\Q$ definable by $\varphi$ (when using $y_1,\ldots,y_n$ as fixed parameters) has a supremum if it is nonempty and bounded:
\begin{equation*}
\forall y_1,\ldots,y_n\enskip\big(\exists x\enskip\varphi\big)\land \big[\big(\exists b\enskip \forall x\enskip\varphi\rightarrow x\le b\big) \rightarrow
\big(\exists s\enskip \forall b\enskip [\forall x\enskip
\varphi\rightarrow x\le b] \leftrightarrow s\le b\big)\big],
\end{equation*}
\end{description}
where $x$ is a variable of sort $\Q$.
Now we can introduce \ax{CONT_\Log} at the following axiom schema:
\begin{equation*} 
\ax{CONT_\Log} \de \Setopen \ax{AxSup_\varphi}\setmid
\varphi \text{ is a FOL formula of language }\Log \Setclose.
\end{equation*}
Let us note that \ax{CONT_\Log} is true in any model whose structure of
quantities is the field of real numbers.

Let us also recall here the definition of Lorentz transformation.  A
linear transformation $L$ is called \textbf{Lorentz transformation}
iff it preserves the Minkowski metric $\mu$, i.e.,
$\mu(\vv,\vw)=\mu\big(L(\vv),L(\vw)\big)$ for all $\vv,\vw\in\Q^d$.

Theorem~\ref{thm-lordf} states that \ax{GenRel} implies that the
derivatives of the worldview transformations between observers at the
events of meeting are Lorentz transformations.

\begin{thm}\label{thm-lordf} Let $d\ge3$. Assume
\ax{GenRel}. Let $ \forall m,k\in\Ob$ and
$\vx\in\wl_m(k)\cap\wl_m(m)$. Then $\w_{mk}$ is differentiable at
$\vx$ and $[d_{\vx}\w_{mk}]$ is a Lorentz transformation.
\end{thm}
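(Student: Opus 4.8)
The plan is to isolate the two pieces of information that the meeting hypothesis $\vx\in\wl_m(k)\cap\wl_m(m)$ gives us and feed them into the three metric-related axioms. First I would establish differentiability: since $\vx\in\wl_m(k)$ and $\vx\in\wl_m(m)$ witness a common nonempty event $\ev_m(\vx)$ seen by both $m$ and $k$ (using \ax{AxEv^-}), the point $\vx$ lies in $\dom\w_{mk}$, so by \ax{AxCDiff} the transformation $\w_{mk}$ has an affine approximation at $\vx$, i.e.\ $[d_{\vx}\w_{mk}]$ exists; it is a \emph{linear} map rather than merely affine because $\w_{mk}$ fixes the event $\vx$ and, as one checks from \ax{AxSelf^-} together with $\vx\in\wl_m(m)\cap\wl_m(k)$, $\w_{mk}$ maps $\vx$ on the time-axis to a point on the time-axis, which after a routine normalization one may take to be the origin (alternatively one works with the linear part $[d_{\vx}\w_{mk}]$ directly and never needs $\w_{mk}(\vx)=\vo$). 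Denote $L\de [d_{\vx}\w_{mk}]$.

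Next I would show $L$ preserves $\mu$. The key observations are: (i) by \ax{AxPh^-}, every photon through $\vx$ has speed $1$ at $\vx$ in $m$'s worldview, and every unit direction is realized by such a photon; symmetrically the same holds in $k$'s worldview at the point $\w_{mk}(\vx)$; (ii) the worldline of a photon through the meeting event is mapped by $\w_{mk}$ to the worldline of that same photon in $k$'s coordinates (this is the content of $\w_{mk}$ being the worldview transformation, via \ax{AxEv^-}), and differentiating at $\vx$ via the chain rule (Theorem~\ref{thm-cr}) shows $L$ sends the photon's $m$-velocity to its $k$-velocity. Combining (i) and (ii): $L$ maps the set of lightlike vectors $\{\vv : |\vv_s|=|v_t|,\ v_t>0\}$ onto the corresponding future lightlike set (surjectivity uses the ``it is possible to send a photon in every direction'' clause of \ax{AxPh^-} applied to $k$). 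A linear bijection of $\Q^d$ that carries the light cone $\mu(\vv,\vv)=0$ onto itself must satisfy $\mu(L\vv,L\vw)=\lambda\cdot\mu(\vv,\vw)$ for a single nonzero scalar $\lambda$ — this is a standard linear-algebra fact provable over a Euclidean field, the ``cone rigidity'' lemma, and I would either cite it or sketch it by polarization from the fact that $\mu(L\vv,L\vv)=0\iff\mu(\vv,\vv)=0$ plus the signature of $\mu$. So it remains to prove $\lambda=1$.

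Finally I would pin down $\lambda=1$ using symmetry. The scalar $\lambda$ is exactly the factor by which $L=[d_{\vx}\w_{mk}]$ rescales the Minkowski form; restricted to the time-axis direction it measures the clock rate, so $\lambda$ relates to $\cl_{mk}'(x_t)^2$ (or its reciprocal) — more precisely, $\mu(L\ve_d,L\ve_d)=\lambda$ and $L\ve_d$ is, up to a spatial part that one shows vanishes at a meeting because $\vx\in\wl_m(m)\cap\wl_m(k)$ forces the relative velocity data to be symmetric, the vector $\cl_{mk}'(x_t)\cdot\ve_d$, giving $\lambda=\cl_{mk}'(x_t)^2$. By the same computation with $m$ and $k$ swapped, the inverse transformation $L^{-1}=[d_{\w_{mk}(\vx)}\w_{km}]$ rescales $\mu$ by $\cl_{km}'(y_t)^2$ where $\vy=\w_{mk}(\vx)$; but $L^{-1}$ rescales by $1/\lambda$, so $\cl_{mk}'(x_t)^2\cdot\cl_{km}'(y_t)^2=1$. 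Now \ax{AxSymT^-} gives $\cl_{mk}'(x_t)=\cl_{km}'(y_t)$ at the meeting event, whence $\cl_{mk}'(x_t)^2=1$ and thus $\lambda=1$, i.e.\ $\mu(L\vv,L\vw)=\mu(\vv,\vw)$ for all $\vv,\vw$, so $L$ is a Lorentz transformation. (The hypothesis $d\ge3$ enters in the cone-rigidity step, where in dimension $2$ the light cone degenerates into two lines and does not determine $\mu$ up to scale.)

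The main obstacle I expect is the bookkeeping in the middle step: carefully justifying that $[d_{\vx}\w_{mk}]$ sends the full future light cone \emph{onto} the future light cone (both the ``into'' direction from photons existing in $m$'s worldview and the ``onto'' direction from \ax{AxPh^-} applied to $k$, routed through \ax{AxEv^-} so that it is the \emph{same} photons on both sides), and then invoking the cone-rigidity lemma in the correct form over an arbitrary Euclidean field rather than $\Reals$. The axiom \ax{AxC^0\g_m} is a continuity condition and, somewhat surprisingly, is \emph{not} needed for this pointwise statement — only \ax{AxPh^-}, \ax{AxEv^-}, \ax{AxSelf^-}, \ax{AxCDiff}, \ax{AxSymT^-}, and \ax{AxEField} are used — which is worth flagging in the write-up.
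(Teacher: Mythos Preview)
Your overall strategy matches the paper's: establish differentiability via \ax{AxEv^-} and \ax{AxCDiff} (Lemma~\ref{lem-df0}), show $L=[d_{\vx}\w_{mk}]$ preserves the light cone via \ax{AxPh^-} (Lemma~\ref{lem-ll}), deduce from cone rigidity (Lemma~\ref{lem-llin}, which is where $d\ge3$ enters) that $L$ is a Lorentz transformation composed with a dilation, and then kill the dilation with \ax{AxSymt^-}. The first three steps are fine in outline. The gap is in the last step.

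Your claim that $L\ve_d=\cl_{mk}'(x_t)\cdot\ve_d$ ``up to a spatial part that vanishes at a meeting'' is false: meeting observers generically have nonzero relative velocity, so $L\ve_d$ has a nonzero space component. Concretely, write $L=L_0\comp D_c$ with $L_0$ Lorentz and $D_c$ a dilation by $c$, so $\lambda=c^2$. The actual link between clock rates and $L$ (worked out in Lemma~\ref{lem-sym}) is
\[
\cl_{mk}'(x_t)=\frac{1}{(L^{-1}\ve_d)_t}=\frac{c}{\gamma},\qquad
\cl_{km}'(y_t)=\frac{1}{(L\ve_d)_t}=\frac{1}{c\gamma},
\]
where $\gamma=(L_0\ve_d)_t=(L_0^{-1}\ve_d)_t$. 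Thus your equation $\lambda=\cl_{mk}'(x_t)^2$ would read $c^2=c^2/\gamma^2$, and your product relation $\cl_{mk}'(x_t)^2\cdot\cl_{km}'(y_t)^2=1$ would read $1/\gamma^4=1$; both fail whenever $\gamma\neq1$. The paper instead observes that \ax{AxSymt^-} gives exactly $(L\ve_d)_t=(L^{-1}\ve_d)_t$ (the ``sym-time property''); since every Lorentz $L_0$ already satisfies this, one obtains $c\gamma=\gamma/c$, hence $c^2=1$. So the repair is mechanical once you use the correct identity relating $\cl_{mk}'$, $\cl_{km}'$ to the \emph{time components} of $L^{-1}\ve_d$ and $L\ve_d$, but as written your final step does not go through.
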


Here we are going to prove Theorem~\ref{thm-lordf}.  
To do so, first we introduce
some definitions and lemmas we will use in the proof.

\begin{lemma}\label{lem-df0}
Assume \ax{AxEField}, \ax{AxEv^-}, and \ax{AxCDiff}. Let $m,k\in\Ob$ and
$\vx\in\wl_m(k)$. Then $\w_{mk}$ is a function differentiable at
$\vx$.
\end{lemma}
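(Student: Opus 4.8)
The plan is to observe that the statement is almost immediate from \ax{AxCDiff} once we know that $\vx$ lies in $\dom\w_{mk}$; the only genuine work is to derive membership in $\dom\w_{mk}$ from the hypothesis $\vx\in\wl_m(k)$, and this is exactly the role of \ax{AxEv^-}.

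First I would unpack the hypothesis. Since $\vx\in\wl_m(k)$ means $\W(m,k,\vx)$, we have $k\in\ev_m(\vx)$, so in particular $\ev_m(\vx)\neq\emptyset$. As $\Ob(k)$ and $\W(m,k,\vx)$ hold, \ax{AxEv^-} supplies a coordinate point $\vy$ with $\ev_m(\vx)=\ev_k(\vy)$; combined with $\ev_m(\vx)\neq\emptyset$ this gives $\ev_m(\vx)=\ev_k(\vy)\neq\emptyset$, which is precisely $\w_{mk}(\vx,\vy)$ by the definition of the worldview transformation. Hence $\vx\in\dom\w_{mk}$.

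Next I would invoke \ax{AxCDiff}: its first clause says that $\w_{mk}$ is a function, and its second clause says that for every point of $\dom\w_{mk}$ — in particular for our $\vx$ — there is an affine map $A_{\vx}$ with $\w_{mk}\sim_{\vx}A_{\vx}$. By the definition of the derivative $[d_{\vx}f]$, this is exactly what it means for $\w_{mk}$ to be differentiable at $\vx$, with $[d_{\vx}\w_{mk}](\vz)=A_{\vx}(\vz+\vx)-A_{\vx}(\vx)$; and by Remark~\ref{rem-conv} (whose facts follow from \ax{AxEField}) the affine approximation, hence the derivative, is unique. This completes the argument.

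I do not expect any serious obstacle: the lemma is essentially a bookkeeping fact — it records that lying on the worldline $\wl_m(k)$ already forces membership in $\dom\w_{mk}$ — and will be used downstream (e.g.\ in the proof of Theorem~\ref{thm-lordf}) to justify applying the differentiability clause of \ax{AxCDiff}. The one point requiring a little care is that the definition of $\w_{mk}$ carries a nonemptiness condition, which is why \ax{AxEv^-} is needed and \ax{AxCDiff} alone would not suffice.
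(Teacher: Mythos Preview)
Your proof is correct and matches the paper's own argument essentially step for step: use $k\in\ev_m(\vx)$ for nonemptiness, apply \ax{AxEv^-} to get $\vy$ with $\ev_m(\vx)=\ev_k(\vy)$, conclude $\vx\in\dom\w_{mk}$, and then invoke \ax{AxCDiff}. The extra remarks on the uniqueness of the affine approximation are fine but not needed for the lemma as stated.
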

\begin{proof} Since $\vx\in \wl_m(k)$, there is a $\vy$ such
that $\ev_m(\vx)=\ev_k(\vy)$ by \ax{AxEv^-}. We have that
$\ev_m(\vx)\neq\emptyset$ since $k\in\ev_m(\vx)$. Hence
$\vx\in\dom\w_{mk}$. So by \ax{AxCDiff},
$\w_{mk}$ is a function differentiable at $\vx$.
\end{proof}

Let us recall here that the chain rule of real analysis can be proved
using axiom \ax{AxEField} only, see \cite[\S 10.3]{Szphd}.

\begin{thm}[chain rule]\label{thm-cr}
Assume \ax{AxEField}.  Let $g:\Q^n \rightarrow \Q^m$ and
$f:\Q^m\rightarrow \Q^k$.  If $g$ is differentiable at $\vx\in \Q^n$
and $f$ is differentiable at $g(\vx)$, then $g\comp f$ is differentiable at
$\vx$ and its derivative is $[d_{\vx}g]\comp [d_{g(\vx)}f]$, i.e.,
\begin{equation*} 
[d_{\vx}(g \comp f)] = [d_{\vx}g] \comp [d_{g(\vx)}f].
\end{equation*}
In particular, if $g: \Q\rightarrow \Q^m$, and $g$ is differentiable
at $x\in\Q$ and $f$ is differentiable at $g(x)$, then
\begin{equation*} 
(g\comp f)'(x)=[d_{g(x)}f]\big(g'(x)\big).
\end{equation*} 
\end{thm}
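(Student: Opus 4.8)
The plan is to exhibit an affine map that approximates $g\comp f$ at $\vx$ and then read off its linear part. Since $g$ is differentiable at $\vx$, there is an affine map $A$ with $g\sim_{\vx}A$; since $f$ is differentiable at $\vu\de g(\vx)$, there is an affine map $B$ with $f\sim_{\vu}B$. By Remark~\ref{rem-conv} these approximating affine maps are unique, $A(\vx)=g(\vx)=\vu$, and $B(\vu)=f(\vu)$. Writing $L_A$ and $L_B$ for the linear parts of $A$ and $B$, the definition of derivative gives $[d_{\vx}g]=L_A$ and $[d_{g(\vx)}f]=L_B$. The composite $A\comp B$ is again affine, and computing its linear part (apply $L_A$, then $L_B$) yields $L_A\comp L_B=[d_{\vx}g]\comp[d_{g(\vx)}f]$. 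Hence, once I show $g\comp f\sim_{\vx}A\comp B$, the definition of the derivative together with the uniqueness in Remark~\ref{rem-conv} gives both differentiability of $g\comp f$ at $\vx$ and the identity $[d_{\vx}(g\comp f)]=[d_{\vx}g]\comp[d_{g(\vx)}f]$.

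So the whole proof reduces to verifying $g\comp f\sim_{\vx}A\comp B$. Fix $\varepsilon>0$; I must produce $\delta>0$ so that $|\vy-\vx|\le\delta$ forces $\vy\in\dom(g\comp f)$ and $|f(g(\vy))-B(A(\vy))|\le\varepsilon|\vy-\vx|$. The core is the triangle-inequality split through the intermediate point $B(g(\vy))$:
\begin{equation*}
|f(g(\vy))-B(A(\vy))|\le|f(g(\vy))-B(g(\vy))|+|B(g(\vy))-B(A(\vy))|.
\end{equation*}
The second summand equals $|L_B(g(\vy)-A(\vy))|$ since $B$ is affine, and \ax{AxEField} guarantees a constant $c_B$ with $|L_B\vz|\le c_B|\vz|$ for all $\vz$ (each component of $L_B\vz$ is a fixed linear combination of the components of $\vz$, and the Euclidean length is then bounded using square roots); so this term is at most $c_B\,|g(\vy)-A(\vy)|$, which $g\sim_{\vx}A$ makes $\le c_B\varepsilon_1|\vy-\vx|$ once $|\vy-\vx|$ is small. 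The first summand is handled by applying $f\sim_{\vu}B$ \emph{at the moving point} $g(\vy)$, giving $\le\varepsilon_2|g(\vy)-\vu|$.

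The main obstacle is precisely this last step: $f\sim_{\vu}B$ controls $f-B$ only near $\vu$, so I must know that $g(\vy)$ stays in that neighborhood and, crucially, that $|g(\vy)-\vu|$ is bounded by a constant multiple of $|\vy-\vx|$ (otherwise the factor $|g(\vy)-\vu|$ does not combine into an estimate of the form $(\text{const})\cdot|\vy-\vx|$). This Lipschitz-type control of the inner map near $\vx$ is the heart of the argument and comes from differentiability of $g$:
\begin{equation*}
|g(\vy)-\vu|=|g(\vy)-g(\vx)|\le|g(\vy)-A(\vy)|+|A(\vy)-A(\vx)|\le\big(\varepsilon_1+c_A\big)|\vy-\vx|,
\end{equation*}
using $A(\vy)-A(\vx)=L_A(\vy-\vx)$ and a bound $c_A$ on $L_A$ exactly as above. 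Feeding this back, the total estimate is at most $\big[\varepsilon_2(\varepsilon_1+c_A)+c_B\varepsilon_1\big]|\vy-\vx|$; choosing $\varepsilon_1\le 1$ and $\varepsilon_2$ small relative to $\varepsilon,c_A,c_B$ (say $\varepsilon_1(c_B+1)\le\varepsilon/2$ and $\varepsilon_2(1+c_A)\le\varepsilon/2$) makes it $\le\varepsilon|\vy-\vx|$.

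It remains to pin down $\delta$ and the domains. I take $\delta$ to be the minimum of the radius coming from $g\sim_{\vx}A$ at precision $\varepsilon_1$ and the radius that, via the displayed bound, forces $|g(\vy)-\vu|$ below the radius coming from $f\sim_{\vu}B$ at precision $\varepsilon_2$. This same choice simultaneously puts $\vy\in\dom g$ (from $g\sim_{\vx}A$) and $g(\vy)\in\dom f$ (from $f\sim_{\vu}B$ applied at $g(\vy)$), i.e.\ $\vy\in\dom(g\comp f)$, so the approximation is genuine and not vacuous. Finally, the ``in particular'' unary case is immediate from the general statement and the relation $h'(x)=[d_xh](1)$: for $g:\Q\to\Q^m$ one gets $(g\comp f)'(x)=[d_x(g\comp f)](1)=\big([d_xg]\comp[d_{g(x)}f]\big)(1)=[d_{g(x)}f]\big([d_xg](1)\big)=[d_{g(x)}f]\big(g'(x)\big)$.
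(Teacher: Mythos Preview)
Your argument is correct and is the standard $\varepsilon$--$\delta$ proof of the chain rule, carried out carefully enough to live inside \ax{AxEField} (the only nontrivial ingredients being the operator-norm bounds $c_A,c_B$ on the linear parts, which you justify via square roots). The paper does not actually supply a proof of Theorem~\ref{thm-cr}; it only states the result and refers to \cite[\S 10.3]{Szphd}, so there is no in-paper argument to compare against. Your write-up is a legitimate self-contained replacement for that external reference.
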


\begin{cor}\label{cor-invdf}
Assume \ax{AxEField}. Let $f:\Q^n\rightarrow\Q^n$ be an injective
function such that $f^{-1}$ is differentiable at $\vx$ and $f$ is
differentiable at $f^{-1}(\vx)$. Then
\begin{equation*}
[d_{\vx}f^{-1}]=[d_{f^{-1}(\vx)}f]^{-1}.
\end{equation*}
In particular, if $n=1$,
\begin{equation*}
(f^{-1})'(x)=\frac{1}{f'\big(f^{-1}(x)\big)}.
\end{equation*}
\end{cor}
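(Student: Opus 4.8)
The plan is to obtain this as a direct consequence of the chain rule (Theorem~\ref{thm-cr}), by composing $f$ with $f^{-1}$ in both orders. First I would sort out the domains. Since $f^{-1}$ is differentiable at $\vx$, by Remark~\ref{rem-conv} it is defined (and continuous) on an open neighborhood $U$ of $\vx$; shrinking $U$ if necessary — using continuity of $f^{-1}$ at $\vx$ — we may assume $f^{-1}[U]$ is contained in the open neighborhood of $f^{-1}(\vx)$ on which $f$ is defined. For $\vy\in U$ we have $\vy\in\dom f^{-1}=\ran f$, so $f(f^{-1}(\vy))=\vy$ by injectivity of $f$; hence, recalling the convention $(g\comp f)(\vy)=f(g(\vy))$, the map $f^{-1}\comp f$ agrees with $\Id_{\Q^n}$ on $U$. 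Symmetrically, since $f$ is injective and defined on a neighborhood of $f^{-1}(\vx)$, the map $f\comp f^{-1}$ agrees with the identity on a neighborhood of $f^{-1}(\vx)$ (here $f^{-1}(f(\vy))=\vy$ for all $\vy\in\dom f$, and $f(\vy)\in\ran f=\dom f^{-1}$ automatically).

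Next I would apply Theorem~\ref{thm-cr} twice. Applied to $g\de f^{-1}$ (differentiable at $\vx$) and $f$ (differentiable at $g(\vx)=f^{-1}(\vx)$), it gives that $f^{-1}\comp f$ is differentiable at $\vx$ with $[d_{\vx}(f^{-1}\comp f)]=[d_{\vx}f^{-1}]\comp[d_{f^{-1}(\vx)}f]$. But $f^{-1}\comp f$ coincides with $\Id_{\Q^n}$ on a neighborhood of $\vx$, and approximation at a point is local while the identity is an affine map approximating itself; so by Remark~\ref{rem-conv} the left-hand side equals $\Id_{\Q^n}$, i.e. $[d_{\vx}f^{-1}]\comp[d_{f^{-1}(\vx)}f]=\Id_{\Q^n}$. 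Running the same argument with the roles of $f$ and $f^{-1}$ swapped (and using $f(f^{-1}(\vx))=\vx$) yields $[d_{f^{-1}(\vx)}f]\comp[d_{\vx}f^{-1}]=\Id_{\Q^n}$. Since $[d_{\vx}f^{-1}]$ and $[d_{f^{-1}(\vx)}f]$ are linear maps $\Q^n\to\Q^n$ whose composite in either order is the identity, each is the two-sided inverse of the other, i.e. $[d_{\vx}f^{-1}]=[d_{f^{-1}(\vx)}f]^{-1}$.

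For the case $n=1$ I would simply specialize, using the identity $[d_xg](t)=t\cdot g'(x)$ recorded in the text: the relation $[d_xf^{-1}]\comp[d_{f^{-1}(x)}f]=\Id_{\Q}$ says $t\cdot(f^{-1})'(x)\cdot f'\big(f^{-1}(x)\big)=t$ for all $t\in\Q$, whence $f'\big(f^{-1}(x)\big)\neq 0$ and $(f^{-1})'(x)=1/f'\big(f^{-1}(x)\big)$. I do not expect a genuine obstacle here, since everything reduces to the chain rule; the only point that needs a little care is the domain bookkeeping — verifying that the two compositions really equal the identity on \emph{open} neighborhoods of the relevant points, so that both Theorem~\ref{thm-cr} and the uniqueness of affine approximations from Remark~\ref{rem-conv} legitimately apply — together with keeping the composition-order convention $(g\comp f)(x)=f(g(x))$ straight throughout. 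As with the chain rule and Remark~\ref{rem-conv}, the whole argument uses only \ax{AxEField}.
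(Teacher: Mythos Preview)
Your proposal is correct and matches the paper's intent: the paper states this as a corollary immediately following the chain rule (Theorem~\ref{thm-cr}) with no separate proof, and your argument is precisely the standard derivation from the chain rule, carefully handling the domain bookkeeping and the paper's composition convention. One small economy: since $[d_{\vx}f^{-1}]$ and $[d_{f^{-1}(\vx)}f]$ are linear maps $\Q^n\to\Q^n$ over a field, a one-sided inverse is automatically two-sided, so a single application of the chain rule already suffices---but doing both directions does no harm.
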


\begin{lemma}\label{lem-df}
Assume \ax{AxEField}, \ax{AxEv^-}, and \ax{AxCDiff}. Let $m,k\in\Ob$ and
$\vx\in\wl_m(k)\cap\wl_m(m)$.  Then $[d_{\vx}\w_{mk}]$ is invertible and $[d_{\vx}\w_{mk}]^{-1}=[d_{\vy}\w_{km}]$, where  $\vy=\w_{mk}(\vx)$.
\end{lemma}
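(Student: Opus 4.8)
The plan is to recognize $\w_{km}$ as the relational inverse of $\w_{mk}$ and then to feed this into the inverse-function rule, Corollary~\ref{cor-invdf}.

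First I would note that, straight from the definition $\w_{mk}(\vx,\vy)\defiff\ev_m(\vx)=\ev_k(\vy)\neq\emptyset$, one has $\w_{km}=\w_{mk}^{-1}$ as binary relations, since the defining condition is symmetric under simultaneously swapping $m\leftrightarrow k$ and the two coordinate arguments. By \ax{AxCDiff} both $\w_{mk}$ and $\w_{km}$ are functions, so $\w_{mk}$ is an \emph{injective} function whose inverse is exactly the function $\w_{km}$.

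Next I would pin down the image point. From $\vx\in\wl_m(k)\cap\wl_m(m)$ we get $\W(m,k,\vx)$ and $\W(m,m,\vx)$, hence $k,m\in\ev_m(\vx)$; in particular $\ev_m(\vx)\neq\emptyset$, so $\vx\in\dom\w_{mk}$ and $\vy\de\w_{mk}(\vx)$ satisfies $\ev_k(\vy)=\ev_m(\vx)$. Therefore $k,m\in\ev_k(\vy)$, i.e.\ $\vy\in\wl_k(m)\cap\wl_k(k)$; in particular $\vy\in\wl_k(m)$, and note $\w_{km}(\vy)=\vx$ because $\w_{km}=\w_{mk}^{-1}$. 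Now Lemma~\ref{lem-df0} gives that $\w_{mk}$ is differentiable at $\vx$ (using $\vx\in\wl_m(k)$), and the same lemma with the roles of $m$ and $k$ interchanged gives that $\w_{km}$ is differentiable at $\vy$ (using $\vy\in\wl_k(m)$).

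Finally I would apply Corollary~\ref{cor-invdf} with $f\de\w_{mk}$ (a partial map with open domain, which is all the corollary needs) and base point $\vy$: $f$ is injective, $f^{-1}=\w_{km}$ is differentiable at $\vy$, and $f$ is differentiable at $f^{-1}(\vy)=\vx$. The corollary then yields that $[d_{\vx}\w_{mk}]$ is invertible and $[d_{\vy}\w_{km}]=[d_{\vx}\w_{mk}]^{-1}$, which is the assertion. The only step requiring a moment's thought is transferring differentiability to the image point $\vy$, i.e.\ checking $\vy\in\wl_k(m)$ via the symmetry of events $\ev_m(\vx)=\ev_k(\vy)$; everything else is a direct application of the inverse-function rule, so there is no real obstacle here.
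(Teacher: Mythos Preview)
Your proof is correct and follows essentially the same route as the paper: identify $\w_{km}$ as the relational inverse of $\w_{mk}$, use \ax{AxCDiff} (via Lemma~\ref{lem-df0}) to get that both are functions differentiable at $\vx$ and $\vy$ respectively, and then apply Corollary~\ref{cor-invdf}. The only cosmetic difference is that the paper argues $\vy\in\dom\w_{km}$ directly from $\ev_m(\vx)=\ev_k(\vy)\neq\emptyset$, whereas you go through $\vy\in\wl_k(m)$ and reinvoke Lemma~\ref{lem-df0}; both are fine.
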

\begin{proof} By \ax{AxCDiff}, $\w_{mk}$ and $\w_{km}$ are 
differentiable functions. Since $\vx\in \wl_m(k)$, there is a $\vy$
such that $\ev_m(\vx)=\ev_k(\vy)$ by \ax{AxEv^-}. We have that
$\ev_m(\vx)\neq\emptyset$ since $m,k\in\ev_m(\vx)$. Hence
$\vx\in\dom\w_{mk}$, $\vy=\w_{mk}(\vx)$ and $\vy\in\dom\w_{km}$. Thus
$\w_{mk}$ is differentiable at $\vx$ and $\w_{km}$ is differentiable
at $\vy$. Since $\w_{km}$ is the inverse of $\w_{mk}$ by definition,
 $[d_{\vx}\w_{mk}]$ is invertible and its inverse is
$[d_{\vy}\w_{km}]$ by Corollary \ref{cor-invdf}.
\end{proof}

The \textbf{restriction} of function $f:A\rightarrow B$ to set $H$,
denoted by $f\restr_H$, is defined as follows:
\begin{equation*}
f\restr_H\,\de\{\, \langle a,b\rangle : a\in H\cap \dom f \land
f(a)=b\,\}.
\end{equation*}
The \textbf{$f$-image of set $H$}, is defined as follows:
\begin{equation*}
f[H]=\{\,b:\exists a\in H\cap \dom f \land f(a)=b\,\}.
\end{equation*}

\begin{lemma}\label{lem-ll}
Assume \ax{AxEField}, \ax{AxEv^-}, \ax{AxCDiff}, and \ax{AxPh^-}. Let
$m,k\in\Ob$ and $\vx\in\wl_m(k)\cap\wl_m(m)$. Then  $[d_{\vx}\w_{mk}]$ is a linear
bijection taking lightlike vectors to lightlike vectors.
\end{lemma}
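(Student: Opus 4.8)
The plan is to extract linearity and bijectivity from what is already proved, and then concentrate on the lightlike-preservation, which is the only real content. Linearity of $[d_{\vx}\w_{mk}]$ is immediate from the definition of $[d_{\vx}f]$ (it is $A(\,\cdot\,+\vx)-A(\vx)$ for the affine approximation $A$ of $\w_{mk}$ at $\vx$, which exists by Lemma~\ref{lem-df0}), and bijectivity is exactly Lemma~\ref{lem-df}. For the rest, since $[d_{\vx}\w_{mk}]$ is linear and lightlikeness is preserved by nonzero scaling (and $0$ is lightlike), it is enough to show that $[d_{\vx}\w_{mk}](\vv)$ is lightlike whenever $\vv=\langle\bw,1\rangle$ with $\bw\in\Q^{d-1}$ and $|\bw|=1$: indeed every nonzero lightlike vector has nonzero time component and is a scalar multiple of a vector of this form.

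So fix such a $\bw$. First I would apply the second clause of $\ax{AxPh^-}$: from $\vx\in\wl_m(m)$ (i.e.\ $\W(m,m,\vx)$) and $|\bw|=1$ we obtain a photon $p$ with $\W(m,p,\vx)$, $\vx\in\dom\bv_m(p)$ and $\bv_m(p,\vx)=\bw$. Unfolding $\vx\in\dom\bv_m(p)$: the worldline $\wl_m(p)$ is, on an open time-interval around $x_t$, the image of a curve $\gamma$ with $\gamma(x_t)=\vx$, differentiable at $x_t$ and $\gamma'(x_t)=\langle\bv_m(p,\vx),1\rangle=\vv$. Differentiability at $x_t$ yields continuity at $x_t$, and since $\vx\in\dom\w_{mk}$ (Lemma~\ref{lem-df}) with $\dom\w_{mk}$ open, $\gamma(t)\in\dom\w_{mk}$ for $t$ near $x_t$; there $\gamma(t)\in\wl_m(p)$, so $p\in\ev_m(\gamma(t))=\ev_k\big(\gamma(t)\comp\w_{mk}\big)$, i.e.\ $\gamma_k\de\gamma\comp\w_{mk}$ takes values in $\wl_k(p)$. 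By the chain rule (Theorem~\ref{thm-cr}), $\gamma_k$ is differentiable at $x_t$ with $\gamma_k'(x_t)=[d_{\vx}\w_{mk}](\vv)$, and $\gamma_k(x_t)=\vy$ where $\vy\de\w_{mk}(\vx)$.

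Next I would move to $k$'s side. From $\ev_m(\vx)=\ev_k(\vy)$ together with $\vx\in\wl_m(k)$ and $\W(m,p,\vx)$ we get $\W(k,k,\vy)$ and $\W(k,p,\vy)$, so the first clause of $\ax{AxPh^-}$ applied to $k$ gives $\vy\in\dom\bv_k(p)$ and $|\bv_k(p,\vy)|=1$. Hence $\wl_k(p)$ is, near $\vy$, the image of a curve $\delta$ with $\delta(y_t)=\vy$, differentiable at $y_t$ and $\delta'(y_t)=\langle\bv_k(p,\vy),1\rangle$, which is lightlike. Since $\gamma_k$ has values in $\wl_k(p)$ near $\vy$ and $\wl_k(p)$ is there a function of time, $\gamma_k$ factors near $x_t$ as $\gamma_k=\rho\comp\delta$, where $\rho(t)$ is the time-component of $\gamma_k(t)$; $\rho$ is the composition of $\gamma_k$ with the linear projection onto the last coordinate, hence differentiable at $x_t$, with $\rho(x_t)=y_t$. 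Applying the chain rule once more, $[d_{\vx}\w_{mk}](\vv)=\gamma_k'(x_t)=\rho'(x_t)\cdot\delta'(y_t)=\rho'(x_t)\cdot\langle\bv_k(p,\vy),1\rangle$, a scalar multiple of a lightlike vector, hence lightlike.

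I expect the main obstacle to be making this reparametrization airtight: justifying that $\gamma_k$ genuinely factors through the time-parametrization $\delta$ of $\wl_k(p)$ on a neighbourhood of $x_t$ and that the reparametrizing function $\rho$ is differentiable at $x_t$ --- and, hand in hand with this, the bookkeeping forced by the fact that $\ax{AxPh^-}$ only asserts photon worldlines to be differentiable \emph{at} the meeting point, so every composition argument must control domains using only continuity at that point (which differentiability at a point supplies, provably from $\ax{AxEField}$). Everything else is routine given Lemma~\ref{lem-df0}, Lemma~\ref{lem-df}, Theorem~\ref{thm-cr}, and the consequences of $\ax{AxEv^-}$ recorded above.
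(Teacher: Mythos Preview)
Your proposal is correct and follows essentially the same route as the paper: reduce to vectors $\langle\bw,1\rangle$ with $|\bw|=1$, use the second clause of \ax{AxPh^-} to realize $\bw$ as the velocity of a photon $p$ at $\vx$, push $\wl_m(p)$ through $\w_{mk}$, and use the first clause of \ax{AxPh^-} on the $k$-side to see that the image direction is lightlike. The paper phrases the last step more tersely as ``the tangent line of $\wl_m(p)$ is mapped into the tangent line of $\wl_k(p)$ by $[d_{\vx}\w_{mk}]$,'' whereas your reparametrization $\gamma_k=\rho\comp\delta$ together with the chain rule makes this precise; the self-identified obstacle about domains and differentiability only at the meeting point is handled exactly as you describe.
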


\begin{proof} 
By Lemma~\ref{lem-df}, $[d_{\vx}\w_{mk}]$ is a linear bijection.

\begin{figure}
\begin{center}
\begin{tikzpicture}[scale=0.8]
\tikzset{>=stealth}
\tikzstyle{cimke}=[fill opacity=0.9, inner sep=0.5, fill=white]
\def\r{0.07}

\begin{scope}[shift={(-4,0)}]
\draw[red,thick,dashed] (1,1) arc (0:180:1 and 0.15);
\draw[red,thick,dashed] (1,-1) arc (0:180:1 and 0.15);
\draw[ultra thick, blue] (0,-3) -- (0,3) node[below right,black] {$m$};
\draw[ultra thick, green, shorten >=-18, shorten <=-18] (-1,2) .. controls (-0.5,0) and (0.5,0)..  (1,-2)node[right,black] {$k$};
\draw[thin, dashed] (0,3)  arc (90:270:3 and 3);
\draw[thin, dashed] (0,3) arc (90:-90:4 and 3);
\draw (0,1.5)--(1.5,1.5) -- (0,0) (1.5,1.5) --(1.5,0)--(0,0) ;
\draw[<-,thick] (1.5,0) node[cimke, below right]{$\mathbf{v}=\mathsf{wl}_m(p)'(x_t)$}--(0,0);
\draw (0,1.25)--(0.8,1.25);
\draw (0,1)--(0.7,1);
\draw (0,0.75)--(0.5,.75);
\draw (0,0.5)--(0.4,.5);
\draw (0,0.25)-- (0.2,.25);
\draw[red, ultra thick, shorten >=-20, shorten <=-18] (-2,-1)  .. controls (-1.2,-0.9) and (-.3,-.3)  .. (0,0) .. controls (.3,.3) and (0.9,1.2) ..(1.1,2)node[right,black,xshift=-2.5]{$\mathsf{wl}_m(p)$};
\draw[red,thick] (-1,-1)--(1,1);
\draw[red,thick] (1,-1)--(-1,1);
\draw[red,thick] (1,1) arc (0:-180:1 and 0.15);
\draw[red,thick] (1,-1) arc (0:-180:1 and 0.15);
\fill (0,0) node[left] {$\bar x$} circle (\r);
\fill (0,1.5) node[left] {$1$} circle (\r);
\fill (0.9,1.5) circle (\r);
\fill (1.5,1.5) circle (\r);
\end{scope}

\begin{scope}[shift={(5,0)}]
\draw[red,thick,dashed] (1,1) arc (0:180:1 and 0.15);
\draw[red,thick,dashed] (1,-1) arc (0:180:1 and 0.15);
\draw[ultra thick, green] (0,-3) -- (0,3)  node[below right,black] {$k$};
\draw[ultra thick, blue, shorten >=-16, shorten <=-20] (1,2) .. controls (-0.5,0) and (0.5,0)..  (-1,-2)node[right,black] {$m$};
\draw[red,thick] (-1,-1)--(1,1);
\draw[red,thick] (1,-1)--(-1,1);
\draw[red,thick] (1,1) arc (0:-180:1 and 0.15);
\draw[red,thick] (1,-1) arc (0:-180:1 and 0.15);
\draw[thin, dashed] (0,3)  arc (90:270:2.7 and 3);
\draw[thin, dashed] (0,3) arc (90:-90:3.5 and 3);
\draw[red, ultra thick, shorten >=-35, shorten <=-10] (-1.5,-2)  .. controls (-1,-0.9) and (-.3,-.3)  .. (0,0) .. controls (.3,.3) and (1,0.9) ..(1.5,1.1)node[right,yshift=-5,black,xshift=-2.5]{$\mathsf{wl}_k(p)$};
\fill (0,0) node[left,xshift=-1]{$\bar y$} circle (\r);
\draw (0,1.5)--(1.5,1.5) -- (0,0) (1.5,1.5) --(1.5,0)--(0,0) ;
\draw[<-,thick] (1.5,0) node[cimke, below right]{$\mathsf{wl}_k(p)'(y_t)$}--(0,0);
\fill (0,1.5) node[left] {$1$} circle (\r);
\fill (1.5,1.5) circle (\r);
\end{scope}

\coordinate (a) at (1.25,1.5);
\node[above] at (a) {$\mathsf{w}_{mk}$};
\draw[->] (-1,0.75) .. controls (a).. (3,0.75);
\coordinate (b) at (1.25,-1.5);
\node[below] at (b) {$\mathsf{w}_{km}$};
\draw[<-] (-1,-0.75) .. controls (b).. (3,-0.75);)

\end{tikzpicture}
\caption{\label{fig-phtr} Illustration for the proof of Lemma~\ref{lem-ll}}
\end{center}
\end{figure}

Now we are going to show that $[d_{\vx}\w_{mk}]$ takes lightlike vectors
to lightlike ones. To do so, let $\bv\in\Q^{d-1}$ for which $|\bv|=1$.
Since $\vx\in \wl_m(m)$, there is a photon $p$ in event $\ev_m(\vx)$
such that $\bv=\wl_m(p)'(x_t)$ by \ax{AxPh^-}, see
Figure~\ref{fig-phtr}.  Let $\vy$ be the $\w_{mk}$ image of $\vx$. By
\ax{AxPh^-}, $\wl_k(p)$ is a function defined in an open neighborhood
of $y_t$.  Since $\dom \w_{mk}$ and $\ran \w_{mk}$ are open, and
$\w_{mk}$ is continuous, there is an open set
$U\subseteq Q^d$ such that $\vx\in U$ and
\begin{equation*}\ran
\w_{mk}[\wl_m(p)\cap U]\subseteq \wl_k(p).
\end{equation*}
Therefore, the tangent line of $\wl_m(p)$ is mapped into the tangent
line of $\wl_k(p)$ by $[d_{\vx}\w_{mk}]$.  Thus $[d_{\vx}\w_{mk}](\langle
\bv,1\rangle)$ is parallel to $\langle \wl_k(p)'(y_t),1\rangle$, which
is a lightlike vector since $|\wl_k(p)'(y_t)|=1$ by \ax{AxPh^-}.
Therefore, $[d_{\vx}\w_{mk}](\langle \bv,1\rangle)$ is a lightlike
vector. Since for any lightlike vector $\vv\in\Q^d$ there is a
 $\bv\in\Q^{d-1}$ and $c\in\Q$ such that
$\vv=c\cdot\langle\bv,1\rangle$, we have that $[d_{\vx}\w_{mk}]$ is a
linear transformation taking lightlike vectors to lightlike vectors.
\end{proof}

We say that a linear bijection $A$ has the
\textbf{sym-time property} if 
\begin{equation*}
A(\ve_d)_t=A^{-1}(\ve_d)_t.
\end{equation*}

\begin{lemma}\label{lem-sym} 
Assume axioms \ax{AxEField}, \ax{AxSelf^-}, \ax{AxEv^-}, and
\ax{AxCDiff}. Then \ax{AxSymt^-} implies that $[d_{\vx} \w_{mk}]$ has the
sym-time property for all observers $m$ and $k$ and coordinate point
$\vx$ for which $m,k\in\ev_m(\vx)$.
\end{lemma}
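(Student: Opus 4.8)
The plan is to reduce the claim to the single scalar equation asserted by \ax{AxSymt^-} by computing both clock‑derivatives $\cl_{mk}'(x_t)$ and $\cl_{km}'(y_t)$ in terms of $A\de[d_{\vx}\w_{mk}]$ acting on the time basis vector $\ve_d$. First I would fix the data: since $m,k\in\ev_m(\vx)$ we have $\vx\in\wl_m(m)\cap\wl_m(k)$, so by \ax{AxEv^-} there is a $\vy$ with $\ev_m(\vx)=\ev_k(\vy)\ne\emptyset$, hence $\w_{mk}(\vx)=\vy$ and $\vy\in\wl_k(k)\cap\wl_k(m)$; by \ax{AxSelf^-} both points lie on the time‑axes, i.e.\ $\vx=x_t\ve_d$ and $\vy=y_t\ve_d$. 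By Lemma~\ref{lem-df} the map $A$ is invertible with $A^{-1}=[d_{\vy}\w_{km}]$, and, as noted right after the definition of $\cl$, the functions $\cl_{mk},\cl_{km}$ (which \ax{AxSymt^-} asserts to be functions) are differentiable at $x_t$ and $y_t$ respectively.

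The key step is to parametrize the worldline $\wl_m(k)$ near $\vx$ by $\sigma(s)\de\w_{km}(s\ve_d)$. This makes sense for $s$ near $y_t$ because $\dom\w_{km}$ is open (by \ax{AxCDiff}) and contains $\vy=y_t\ve_d$, and $\wl_k(k)$ is an open interval of the time‑axis (by \ax{AxSelf^-} and \ax{AxCDiff}) containing $\vy$; so for such $s$ we have $s\ve_d\in\wl_k(k)\cap\dom\w_{km}$, whence $k\in\ev_k(s\ve_d)=\ev_m(\sigma(s))$ and $\w_{mk}(\sigma(s))=s\ve_d$, so that $\sigma(s)\in\wl_m(k)$ and $\langle\sigma(s)_t,s\rangle\in\cl_{mk}$, i.e.\ $\cl_{mk}(\sigma(s)_t)=s$. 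Now $\sigma$ is differentiable as the composite of the affine map $s\mapsto s\ve_d$ with $\w_{km}$, with $\sigma'(y_t)=[d_{\vy}\w_{km}](\ve_d)=A^{-1}(\ve_d)$; differentiating the identity $\cl_{mk}(\sigma(s)_t)=s$ at $s=y_t$ by the chain rule (Theorem~\ref{thm-cr}) gives $\cl_{mk}'(x_t)\cdot A^{-1}(\ve_d)_t=1$, so in particular $A^{-1}(\ve_d)_t\ne0$ and $\cl_{mk}'(x_t)=1/A^{-1}(\ve_d)_t$.

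Running the same construction with the roles of $m$ and $k$ swapped — parametrizing $\wl_k(m)$ near $\vy$ by $\rho(s)\de\w_{mk}(s\ve_d)$ for $s$ near $x_t$, using that $\wl_m(m)$ is an open interval of the time‑axis and that $\dom\w_{mk}$ is open, and noting $\rho'(x_t)=[d_{\vx}\w_{mk}](\ve_d)=A(\ve_d)$ — gives likewise $\cl_{km}'(y_t)=1/A(\ve_d)_t$. Finally \ax{AxSymt^-} yields $\cl_{mk}'(x_t)=\cl_{km}'(y_t)$, i.e.\ $1/A^{-1}(\ve_d)_t=1/A(\ve_d)_t$, hence $A(\ve_d)_t=A^{-1}(\ve_d)_t$, which is exactly the sym‑time property. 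I expect the only real obstacle to be the careful bookkeeping in the key step: checking that $\w_{km}$ (resp.\ $\w_{mk}$) is genuinely defined on the relevant segment of the time‑axis, that its image there is precisely a piece of $\wl_m(k)$ (resp.\ $\wl_k(m)$) through $\vx$ (resp.\ $\vy$), and that this piece is what the graph of $\cl_{mk}$ (resp.\ $\cl_{km}$) looks like near $x_t$ (resp.\ $y_t$) — this is where the openness of the domains of the worldview transformations, the \ax{AxSelf^-}‑description of an observer's own worldline, and the event‑based definition of $\w$ all enter together.
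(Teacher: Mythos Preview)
Your argument is correct and is essentially the same as the paper's: your parametrization $\sigma(s)=\w_{km}(s\ve_d)$ is exactly the composite $\iota\comp\w_{km}$ the paper uses (with $\iota(s)=s\ve_d$ and $\pi_t(\vx)=x_t$), and your identity $\cl_{mk}(\sigma(s)_t)=s$ is the paper's observation that $\cl_{mk}^{-1}=\iota\comp\w_{km}\comp\pi_t$, after which both proofs differentiate via the chain rule and the inverse-derivative formula to obtain $\cl_{mk}'(x_t)=1/A^{-1}(\ve_d)_t$ and $\cl_{km}'(y_t)=1/A(\ve_d)_t$ and then invoke \ax{AxSymt^-}. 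Your bookkeeping about openness of domains and the \ax{AxSelf^-} description of the self-worldline is a bit more explicit than the paper's, but there is no substantive difference in strategy.
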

\begin{proof}

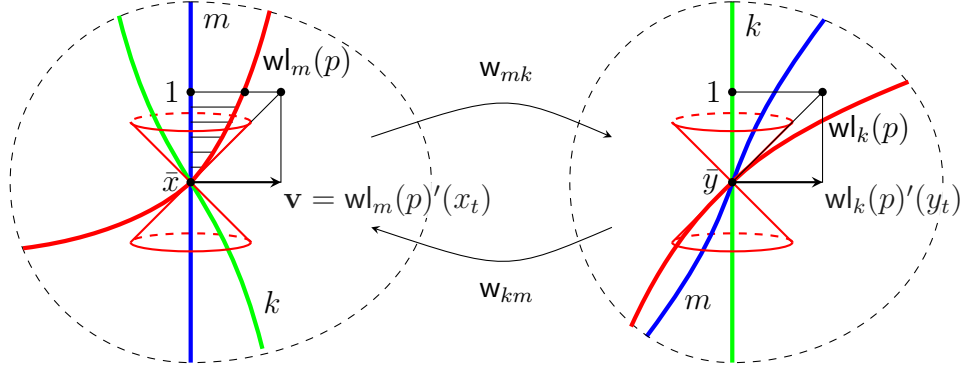
\begin{figure}
\begin{center}
\begin{tikzpicture}[scale=0.8]
\tikzset{>=stealth}
\def\r{0.1}

\begin{scope}[shift={(-4,0)}]
\draw[ultra thick, blue] (0,-3) -- (0,3) node[below right,black] {$m$};
\draw[ultra thick, green, shorten >=-18, shorten <=-20] (-1,2) .. controls (-0.5,0) and (0.5,0)..  (1,-2)node[right,black] {$k$};
\draw[thin, dashed] (0,3)  arc (90:270:4 and 3);
\draw[thin, dashed] (0,3) arc (90:-90:3 and 3);
\draw (-1,2) node[below left] {$\mathsf{w}_{km}(\iota(t))$}--(0,2);
\draw (-0.3,0.5) --(0,0.5);
\draw (-0.6,1) -- (0,1);
\draw (-0.8,1.5) --(0,1.5);
\fill (0,0) node[right] {$x$} circle (\r);
\fill (-1,2) circle (\r);
\fill (0,2) circle (\r);
\end{scope}

\begin{scope}[shift={(4,0)}]
\draw[ultra thick, green] (0,-3) -- (0,3)  node[below right,black] {$k$};
\draw[ultra thick, blue, shorten >=-13, shorten <=-13] (1,2) .. controls (-0.5,0) and (0.5,0)..  (-1,-2)node[left,black] {$m$};
\draw[thin, dashed] (0,0) ellipse (2.25 and 3);
\fill (0,0) circle (\r);
 \fill (0,1.5) node[left] {$\iota(t)$} circle (\r);
\end{scope}

\coordinate (a) at (0,1.5);
\node[above] at (a) {$\mathsf{w}_{mk}$};
\draw[->] (-2.5,0.5) .. controls (a).. (2.5,0.5);
\coordinate (b) at (0,-1.5);
\node[below] at (b) {$\mathsf{w}_{km}$};
\draw[<-] (-2.5,-0.5) .. controls (b).. (2.5,-0.5);)

\end{tikzpicture}
\caption{\label{fig-cltr}Illustration for the proof of Lemma~\ref{lem-sym} }
\end{center}
\end{figure}

Let $m$ and $k$ be observers and let $\vx$ be a coordinate point such
that $m,k\in\ev_m(\vx)$.  By \ax{AxCDiff}, $\w_{mk}$ is a
differentiable function. By Lemma~\ref{lem-df0}, $\w_{mk}$ is
differentiable at $\vx$, i.e., $\vx\in\dom\w_{mk}$.  Let $\vy$ be
$\w_{mk}(\vx)$.  Let $\iota:\Q\rightarrow\Q^d$ be the linear map
$\iota(t)\de\langle 0,\ldots,0,t\rangle$ for all $t\in\Q$ and let the
projection $\pi_t:\Q^d\rightarrow\Q$ be defined as $\pi_t(\vx)\de x_t$
for all $\vx\in\Q^d$.  By axiom \ax{AxSelf^-}, $\vy=\iota(y_t)$ since
$\W(k,k,\vy)$. Let us note that
\begin{equation*}
\cl_{mk}(t)=(\iota\comp \w_{km}\comp \pi_t)^{-1}(t)
\end{equation*}
for all $t\in\dom\cl_{mk}$ by definitions and axiom \ax{AxSelf^-}, see
Figure~\ref{fig-cltr}. So $\cl_{mk}(x_t)=y_t$ since $\iota(y_t)=\vy$,
$\w_{km}(\vy)=\vx$ and $\pi_t(\vx)=x_t$.  Thus, by
Corollary~\ref{cor-invdf},
\begin{equation*}\cl_{mk}'(x_t)=\frac{1}{(\iota\comp \w_{km}\comp \pi_t)'(y_t)}
\end{equation*}
since $\cl_{mk}$ is differentiable at $x_t$ by \ax{AxSymt^{-}} and its
inverse $\iota\comp\w_{km}\comp\pi_t$ is a differentiable map by
\ax{AxCDiff} and the fact that $\pi_t$ and $\iota$ are linear maps.
Since $\iota$ and $\pi_t$ are linear maps, $[d_{\vz}\pi_t]=\pi_t$ for
all $\vz\in\Q^d$ and $\iota'(t)=[d_{t}\iota](1)=\iota(1)=\ve_d$ for all
$t\in\Q$. Thus, by chain rule (Theorem~\ref{thm-cr}), we have
\begin{equation*} 
(\iota \comp\w_{km}\comp \pi_t)'(y_t)=\big( [d_{\iota(y_t)}\w_{km}]\comp
  \pi_t\big)\big(\iota'(y_t)\big)=\big([d_{\vy}\w_{km}](\ve_d)\big)_t.
\end{equation*} Therefore, 
\begin{equation*}
\cl_{mk}'(x_t)=\frac{1}{\big([d_{\vy}\w_{km}](\ve_d)\big)_t}.
\end{equation*}
Similarly,
\begin{equation*}
\cl_{km}'(y_t)=\frac{1}{\big([d_{\vx}\w_{mk}](\ve_d)\big)_t}.
\end{equation*}
 By \ax{AxSymt^-}, we
have $\cl_{mk}'(x_t)=\cl_{km}'(y_t)$. 
Consequently,  
\begin{equation*}
\big([d_{\vy}\w_{km}](\ve_d)\big)_t=\big([d_{\vx}\w_{mk}](\ve_d)\big)_t.
\end{equation*}
By Lemma~\ref{lem-df}, $[d_{\vx}\w_{mk}]^{-1}=[d_{\vy}\w_{km}]$.  Thus
\begin{equation}
\big([d_{\vx}\w_{mk}](\ve_d)\big)_t=\big([d_{\vx}\w_{mk}]^{-1}(\ve_d)\big)_t.
\end{equation}
Therefore,  $[d_{\vx}\w_{mk}]$ has the sym-time property; and this is what we
wanted to prove.
\end{proof}

We call a linear bijection of $\Q^d$ \textbf{space isometry} iff it is an
isometry on the space part of $\Q^d$ fixing $\ve_d$, i.e.,
$M(\ve_d)=\ve_d$, $|M(\vx)|=|\vx|$ and $M(\vx)_t=0$ for all
  $\vx\in\Q^d$ for which $x_t=0$.

\begin{lemma}\label{lem-a} 
Assume \ax{AxEField}. Any linear bijection $M$ taking lightlike vectors to
lightlike ones fixing $\ve_d$ is a space isometry.
\end{lemma}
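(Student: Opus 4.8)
The plan is to pin down the images $M(\ve_1),\dots,M(\ve_{d-1})$ of the spatial standard basis vectors, show that they form an orthonormal family spanning the spatial hyperplane $H\de\{\,\vx\in\Q^d:x_t=0\,\}$, and then read off the two defining clauses of a space isometry by linearity, using the hypothesis $M(\ve_d)=\ve_d$ for the remaining direction. Since only \ax{AxEField} is available (so $\Q$ need not be complete or real closed), I would avoid the usual argument that two quadratic forms with the same zero cone are proportional; instead I would push a few explicitly chosen lightlike vectors through $M$ and extract the needed equalities directly.

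First I would handle the basis vectors. For $i<d$, both $\ve_d+\ve_i$ and $\ve_d-\ve_i$ are lightlike, since each has time component $\pm1$ and spatial part the unit vector $\pm\ve_i$ of Euclidean length $1$. Hence $\ve_d+M(\ve_i)$ and $\ve_d-M(\ve_i)$ are lightlike too, by linearity of $M$ and $M(\ve_d)=\ve_d$. Writing $w\de M(\ve_i)$, the two lightlikeness conditions read $(1+w_t)^2=|w_s|^2$ and $(1-w_t)^2=|w_s|^2$; subtracting gives $4w_t=0$, so $w_t=0$, and then $|w_s|^2=1$. Thus each $M(\ve_i)$ lies in $H$ and has Euclidean length $1$.

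Next, orthogonality. Fix $i\neq j$ below $d$. The vector $3\ve_i+4\ve_j+5\ve_d$ is lightlike because $3^2+4^2=5^2$, so $3M(\ve_i)+4M(\ve_j)+5\ve_d$ is lightlike; by the previous step its time component is $5$ and its spatial part is $3M(\ve_i)+4M(\ve_j)$, so lightlikeness gives $25=|3M(\ve_i)+4M(\ve_j)|^2$. Expanding the right-hand side and using $|M(\ve_i)|=|M(\ve_j)|=1$ forces the Euclidean inner product of $M(\ve_i)$ and $M(\ve_j)$ to vanish. (If one prefers, the lightlike vector $\ve_i+\ve_j+\sqrt2\,\ve_d$, available by \ax{AxEField}, does the same job.) So $M(\ve_1),\dots,M(\ve_{d-1})$ is an orthonormal family in $H$, in particular linearly independent; as $\dim H=d-1$ it is a basis of $H$, and since every $M(\ve_i)$ lies in $H$ we get $M[H]=H$.

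Finally I would conclude. For $\vx$ with $x_t=0$ we have $\vx=\sum_{i<d}x_i\ve_i$, hence $M(\vx)=\sum_{i<d}x_iM(\ve_i)\in H$, so $M(\vx)_t=0$; and expanding $|M(\vx)|^2$ and invoking orthonormality, all cross terms cancel and $|M(\vx)|^2=\sum_{i<d}x_i^2=|\vx|^2$, whence $|M(\vx)|=|\vx|$ because a nonnegative element of an ordered field is determined by its square. Together with the assumed $M(\ve_d)=\ve_d$, this is precisely the definition of a space isometry. I do not expect a real obstacle; the only points needing a little care are keeping the whole argument free of any real-closedness assumption — which is why Steps~1--2 use concrete lightlike vectors rather than a proportionality theorem — and the last passage from equal squares to equal lengths, which rests only on \ax{AxEField}.
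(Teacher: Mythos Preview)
Your proof is correct and follows essentially the same route as the paper's: both push $\ve_d\pm\ve_i$ through $M$ to get $M(\ve_i)\in H$ with $|M(\ve_i)|=1$, then use the $3$--$4$--$5$ lightlike vector (you take $3\ve_i+4\ve_j+5\ve_d$, the paper the scalar multiple $\ve_d+\tfrac{3}{5}\ve_i+\tfrac{4}{5}\ve_j$) to force pairwise orthogonality. Your write-up is a bit more explicit about the algebra and the passage from orthonormality of the images of basis vectors to the full space-isometry conclusion, but the argument is the same.
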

\begin{proof}
To prove this, let us consider the $M$-images of the other standard
basis vectors $\ve_i$, $1\le i\le d-1$.  First $M(\ve_i)$ has to be
orthogonal (in the Euclidean sense) to $\ve_d$, this is so since both
$\ve_d+M(\ve_i)$ and $\ve_d-M(\ve_i)$ has to be lightlike, see
Figure~\ref{fig-eme}.

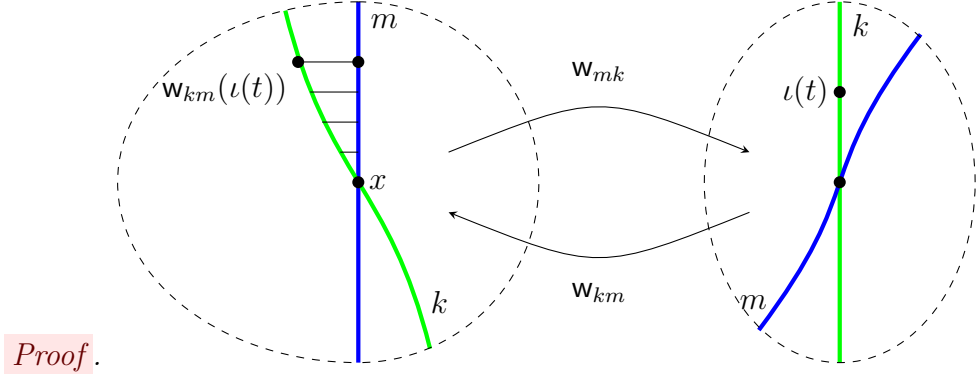
\begin{figure}
\begin{center}
\begin{tikzpicture}[scale=0.75]
\tikzstyle{vektor} = [->,very thick, >=stealth]
\tikzstyle{cimke}=[fill opacity=0.9, inner sep=0.5, fill=white]
\pgfmathsetmacro{\md}{2}
\pgfmathsetmacro{\mi}{3}  
\pgfmathsetmacro{\size}{2.5}
\pgfmathsetmacro{\d}{1}

\begin{scope}[shift={(-4.5,0)}]
\coordinate (o) -- (0,0);
\draw (-\mi,-\d)-- (-\mi,\md-\d)-- (\mi,\md+\d)--(\mi,\d);
\draw[red] (-\size,-\size)--(\size,\size) (-\size,\size)--(\size,-\size);
\draw[vektor] (o)-- (\mi,\d)  node[below,xshift=5]{$M(\bar e_i)$};
\draw[vektor] (o)-- (\mi,\md+\d) node[above,cimke]{$\bar e_d+M(\bar e_i)$};
\draw[vektor] (o)-- (-\mi,\md-\d) node[above,cimke]{$\bar e_d-M(\bar e_i)$};
\draw[vektor] (o) -- (-\mi,-\d);
\draw[vektor] (o)-- (0,\md) node[above,cimke]{$\bar e_d=M(\bar e_d)$};
\end{scope}
% \draw (-3.75*\size,-1.5*\size) rectangle (3.5*\size,1.75*\size);
\draw (0,-\size*1.5) -- (0,\size*1.75);
 \begin{scope}[shift={(3.5,0)}]
\coordinate (o) -- (0,0);
\draw[red] (-\size,-\size)--(\size,\size) (-\size,\size)--(\size,-\size);
\draw (o) circle (2);
\draw[vektor] (o)-- (\mi,0) node[below]{$M(\bar e_i)$};
\draw[vektor] (o)-- (\mi,\md) node[above, xshift=10,cimke]{$\bar e_d+M(\bar e_i)$};
\draw[vektor] (o)-- (0,\md) node[above,cimke]{$\bar e_d=M(\bar e_d)$};
\draw (\mi,0)-- (\mi,\md)-- (0,\md) (\md,\md)--(\md,0);
\end{scope}
\end{tikzpicture}
\caption{\label{fig-eme} Illustration for the proof of Lemma~\ref{lem-a}}
\end{center}
\end{figure}

The Euclidean length of $M(\ve_i)$ also has to be $1$ since
$\ve_d+M(\ve_i)$ is lightlike, see
Figure~\ref{fig-eme}.  

Finally, $M(\ve_i)$ is orthogonal to $M(\ve_j)$ if $1\le i<j<d$. If
$d=2$, there is nothing to be proved. If $d\ge3$, the $M$-image of the
lightlike vector $\ve_d+\frac{3}{5}\ve_i+\frac{4}{5}\ve_j$ has to be
lightlike.  By the linearity of $M$, this $M$-image is
$\ve_d+\frac{3}{5}M(\ve_i)+\frac{4}{5}M(\ve_j)$, which is lightlike
iff $M(\ve_i)$ is orthogonal to $M(\ve_i)$. 

These facts imply that $M$ is an isometry on the space part of $\Q^d$
fixing $\ve_d$. Hence $M$ is a space isometry.
\end{proof}

\begin{lemma}\label{lem-llin}
Let $d\ge3$ and assume \ax{AxEField}. Any linear bijection $A$ from $\Q^d$
to $\Q^d$ taking lightlike vectors to lightlike vectors is a Lorentz
transformation composed by a dilation.\footnote{Lemma~\ref{lem-llin}
  can also be proved by using the Alexandrov-Zeeman theorem
  generalized over ordered fields, see \cite{VKK} or \cite{Pambuccian}.}
\end{lemma}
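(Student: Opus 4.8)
The plan is to leverage Lemma~\ref{lem-a}, which already handles the crucial geometric step: a linear bijection taking lightlike vectors to lightlike vectors and \emph{fixing} $\ve_d$ is a space isometry. So first I would normalize the given map $A$ to reduce to that case. Consider $A(\ve_d)$; since $\ve_d$ is lightlike (its time and space components have equal Euclidean length — indeed the space component is $\vo$), $A(\ve_d)$ is lightlike too, so $A(\ve_d)=\langle \vu, c\rangle$ with $|\vu|=|c|$, and $c\neq 0$ because $A$ is a bijection (otherwise $A(\ve_d)$ would be a nonzero vector with zero time component that is lightlike, forcing it to be $\bar 0$). After composing $A$ with a suitable dilation by $1/c$ (or $1/|A(\ve_d)|$), we may assume $A(\ve_d)$ is lightlike with time component $1$, i.e.\ $A(\ve_d)=\langle \vu,1\rangle$ with $|\vu|=1$.

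Next I would compose with a Lorentz boost $B$ that maps $\langle\vu,1\rangle$ back to $\ve_d$. Such a boost exists over any Euclidean field: in the plane spanned by $\ve_d$ and $\langle\vu,0\rangle$ one writes down the standard hyperbolic-rotation matrix, whose entries involve $\sqrt{1-|\vu|^2}$ in the denominator — but $|\vu|=1$ makes this degenerate, so instead I would first compose with a space rotation sending $\vu$ to (a multiple of) $\ve_1$, reducing to a $2$-dimensional problem, and then observe that a lightlike vector $\langle 1,0,\dots,0,1\rangle$ can be sent to $\ve_d$ by a single null rotation / boost-type Lorentz map (again writing the explicit matrix; all needed square roots are of nonnegative elements, hence exist by \ax{AxEField}). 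Since Lorentz transformations and space rotations take lightlike vectors to lightlike vectors, the composite $M \de A \comp (\text{dilation}) \comp (\text{Lorentz map})$ is still a linear bijection preserving lightlikeness, and now $M(\ve_d)=\ve_d$.

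By Lemma~\ref{lem-a}, $M$ is a space isometry: it fixes $\ve_d$, preserves Euclidean length on the space part, and keeps the hyperplane $\{x_t=0\}$ invariant. Any such $M$ is itself a Lorentz transformation — it preserves $\mu$ because $\mu(\vv,\vw)=v_t w_t - \langle \vv_s,\vw_s\rangle$ and $M$ preserves both the time component and the Euclidean inner product on the space part. Therefore $A$ equals a Lorentz transformation composed with the inverses of a dilation and a Lorentz map; since the composition of Lorentz transformations is Lorentz and dilations commute appropriately with everything up to another dilation, $A$ is a Lorentz transformation composed with a dilation, as claimed.

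The main obstacle I anticipate is the degenerate-boost issue: because $A(\ve_d)$ is \emph{lightlike} rather than timelike after normalization, the naive "boost to rest frame" does not directly apply, and one must instead route through a rotation plus a null rotation (or, alternatively, first dilate to make $A(\ve_d)$ timelike — but it cannot be made timelike, since lightlikeness is dilation-invariant). Getting this normalization right, and checking that every square root invoked is of a nonnegative field element so that \ax{AxEField} suffices, is the delicate part; the rest is bookkeeping with Lemma~\ref{lem-a}.
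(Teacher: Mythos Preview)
Your proposal contains a basic error that derails the whole argument: $\ve_d$ is \emph{timelike}, not lightlike. Its space component is $\vo$, so $|(\ve_d)_s|=0<1=|(\ve_d)_t|$; equivalently $\mu(\ve_d,\ve_d)=1>0$. Once this is corrected, the hypothesis on $A$ no longer tells you anything directly about $A(\ve_d)$, and your ``degenerate boost'' difficulty evaporates---but so does your normalization step. In fact your attempted repair is impossible: you propose sending the lightlike vector $\langle 1,0,\dots,0,1\rangle$ to $\ve_d$ by a Lorentz map, yet any Lorentz transformation preserves $\mu$ and hence cannot carry a lightlike vector to a timelike one.

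The missing idea, and the place where the assumption $d\ge 3$ is actually used, is to show first that $A$ also takes \emph{timelike} vectors to timelike vectors. The paper does this via a characterization of timelikeness purely in terms of lightlike vectors: $\vt\neq\vo$ is timelike iff for every lightlike $\vp$ there is a lightlike $\vq$ with $\vp+\vq=\lambda\vt$ for some nonzero $\lambda$. (This fails for $d=2$, where spacelike vectors also have the property.) Since $A$ is a linear bijection preserving lightlikeness, it preserves this description, hence $A(\ve_d)$ is timelike. Now one can legitimately carry $\ve_d$ to $A(\ve_d)$ by a boost $B$, a space isometry $S$, and a dilation $D$ (all entries involve only square roots of nonnegative elements, so \ax{AxEField} suffices), set $M\de A\comp(B\comp S\comp D)^{-1}$, observe $M(\ve_d)=\ve_d$, and invoke Lemma~\ref{lem-a} exactly as you intended. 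Your use of Lemma~\ref{lem-a} and the closing bookkeeping are fine; it is only the normalization of $A(\ve_d)$ that needs to be redone along these lines.
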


\begin{proof}
Let us first note that $A$ takes timelike vectors to timelike
ones. This is so since timelike vectors can be defined by the
following property: $\vt$ is a timelike vector iff $\vt\neq\vo$ and
for any lightlike vector $\vp$ there is another lightlike vector $\vq$
such that $\vp+\vq=\lambda\cdot\vt$ for some $0\neq
\lambda\in\Q$. This fact can be proved from \ax{AxEField} since $d\ge3$.

Hence $A(\ve_d)$ is timelike. Timelike vector $\ve_d$ can be
transformed to timelike vector $A(\ve_d)$ by a Lorentz boost
(hyperbolic rotation) $B$, space isometry $S$, and dilation $D$. Let
$M$ be  $A\comp (B\comp S\comp
D)^{-1}$.  $M(\ve_d)=\ve_d$ and $M$ takes lightlike vectors to
lightlike ones (by the properties of its decomposition). By Lemma
\ref{lem-a}, we have that $M$ is a space isometry.

Thus $A=M\comp B\comp S\comp D$. This completes the proof since $M\comp B\comp S$ is a Lorentz
transformation and $D$ is a dilation.
\end{proof}

\begin{proof}[\colorbox{proofbgcolor}{\textcolor{proofcolor}{Proof of Theorem~\ref{thm-lordf}}}]
Let $m$ and $k$ be two observers and let $\vx\in
\wl_m(k)\cap\wl_m(m)$.  By Lemma~\ref{lem-ll}, we have that $\w_{mk}$
is differentiable at $\vx$ and $[d_{\vx}\w_{mk}]$ is a linear
bijection taking lightlike vectors to lightlike vectors.  Hence, by
Lemma~\ref{lem-llin}, $[d_{\vx}\w_{mk}]$ has to be a Lorentz
transformation $L$ composed by a dilation $D$, i.e.,
$[d_{\vx}\w_{mk}]=L\comp D$.

By Lemma~\ref{lem-sym}, \ax{AxSymt^-} implies that $[d_{\vx}\w_{mk}]$
has the sym-time property. So dilation $D$ has to be the identity map
because of the followings. 

The sym-time property is true for Lorentz transformation $L$, i.e.,
$L(\ve_d)_t=L^{-1}(\ve_d)_t$. Therefore, if $D$ is a nontrivial
dilation, $L\comp D$ does not have the sym-time property. For example,
if $D$ is an enlargement in the decomposition $[d_{\vx}\w_{km}]=L\comp
D$,
\begin{equation*}
 (L \comp D)(\ve_d)_t>L(\ve_d)_t=L^{-1}(\ve_d)_t>(D^{-1}\comp L^{-1})(\ve_d)_t.
\end{equation*}
An
analogous calculation works in the case when $D$ is a shrinking.
Therefore, $D$ has to be the identity map. So
$[d_{\vx}\w_{mk}]$ is a Lorentz transformation as stated.
\end{proof}

\begin{prop}\label{prop-welldf}
Let $d\ge 3$. \ax{GenRel} implies that $\g_m$ defined by \eqref{eq-g}
is a function for all $m\in\Ob$, i.e., $a$ does not depend on the
choice of observer $k\in\ev_m(\vx)$.
\end{prop}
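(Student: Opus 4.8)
The plan is to show, assuming \ax{GenRel} with $d\ge 3$, that the relation defined by \eqref{eq-g} is functional in its last argument: if both $\g(m,\vx,\vv,\vw,a)$ and $\g(m,\vx,\vv,\vw,a')$ hold, then $a=a'$. Unwinding \eqref{eq-g}, this amounts to fixing observers $k,k'$ with $\W(m,k,\vx)$ and $\W(m,k',\vx)$ and proving
\begin{equation*}
\mu\big([d_{\vx}\w_{mk}](\vv),[d_{\vx}\w_{mk}](\vw)\big)=\mu\big([d_{\vx}\w_{mk'}](\vv),[d_{\vx}\w_{mk'}](\vw)\big).
\end{equation*}
The idea is to route $\w_{mk'}$ through $\w_{mk}$ followed by the worldview transformation $\w_{kk'}$ between the two auxiliary observers: the derivative of $\w_{kk'}$ at the relevant point will be a Lorentz transformation by Theorem~\ref{thm-lordf}, and Lorentz transformations preserve $\mu$, so the distortion of $\mu$ is unchanged.

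First I would use \ax{AxEv^-} to pick $\vy$ with $\ev_m(\vx)=\ev_k(\vy)$; this event is nonempty since $k\in\ev_m(\vx)$, so $\w_{mk}(\vx)=\vy$. As $k,k'\in\ev_m(\vx)=\ev_k(\vy)$, we get $\W(k,k,\vy)$ and $\W(k,k',\vy)$, that is $\vy\in\wl_k(k)\cap\wl_k(k')$. By Lemma~\ref{lem-df0}, $\w_{mk}$ and $\w_{mk'}$ are functions differentiable at $\vx$; and since $d\ge3$, Theorem~\ref{thm-lordf} applied to the pair $k,k'$ at $\vy$ shows that $\w_{kk'}$ is differentiable at $\vy$ and that $L\de[d_{\vy}\w_{kk'}]$ is a Lorentz transformation.

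Next I would observe that $\w_{mk}\comp\w_{kk'}\subseteq\w_{mk'}$ straight from the definition of the worldview transformations: a coordinate point joined by $\w_{mk}$ to a point joined by $\w_{kk'}$ to a third produces a chain $\ev_m=\ev_k=\ev_{k'}$ of equal nonempty events. Because $\dom\w_{kk'}$ is open (by \ax{AxCDiff}) and contains $\vy=\w_{mk}(\vx)$ while $\w_{mk}$ is continuous at $\vx$, the set $\{\vz\in\dom\w_{mk}:\w_{mk}(\vz)\in\dom\w_{kk'}\}$ is an open neighborhood of $\vx$ on which $\w_{mk}\comp\w_{kk'}$ is defined; since $\w_{mk'}$ is a function containing $\w_{mk}\comp\w_{kk'}$, the two coincide on this neighborhood and hence have equal derivatives at $\vx$. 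The chain rule (Theorem~\ref{thm-cr}) then gives $[d_{\vx}\w_{mk'}]=[d_{\vx}\w_{mk}]\comp[d_{\vy}\w_{kk'}]=[d_{\vx}\w_{mk}]\comp L$, i.e. $[d_{\vx}\w_{mk'}](\vu)=L\big([d_{\vx}\w_{mk}](\vu)\big)$ for every $\vu\in\Q^d$. As $L$ preserves $\mu$, the displayed equality follows, and therefore $a=a'$.

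I expect the only delicate step to be the middle one, namely verifying that $\w_{mk'}$ really agrees with $\w_{mk}\comp\w_{kk'}$ on a whole neighborhood of $\vx$ (needed both to apply the chain rule and to know it computes the derivative of $\w_{mk'}$ itself), which rests on the openness of the domains of worldview transformations furnished by \ax{AxCDiff} and on continuity of $\w_{mk}$ at $\vx$. Invoking Theorem~\ref{thm-lordf} (this is where $d\ge3$ enters) and the $\mu$-invariance of Lorentz transformations is then routine.
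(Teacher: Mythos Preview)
Your proof is correct and follows essentially the same route as the paper: factor $\w_{mk'}$ through $\w_{mk}$ and $\w_{kk'}$, apply Theorem~\ref{thm-lordf} to conclude $[d_{\vy}\w_{kk'}]$ is a Lorentz transformation, and use the chain rule together with $\mu$-invariance. The only difference is that you spell out more carefully why $\w_{mk}\comp\w_{kk'}$ agrees with $\w_{mk'}$ on a neighborhood of $\vx$ so that the chain rule genuinely computes $[d_{\vx}\w_{mk'}]$, a point the paper simply asserts in one line.
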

\begin{proof}
Let $k$ and $h$ be observers such that $k,h\in\ev_m(\vx)$. Then, by
Lemma~\ref{lem-df0}, $\w_{mk}$ and $\w_{mh}$ are functions
differentiable at $\vx$. Let $\vy$ be $\w_{mk}(\vx)$. Since
$k,h\in\ev_m(\vx)$, we have that $k,h\in\ev_k(\vy)$, i.e.,
$\vy\in\wl_k(h)\cap\wl_k(k)$. Therefore, by Theorem~\ref{thm-lordf},
$[d_{\vy}\w_{kh}]$ is a Lorentz transformation, i.e., it preserves the
Minkowski metric. Hence
\begin{equation*}
\mu\big([d_{\vx}\w_{mh}](\vv),[d_{\vx}\w_{mh}](\vw)\big)=\mu\big([d_{\vx}\w_{mk}](\vv),[d_{\vx}\w_{mk}](\vw)\big)
\end{equation*}
for all $\vv,\vw\in\Q^d$ since $[d_{\vx}\w_{mh}]=[d_{\vx}\w_{mk}]\comp [d_{\vy}\w_{kh}]$.
\end{proof}

%\end{document}
\begin{proof}[Proof of Theorem~\ref{thm-c0}]
Since the ordered field reduct $\langle \Q,+,\cdot,\le \rangle$ of
$\G$ and $M(\G)$ is the same, \ax{AxEField} is valid in $M(\G)$.

Axiom \ax{AxCDiff} or \ax{AxC^n} for any $n\ge 1$ contains that
$\w_{mk}$ is a function. So axiom \ax{AxFn\psi} is valid in $M(\G)$.
By Proposition~\ref{prop-welldf}, $\g_m$ is a function.  Therefore,
\ax{AxFn\g} is also valid in $M(\G)$.

Let $m$, $k$ and $h$ be observers. We have that $\w_{mm}=\Id_{\dom
  \w_{mm}}$, $\w_{mk}= \w_{km}^{-1}$ and $\w_{mk}\comp\w_{kh}\subseteq
\w_{mh}$ by the definition of worldview transformation and the fact
that they are functions (by axiom \ax{AxCDiff} or \ax{AxC^n}).
Therefore, \ax{AxCom\psi} is valid in $M(\G)$.

\ax{AxCDiff} is $Tr\big(\ax{AxCDiff\psi}\land\ax{AxFn\psi}\big)$;
axiom Axiom \ax{AxC^n} is
$Tr\big(\ax{AxC^n\psi}\land\ax{AxFn\psi}\big)$.  Hence
\ax{AxCDiff\psi} (\ax{AxC^n\psi}) is valid in $M(\G)$ iff \ax{AxCDiff}
(\ax{AxC^n}) is valid in $\G$.

Axiom \ax{AxFull\g} is valid in $M(\G)$ because of the followings: by
axiom \ax{AxThExp^-_{00}} there is an observer $k$ such that
$\W(m,k,\vx)$ for all $m\in\Ob$ and $\vx\in\dom \w_{mm}$; therefore,
$\g_m$ is defined on $\dom \w_{mm}$ for all  observer
$m$.

Axiom \ax{AxC^0\g} follows from \ax{AxC^0\g_m} by
Proposition~\ref{prop-welldf}.  Axiom \ax{AxC^n\g_m} is
$Tr\big(\ax{AxC^n\g}\big)$ for all $n\ge 1$.  Therefore,
\ax{AxC^n\g} is valid in $M(\G)$ iff \ax{AxC^n\g_m} is valid in $\G$.

To prove that axiom \ax{AxLor\g} is valid in $M(\G)$, let $m$ be an
 observer and let $\vx\in \dom\g_m$. By the definition of $\g$,
$\g_m(\vx)(\vv,\vw)=\mu\big([d_{\vx}\w_{mk}](\vv),[d_{\vx}\w_{mk}](\vw)\big)$
for some  observer $k$ for which $\vx\in \wl_m(k)$.  Since
$\vx\in \wl_m(k)$, linear map $[d_{\vx}\w_{mk}]$ exists by
Lemma~\ref{lem-df0}. So we can choose $[d_{\vx}\w_{mk}]$ to be $L$ in
\ax{AxLor\g}. Hence \ax{AxLor\g} is valid in $M(\G)$.

To prove that axiom \ax{AxCom\g} is valid in $M(\G)$, let $m$ and $h$
be  observers and let $\vx\in\dom\g_m\cap\dom\w_{mh}$.  We have
to show that
$\g_m(\vx)(\vv,\vw)=\g_h\big(\w_{mh}(\vx)\big)\big([d_{\vx}\w_{mh}](\vv),[d_{\vx}\w_{mh}](\vw)\big)$
for all $\vv,\vw\in\Q^d$.  Since $\vx\in \dom \g_m$, there is an 
observer $k$ in the event $\ev_m(\vx)$. By the definition of $\g$,
\begin{equation*}
\g_m(\vx)(\vv,\vw)=\mu\big([d_{\vx}\w_{mk}](\vv),[d_{\vx}\w_{mk}](\vw)\big)
\end{equation*}
and
\begin{multline*}
\g_h\big(\w_{mh}(\vx)\big)\big([d_{\vx}\w_{mh}](\vv),[d_{\vx}\w_{mh}](\vw)\big)=\\\mu\big([d_{\w_{mh}(\vx)}\w_{hk}]([d_{\vx}\w_{mh}](\vv)),[d_{\w_{mh}(\vx)}\w_{hk}]([d_{\vx}\w_{mh}](\vw))\big).
\end{multline*}
So it is enough to show that
$[d_{\vx}\w_{mk}]=[d_{\vx}\w_{mh}]\comp [d_{\w_{mh}(\vx)}\w_{hk}]$, which is
true by chain rule, see Theorem~\ref{thm-cr}.

Finally, we show that \ax{Cont_\LM} is valid in $M(\G)$.  By
formula induction, it is easy to prove that $M(\G)\models \varphi$ iff
$\G\models Tr(\varphi)$. For example, $M(\G)\models
\g(i,\vx,\vv,\vw,a)$ holds iff 
\begin{equation*}
\G \models \exists j \enskip \Ob(j) \land \W(i,j,\vx)\land
\mu\big([d_{\vx}\w_{ij}](\vv),[d_{\vx}\w_{ij}](\vw)\big)=a
\end{equation*}
which holds iff $\G\models Tr\big(\g(i,\vx,\vv,\vw,a)\big)$ by the
definitions of $M(\G)$ and $Tr$; and $M(\G)\models \exists i\enskip
\varphi$ iff there is an $a\in\Q\cup\I$  such
that $M(\G)\models \varphi[a]$ iff there is an $a\in\Q\cup\B$ such
that $\G\models Tr(\varphi[a])$ iff $\G\models \exists i \enskip
Tr(\varphi)$.

Let $\varphi(x,\vy)$ be a formula in the language of \ax{LorMan} such
that $x$ is a free variable of $\varphi$ of sort $\Q$ and all the
other free variables of $\varphi$ are amongst $\vy$. Quantity $a$ is
in the set defined by $\varphi$ and parameter $\vp$ iff $M(\G)\models
\varphi[a,\vp]$. By the above, this is equivalent to that $\G\models
Tr(\varphi)[a,\vp]$.  This means that $a$ is in the set defined by
$Tr(\varphi)$ using $\vp$ as parameters.

By the construction of model $M(\G)$ we have that the structure
$\langle \Q,+,\cdot,\le\rangle$ of quantities in $\G$ and $M(\G)$ is
the same.  Consequently, the supremum of the set defined by $\varphi$
by parameters $\vp$ and the supremum of the set defined by
$Tr(\varphi)$ by parameters $\vp$ is the same.  This means that
$\ax{AxSup_{Tr(\varphi)}}\in \ax{CONT_\LG}$ implies
$\ax{AxSup_\varphi}\in\ax{CONT_\LM}$.  Hence, \ax{CONT_\LM} is true in
$\M(\G)$ since \ax{CONT_\LG} is true in $\G$.
\end{proof}

\begin{lemma}\label{lem-deftlc}
Assume \ax{AxEField}. Let $\vx,\vy,\vv,\vw\in\Q^d$ such that
$\vy-\vx$, $\vv$ and $\vw$ are definable timelike vectors for which
$y_t>x_t$, $v_t>0$ and $w_t>0$. Then there is a continuously
differentiable definable timelike curve $\gamma$ such that
$\gamma(0)=\vx$, $\gamma(1)=\vy$, $\gamma'(0)=\alpha\vv$ and
$\gamma'(1)=\beta\vw$ for some positive $\alpha$ and $\beta$.
Moreover, there is a positive $\delta$ such that $|\gamma_s'(t)|\le
(1-\delta)|\gamma'_t(t)|$ for all $t\in [0,1]$.
\end{lemma}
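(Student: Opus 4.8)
The plan is to take for $\gamma$ a cubic polynomial written in Bernstein--B\'ezier form, choosing the control points so that the derivative curve $\gamma'$ is manifestly a convex combination of three future‑pointing timelike vectors; convexity of the timelike cone then forces $\gamma$ to be uniformly timelike. Two facts, both provable from \ax{AxEField}, will be used: the open future timelike cone $\{\vu\in\Q^d:\ u_t>|\vu_s|\}$ is convex, and for every $0<\delta<1$ the cone $C_\delta\de\{\vu\in\Q^d:\ |\vu_s|\le(1-\delta)\,u_t\}$ is closed under convex combinations — this holds because $\vu\mapsto(1-\delta)u_t-|\vu_s|$ is concave (linear minus a norm) and $C_\delta$ is a superlevel set of it, and because every $\vu\in C_\delta$ automatically has $u_t\ge0$.

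Put $\varepsilon\de (y_t-x_t)-|(\vy-\vx)_s|$, which is positive since $\vy-\vx$ is future timelike, and choose $\alpha\de\beta\de\varepsilon/\big(2(|\vv|+|\vw|+1)\big)>0$. Set the control points $P_0\de\vx$, $P_1\de\vx+\tfrac13\alpha\vv$, $P_2\de\vy-\tfrac13\beta\vw$, $P_3\de\vy$ and define
\begin{equation*}
\gamma(t)\de\sum_{i=0}^{3}\binom{3}{i}\,t^{i}(1-t)^{3-i}\,P_i\qquad(t\in[0,1]).
\end{equation*}
Since its coefficients arise from $\vx,\vy,\vv,\vw$ by field operations and square roots, $\gamma$ is definable in the language of ordered fields and is a polynomial, hence continuously differentiable. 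Using $\gamma'(t)=3\sum_{i=0}^{2}\binom{2}{i}t^{i}(1-t)^{2-i}(P_{i+1}-P_i)$ one reads off $\gamma(0)=P_0=\vx$, $\gamma(1)=P_3=\vy$, $\gamma'(0)=3(P_1-P_0)=\alpha\vv$ and $\gamma'(1)=3(P_3-P_2)=\beta\vw$, as required.

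For the timelikeness, observe that $\gamma'(t)=(1-t)^2(\alpha\vv)+2t(1-t)\big(3(P_2-P_1)\big)+t^2(\beta\vw)$, a combination of the three ``velocity control vectors'' $\alpha\vv$, $3(P_2-P_1)=3(\vy-\vx)-(\alpha\vv+\beta\vw)$, $\beta\vw$ with coefficients that are nonnegative and sum to $1$ throughout $[0,1]$. The first and third are future timelike because $\vv,\vw$ are and $\alpha,\beta>0$; for the middle one, $|\alpha\vv+\beta\vw|\le\alpha|\vv|+\beta|\vw|<\varepsilon/2$ by the choice of $\alpha,\beta$, and a short estimate then gives $(P_2-P_1)_t-|(P_2-P_1)_s|>\tfrac23\varepsilon>0$, so $3(P_2-P_1)$ is future timelike too. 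Let $\delta$ be the minimum of the three positive numbers $1-|\vv_s|/v_t$, $1-|\vw_s|/w_t$, $1-|(P_2-P_1)_s|/(P_2-P_1)_t$; then all three control vectors lie in $C_\delta$, hence so does every convex combination $\gamma'(t)$, $t\in[0,1]$. Since $\gamma'_t(t)>0$ (a convex combination of positive numbers), membership in $C_\delta$ is precisely $|\gamma'_s(t)|\le(1-\delta)|\gamma'_t(t)|$, and in particular $\gamma$ is timelike. Only ordered‑field operations and square roots were used, so \ax{AxEField} suffices.

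The only genuinely non‑routine point is pinning down an explicit definable choice of $\alpha$ and $\beta$ and verifying quantitatively that $3(P_2-P_1)$ stays future timelike \emph{with a positive margin}, because this must be done over an arbitrary Euclidean field — no density of the rationals, no compactness; everything else is the standard bookkeeping that a B\'ezier curve, and its derivative, lie in the convex hulls of the respective control points.
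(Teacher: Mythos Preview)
Your argument is correct. Both your proof and the paper's use a cubic polynomial interpolant, but the organization is genuinely different. The paper first normalizes by a definable translation, Lorentz transformation and dilation so that $\vx=\vo$ and $\vy$ lies on the time axis, then writes down an explicit cubic whose time component is exactly $t$; the timelike estimate then reduces to the elementary scalar inequality $|3t^2-4t+1|+|3t^2-2t|\le 1$ on $[0,1]$, applied to the (normalized) space component. You instead stay in the original coordinates, write the interpolant in Bernstein--B\'ezier form, and use the convex--hull property of the derivative: $\gamma'(t)$ is a convex combination of the three ``velocity control vectors'', and you arrange (via an explicit definable choice of $\alpha,\beta$) that all three lie in a common cone $C_\delta$, which is convex. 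What the paper's route buys is a slightly shorter endgame once the normalization is in place (the time component is identically $1$, so only a single scalar inequality must be checked); what yours buys is that no preliminary reduction is needed, the choice of $\alpha,\beta$ is explicit over any Euclidean field, and the timelikeness with a uniform margin follows from a structural fact (convexity of $C_\delta$) rather than a bespoke polynomial estimate. The two arguments are, at bottom, the same cubic Hermite interpolation viewed from two angles, but your packaging makes the role of convexity transparent.
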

\begin{proof}
We can assume without loosing generality that $\vx= \vo$ and
$\vy=\langle 1,0,\ldots, 0\rangle$ because by a composition of a
definable translation, a definable Lorentz transformation, and a
definable scaling we can map $\vx$ to $\vo$ and $\vy$ to $\langle
1,0,\ldots, 0\rangle$ without changing the required properties of
$\gamma$.

Let 
\begin{equation*}
\gamma(t)\de \left\langle t,
\frac{\vv_s}{v_t}(t^3-2t^2+t)+\frac{\vw_s}{w_t}(t^3-t^2)\right\rangle,
\end{equation*}
for all $t\in\Q$. 
Then 
\begin{equation*}
\gamma'(t)= \left\langle t,
\frac{\vv_s}{v_t}(3t^2-4t+1)+\frac{\vw_s}{w_t}(3t^2-2t)\right\rangle.
\end{equation*}
It is straightforward to verify that $\gamma(0)=\vo$,
$\gamma(1)=\langle 1,0,\ldots, 0\rangle$, $\gamma'_s(0)=\vv_s/v_t$,
$\gamma_s'(1)=\vw_s/w_t$, $\gamma'_t(t)=1$ for all $t\in[0,1]$. Hence
$\gamma'(0)=\alpha \vv$ and $\gamma'(1)=\beta\vw$ for $\alpha=v_t$ and
$\beta=w_t$, which are positive quantities. 

It is also clear that $\gamma$ is continuously differentiable.  Let u
now show that $\gamma'$ is a timelike vector for all $t\in[0,1]$.

\begin{multline*}
|\gamma'_s(t)|=\left|\frac{\vv_s}{v_t}(3t^2-4t+1)+\frac{\vw_s}{w_t}(3t^2-2t)\right|\\
\le \max\left(\left|{\vv_s}/{v_t}\right|,\left|{\vw_s}/{w_t}\right|\right)(|3t^2-4t+1| +|3t^2-2t|)\\\le\max\left(\left|{\vv_s}/{v_t}\right|,\left|{\vw_s}/{w_t}\right|\right)
\end{multline*}
since $|3t^2-4t+1|+|3t^2-4t+1|<1$ if $t\in [0,1]$. Consequently, there
is a $\delta>0$ such that $|\gamma_s(t)|<1-\delta$ because
$|\vv_s|<|v_t|$, $|\vw_s|<|w_t|$.  Therefore,
\begin{equation*}
|\gamma'_s(t)|<(1-\delta)|\gamma'_t(t)| \text{ for all }
t\in[0,1]
\end{equation*}
since 
$|\gamma'_t(t)|=1$ for all $t\in[0,1]$.
\end{proof}

\section{Concluding Remarks}
We have introduced several FOL axiom systems \ax{GenRel^n} for general
relativity and showed that they are complete with respect to
Lorentzian manifolds having the corresponding smoothness properties,
see Theorem~\ref{thm-c0}. From \cite{logroad}, we recalled our FOL
definition of timelike geodesic formulated in the language of
\ax{GenRel}, see \eqref{eq-tlgeod}, and justified this definition by
 showing that our FOL definition coincides with
the usual notion of geodesic over the field $\Reals$ of real
numbers, see Theorem~\ref{thm-geod}. Since all the other key notions of GR, such as curvature or
Riemannian tensor field, are definable from timelike geodesics, we can
also define all these notions in \ax{GenRel}.

A future task is building our axiomatic hierarchy of relativity
theories further, i.e., finding natural axiom systems similar to
\ax{GenRel} which are complete with respect to certain spacetime
classes, such as black holes, cosmological spacetimes, etc. For
example, see \cite{NSz} for an axiom capturing Malament--Hogarth
spacetimes in the language of \ax{GenRel}.

Another task is taking alternative axiom systems for general
relativity (possibly in a completely different language, such as the
language of causality, see e.g., \cite{KP67}) and logically compare
these axiom systems to \ax{GenRel}, e.g., interpreting one in the
another or proving their definitional equivalence using the techniques
of \cite{AN-CompTh} and \cite{Mphd}.  This task is a part of the so
called conceptual analysis of the relativity theory and it helps to
understand the roles and connections of the possible basic concepts of
the theory.

A third task is taking some (preferably surprising) predictions of GR
and finding a minimal set of (natural) axioms implying this
prediction. This task is a kind of answering why-type questions of
relativity theory, see e.g., \cite{wqp}.  For this kind of reverse
analysis in SR, see \cite[\S 3.4]{pezsgo}, \cite{logroad} on
impossibility of faster than light motion, \cite{twp}, \cite{clp},
\cite{Szphd} on the twin paradox.

Doing research in any of the three tasks above will lead us to a
deeper (more structured, axiomatic) understanding of the theory of
GR.

%\section{Acknowledgments}

\bibliography{LogRelBib}

\begin{thebibliography}{10}

\bibitem{pezsgo}
H.~Andr{\'e}ka, J.~X. Madar{\'a}sz, {and}~I. N{\'e}meti, with contributions
  from:~A. Andai, G.~S{\'a}gi, I.~Sain, and Cs. T{\H o}ke.
\newblock {\it On the logical structure of relativity theories}.
\newblock Research report, Alfr{\'e}d R{\'e}nyi Institute of Mathematics,
  Hungar. Acad. Sci., Budapest, 2002.
\newblock http://www.math-inst.hu/pub/algebraic-logic/Contents.html.

\bibitem{logst}
H.~Andr{\'e}ka, J.~X. Madar{\'a}sz, and I.~N{\'e}meti.
\newblock Logic of space-time and relativity theory.
\newblock In M.~Aiello, I.~Pratt-Hartmann, and J.~van Benthem, editors, {\em
  Handbook of spatial logics}, pages 607--711. Springer-Verlag, Dordrecht,
  2007.

\bibitem{logroad}
H.~Andr{\'e}ka, J.~X. Madar{\'a}sz, I.~N{\'e}meti, and G.~Sz{\'e}kely.
\newblock A logic road from special relativity to general relativity.
\newblock {\em Synthese}, 186(3):633--649, 2012.

\bibitem{AN-CompTh}
H.~{Andr{\'e}ka} and I.~{N{\'e}meti}.
\newblock {Comparing theories: the dynamics of changing vocabulary. A
  case-study in relativity theory}.
\newblock {\em arXiv:1307.1885}, 2013.

\bibitem{ax}
J.~Ax.
\newblock The elementary foundations of spacetime.
\newblock {\em Found. Phys.}, 8(7-8):507--546, 1978.

\bibitem{Basri}
S.~A. Basri.
\newblock {\em A deductive theory of space and time}.
\newblock Studies in logic and the foundations of mathematics. North-Holland
  Pub. Co., 1966.

\bibitem{GlobLorGeom}
J.~K. Beem, P.~E. Ehrlich, and K.~L. Easley.
\newblock {\em Global Lorentzian Geometry}.
\newblock Chapman and Hall/CRC Pure and Applied Mathematics Series. Marcel
  Dekker Incorporated, 1996.

\bibitem{benda}
T.~Benda.
\newblock A formal construction of the spacetime manifold.
\newblock {\em J. Phil. Logic}, 37(5):441--478, 2008.

\bibitem{CK}
C.~C. Chang and H.~J. Keisler.
\newblock {\em Model theory}.
\newblock North-Holland Publishing Co., Amsterdam, 1990.

\bibitem{dinverno}
R.~d'Inverno.
\newblock {\em Introducing {E}instein's relativity}.
\newblock Oxford University Press, New York, 1992.

\bibitem{FriFOM1}
H.~Friedman.
\newblock On foundational thinking 1.
\newblock Posting in FOM (Foundations of Mathematics) Archives, www.cs.nyu.edu,
  January 20, 2004.

\bibitem{FriFOM2}
H.~Friedman.
\newblock On foundations of special relativistic kinematics 1.
\newblock Posting No 206 in FOM (Foundations of Mathematics) Archives,
  www.cs.nyu.edu, January 21, 2004.

\bibitem{goldblatt}
R.~Goldblatt.
\newblock {\em Orthogonality and spacetime geometry}.
\newblock Springer-Verlag, New York, 1987.

\bibitem{Hawking-Ellis}
S.~W. Hawking and G.~F.~R. Ellis.
\newblock {\em The large scale structure of space-time}.
\newblock Cambridge University Press, London, 1973.
\newblock Cambridge Monographs on Mathematical Physics, No. 1.

\bibitem{Hodges}
W.~Hodges.
\newblock {\em Model theory}.
\newblock Cambridge University Press, Cambridge, 1993.

\bibitem{KP67}
E.~H. Kronheimer and R.~Penrose.
\newblock On the structure of causal spaces.
\newblock {\em Proc. Cambridge Philos. Soc.}, 63:481--501, 1967.

\bibitem{Latzer}
R.~W. Latzer.
\newblock Nondirected light signals and the structure of time.
\newblock {\em Synthese}, 24(1-2):236--280, 1972.

\bibitem{Mphd}
J.~X. Madar{\'a}sz.
\newblock {\em Logic and Relativity (in the light of definability theory)}.
\newblock PhD thesis, E{\"o}tv{\"o}s Lor{\'a}nd Univ., {B}udapest, 2002.
\newblock http://www.math-inst.hu/pub/algebraic-logic/Contents.html.

\bibitem{twp}
J.~X. Madar{\'a}sz, I.~N{\'e}meti, and G.~Sz{\'e}kely.
\newblock Twin paradox and the logical foundation of relativity theory.
\newblock {\em Found. Phys.}, 36(5):681--714, 2006.

\bibitem{mundy-oaomstg}
B.~Mundy.
\newblock Optical axiomatization of {M}inkowski space-time geometry.
\newblock {\em Philos. Sci.}, 53(1):1--30, 1986.

\bibitem{mundy-tpcomg}
B.~Mundy.
\newblock The physical content of {M}inkowski geometry.
\newblock {\em The British Journal for the Philosophy of Science},
  37(1):25--54, 1986.

\bibitem{NSz}
P.~N\'emeti and G.~Sz\'ekely.
\newblock Existence of faster than light signals implies hypercomputation
  already in special relativity.
\newblock In S.~B. Cooper, A.~Dawar, and B.~L\"owe, editors, {\em How the World
  Computes}, volume 7318 of {\em Lecture Notes in Computer Science}, pages
  528--538. Springer Berlin Heidelberg, 2012.

\bibitem{ONeill}
B.~O'Neill.
\newblock {\em Semi-Riemannian Geometry With Applications to Relativity}.
\newblock Pure and Applied Mathematics. Elsevier Science, 1983.

\bibitem{Pambuccian}
V.~Pambuccian.
\newblock Alexandrov-{Z}eeman type theorems expressed in terms of definability.
\newblock {\em Aequationes Math.}, 74(3):249--261, 2007.

\bibitem{LRR11}
E.~Poisson, A.~Pound, and I.~Vega.
\newblock The motion of point particles in curved spacetime.
\newblock {\em Living Reviews in Relativity}, 14(7), 2011.

\bibitem{Robb}
A.~A. Robb.
\newblock {\em A Theory of Time and Space}.
\newblock Cambridge University Press, Cambridge, 1914.

\bibitem{Robb2}
A.~A. Robb.
\newblock {\em Geometry of Time and Space}.
\newblock Cambridge University Press, Cambridge, 1936.

\bibitem{schutz}
J.~W. Schutz.
\newblock {\em Foundations of special relativity: kinematic axioms for
  {M}inkowski space-time}.
\newblock Springer-Verlag, Berlin, 1973.

\bibitem{schutz-aasfmst}
J.~W. Schutz.
\newblock An axiomatic system for {M}inkowski space-time.
\newblock {\em J. Math. Phys.}, 22(2):293--302, 1981.

\bibitem{Schu}
J.~W. Schutz.
\newblock {\em Independent axioms for {M}inkowski space-time}.
\newblock Longoman, London, 1997.

\bibitem{suppes-sopitposat}
P.~Suppes.
\newblock Some open problems in the philosophy of space and time.
\newblock {\em Synthese}, 24:298--316, 1972.

\bibitem{Szabo}
L.~E. Szab{\'o}.
\newblock Empirical foundation of space and time.
\newblock In M.~Su{\'a}rez, M.~Dorato, and M.~R{\'e}dei, editors, {\em EPSA07:
  Launch of the European Philosophy of Science Association}, pages 251--266.
  Springer, 2010.

\bibitem{mythes}
G.~Sz{\'e}kely.
\newblock A first order logic investigation of the twin paradox and related
  subjects.
\newblock Master's thesis, E{\"o}tv{\"o}s Lor{\'a}nd Univ., {B}udapest, 2004.

\bibitem{Szphd}
G.~Sz{\'e}kely.
\newblock {\em First-Order Logic Investigation of Relativity Theory with an
  Emphasis on Accelerated Observers}.
\newblock PhD thesis, E{\"o}tv{\"o}s Lor{\'a}nd Univ., {B}udapest, 2009.
\newblock http://www.renyi.hu/~turms/phd.pdf.

\bibitem{clp}
G.~Sz{\'e}kely.
\newblock A geometrical characterization of the twin paradox and its variants.
\newblock {\em Studia Logica}, 95:161--182, 2010.

\bibitem{wqp}
G.~Sz{\'e}kely.
\newblock On why-questions in physics.
\newblock In A.~M{\'a}t{\'e}, M.~R{\'e}dei, and F.~Stadler, editors, {\em The
  Vienna Circle in Hungary, Wiener Kreis und Ungarn}, pages 181--189. Springer,
  Wien, 2011.

\bibitem{VKK}
P.~G. Vroegindewey, V.~Kreinovic, and O.~M. Kosheleva.
\newblock An extension of a theorem of {A}. {D}. {A}leksandrov to a class of
  partially ordered fields.
\newblock {\em Indag. Math.}, 41(3):363--376, 1979.

\bibitem{wald}
R.~M. Wald.
\newblock {\em General relativity}.
\newblock University of Chicago Press, Chicago, 1984.

\end{thebibliography}
\bibliographystyle{plain}

\end{document}